\documentclass[11pt]{article}

\usepackage[normalem]{ulem}
\usepackage[justification=justified,singlelinecheck=false]{caption}

\usepackage[OT1]{fontenc}
\usepackage{amsthm,amsmath,amssymb}
\usepackage{natbib}
\usepackage{multirow}
\usepackage[pdftex]{graphicx}
\usepackage{subfigure}
\usepackage{makecell}
\usepackage{booktabs}
\usepackage{array}
\usepackage{fullpage}
\usepackage{url}
\usepackage{algorithm}
\usepackage{algorithmic}
\usepackage{bm}
\usepackage{smile}
\usepackage{color}
\usepackage{color,hyperref}
\definecolor{darkblue}{rgb}{0.0,0.0,0.45}

\usepackage{natbib}
\usepackage{multirow}

\numberwithin{equation}{section}
\numberwithin{theorem}{section}


\newtheorem{condition}{Condition}

\DeclareMathOperator*{\arginf}{arginf}

\begin{document}

\title{\Huge Smooth Projected Density Estimation}

\author{Heather Battey\thanks{Current addresses: Department of Mathematics, Imperial College London, London, SW7 2AZ; e-mail: \texttt{h.battey@imperial.ac.uk}; ORFE, Princeton University, Princeton, NJ 08544; e-mail: \texttt{hbattey@princeton.edu}}\\ \small{\emph{University of Bristol}}  \and Han Liu\thanks{ORFE, Princeton University, Princeton, NJ 08544; e-mail: \texttt{hanliu@princeton.edu}}\\ \small{\emph{Princeton University}}}

\normalsize{ }

\maketitle

\begin{abstract}
We introduce and analyse a new nonparametric estimator of a multi-dimensional density. Our smooth projection estimator (SPE) is defined by a least-squares projection of the sample onto an infinite dimensional mixture class via an undersmoothed nonparametric pilot estimate, which acts as a structural filter to regularise the solution. The undersmoothing is required to optimise the convergence rate of the SPE, which is jointly determined by that of the pilot estimator to the true density in squared $\LL_{2}$ norm and by that of the pilot distribution function to the empirical distribution function in uniform norm. Our procedure was conceived with a view to exploiting well known results in convex analysis and their connection to mixture densities. In the context of our work, this translates to the observation that the infinite dimensional minimisation problem, implicit in the construction of the SPE, possesses a solution of dimension at most $n+1$, where $n$ is the sample size. The SPE thus enjoys practical advantages such as computational efficiency, ease of storage and rapid evaluation at a new data point.
\end{abstract}

\section{Introduction}\label{sectionIntro}

Nonparametric density estimation is a fundamental problem in statistics. Despite its long history, the research field has remained an active one thanks to the broad applicability of density estimators for a range of modern statistical procedures. In spite of the flexibility offered by nonparametric estimators, in many practical applications it is convenient to have, not only an estimate of the density over a set of evaluation points, but also a succinct representation of the density function that is easily stored and rapidly evaluated at a new data point. Density estimates arising from nonparametric mixture models offer both the flexibility of a nonparametric estimator and the succinctness of a parametric one. Their drawback is that they are unable to exploit structure in the data generating process that is either detected or assumed in order to reduce estimation error. Assumed structure may simply be that true density function belongs to a pre-specified smoothness class, as is standard in nonparametric estimation problems. Alternatively, one may seek to exploit shape, topological or graphical structure. For instance, when the dimension of the density to be estimated is large relative to the sample size, the only way to achieve consistency is through the exploitation of assumed sparsity in some suitable domain, such as the conditional independence graph.

In this paper, we propose a flexible procedure, the \emph{smooth projection estimator} (SPE) that yields a succinct parametric representation and that is able to exploit structure. The SPE is constructed with a view to achieving the advantages of both parametric and nonparametric procedures whilst circumventing the negative features associated with each (see Table 1), and is defined as a least squares projection of the sample onto an infinite dimensional class of mixture densities via a nonparametric pilot estimator. The latter acts as a structural filter to regularise the solution. Minimisation of the least squares criterion function over the infinite dimensional mixture class gives rise to a mixture law that is supported on at most $n+1$ points, thus the solution is finite dimensional in finite samples. 

\begin{table}[ht]
{\centering
\caption{Advantages and disadvantages of nonparametric and parametric density estimators.}
\begin{tabular}{l|l|l}
\toprule
                                          & Nonparametric (e.g.~kernel estimator) & Parametric (e.g.~finite mixture model) \\
\midrule
\multirow{3}{*}{Pros}                  & $\bullet$ Flexible                    & $\bullet$ Ease of storage \\  
																						 & $\bullet$ Easy to exploit structural information  & $\bullet$ Rapid evaluation at new data point \\    
																						 &                                 & $\bullet$ Fast rate of convergence \\    
\midrule
\multirow{1}{*}{Cons}               &  $\bullet$ Slow rate of convergence    & $\bullet$ Hard to exploit structural information \\  
																			 &  $\bullet$ Large storage requirements         & $\bullet$ Hard to justify a model \\  
																						 &  $\bullet$  Hard to evaluate at new data point       \\
																																										 &   $\bullet$ Hard to choose tuning parameters   &               \\   
\bottomrule
\end{tabular}
}
\end{table}

\section{Smooth projection estimator}\label{sectionMethodology}

The notation of Section \ref{subSecNotation} will be used in the construction of the SPE and in the theoretical derivations appearing in subsequent sections.

\subsection{Notation}\label{subSecNotation}

Let $\mathcal{F}_{d}$ be the class of all Lebesgue densities on $\mathbb{R}^{d}$, and let $Y_{1},\ldots,Y_{n}$ be a sample of $n$ i.i.d.~copies of $Y$ drawn from distribution $P$ on $\mathbb{R}^{d}$ with Lebesgue density $f_{0}\in\mathcal{F}_{d}$. The corresponding empirical measure is $P_{n}$, defined for any Borel set $A$ as $P_{n}(A)=\frac{1}{n}\sum_{i=1}^{n}\ind\{Y_{i}\in A\}$.

With notation inspired by \citet{SamworthYuan2012}, for an arbitrary probability density or mass function $g$, an arbitrary probability measure $Q$ and an arbitrary class of Lebsegue densities $\mathcal{C}$, all on $\mathbb{R}^{d}$, define the projection operator
\[
\psi^{*}_{\mathcal{C}}(Q,g):=\argmin_{f\in\mathcal{C}}\int_{\mathbb{R}^{d}}(g-f)^{2}dQ. 
\] 
Of particular interest, are $Q\in\{P,P_{n}\}$, and $g\in\{f_{0},f_{n},\widehat{f}^{P}\}$, where $\widehat{f}^{P}$ is a nonparametric pilot estimate based on $Y_{1},\ldots,Y_{n}$ and $f_{n}$ is the collection of $1/n$-weighted point masses at $Y_{1},\ldots,Y_{n}$. Also of interest is $\mathcal{C}\in\{\mathcal{F}_{d}^{\mathcal{G}},\mathcal{F}_{d}^{S},\bar{\mathcal{F}}_{d}^{S}\}$, where 
\begin{equation}\label{mixInfinite}
\mathcal{F}_{d}^{\mathcal{G}}:=\left\{f\in\mathcal{F}_{d}: f(y)=\int_{\Theta}f(y;\theta)dG(\theta): \;\;G\in\mathcal{G} \right\},
\end{equation}
is an infinite dimensional mixture class parameterised by $G\in\mathcal{G}$, the space of probability measures on $\Theta$. For notational simplicity, we assume that the density $f_{\theta}=f(\cdot;\theta)$ is parameterised by a single parameter vector $\theta\in\RR^{d}$. The finite dimensional analogue of \ref{mixInfinite} is
\[
\mathcal{F}_{d}^{S}:=\left\{f\in\mathcal{F}_{d}: f(y)=\sum_{s=1}^{S}\pi_{s} f(y;\theta_{s}): \hspace{1pt} \pi\in\Delta^{S},\hspace{1pt}\theta_{1},\ldots,\theta_{S}\in\Theta\right\}
\]
where $\Delta^{S}$ is the $S$-dimensional unit simplex. Finally, $\bar{\mathcal{F}}_{d}^{S}$ is the special case of $\mathcal{F}_{d}^{S}$ in which the mixture components $\{f_{\theta}: \theta\in\Theta\}$ are taken to be the $d$-dimensional spherical Gaussian densities with mean vector $\theta$. More explicitly
\[
\bar{\mathcal{F}}_{d}^{S}:=\left\{f\in\mathcal{F}_{d}: f(y)=\sum_{s=1}^{S}\pi_{s} \phi(\cdot;\mu_{s},\underline{q}I_{d}): \hspace{1pt} \pi\in\Delta^{S},\hspace{1pt}\mu_{1},\ldots,\mu_{S}\in\mathcal{M}, \underline{q}>0\right\},
\]
where $\mathcal{M}=[-M,M]^{d}$, $\underline{q}>0$ is fixed and $\phi(\cdot;\mu,\Sigma)$ is the Gaussian density function with mean vector $\mu$ and covariance matrix $\Sigma$. Proposition \ref{propContainment} illustrates that there is no loss of generality by choosing $\underline{q}\hspace{1pt}I_{d}$ rather than some other covariance matrix with equal diagonal elements no smaller than $\underline{q}$. For notational compactness, we write $\Xi=(\Delta^{S}\times\Theta^{S})$, where $\Theta^{S}$ is the $S$-element cartesian product $\Theta\times \Theta \times \cdots$. Write $\xi=(\xi_{1},\ldots,\xi_{S},\xi_{S+1}^{T},\ldots,\xi_{2S}^{T})^{T}=(\pi^{T}, \theta_{1}^{T},\ldots, \theta_{S}^{T})^{T}$, so $\xi$ is a vector of $2S$ parameters, the first $S$ of which are valued in $[0,1]$ and the last $S$ of which are valued in $\RR^{d}$, hence $\Xi\subset\RR^{S(1+d)}$. This allows us to write $f_{\xi}=\sum_{s=1}^{S}\pi_{s}f_{\theta_{s}}$. For the purpose of the theoretical results in Section \ref{statProperties}, introduce $\widehat{\mathbb{M}}_{n}(\xi)=\int\bigl(\widehat{f}^{P} - f_{\xi}\bigr)^{2}dP_{n}$ for $f_{\xi}\in\mathcal{F}_{d}^{S}$, and similarly define $\mathbb{M}_{0}(\xi)=\int\bigl(f_{0} - f_{\xi}\bigr)^{2}dP$. The notation introduced above is summarised in Table \ref{tableNotation}, together with some additional notation convenient for the proofs in Appendix \ref{sectionAppendixB}.

$\mathbb{L}_{p}(\mathcal{V}):=\mathbb{L}_{p}(\mathcal{V},\text{Leb})$, denotes the space of Lebesgue $p$-integrable functions on $\mathcal{V}$, where $1\leq p<\infty$ and $\mathbb{L}_{\infty}(\mathcal{V})$ denotes the space of uniformly bounded functions on $\mathcal{V}$. We make use of the following notation for weakly differentiable functions from \citet{Ziemer1989}. $\alpha=(\alpha_{1},\ldots,\alpha_{d})$ is a multi-index of non-negative integers, $|\alpha|=\sum_{j=1}^{d}\alpha_{j}$, and $\alpha!=\alpha_{1}!\alpha_{2}!\cdots \alpha_{d}!$. If $x=(x_{1},\ldots,x_{d})\in\RR^{d}$ we will let $x^{\alpha}=x_{1}^{\alpha_{1}}x_{2}^{\alpha_{2}}\cdots x_{d}^{\alpha_{d}}$. The partial derivative operators are denoted by $D_{i}=\partial/\partial x_{i}$ for $1\leq i\leq d$ and the higher order derivatives are denoted by
\[
D^{\alpha}=D_{1}^{\alpha_{1}}\cdots D_{d}^{\alpha_{d}}=\frac{\partial^{|\alpha|}}{(\partial x_{1})^{\alpha_{1}}\cdots (\partial x_{d})^{\alpha_{d}}}.
\]
$u$, which belongs to the space of locally integrable functions on the open set $\mathcal{V}$, is the $\alpha^{th}$ weak derivative of a function $f$ if $\int_{\mathcal{V}}\varphi(x)u dx=(-1)^{|\alpha|}\int_{\mathcal{V}}f(x)D^{\alpha}\varphi(x)dx$ for all $\varphi\in \mathcal{C}^{\infty}_{0}(\mathcal{V})$, the space of infinitely differentiable functions with compact support on $\mathcal{V}$. We write the $\alpha^{th}$ weak derivative of $f$ as $u=D^{\alpha}f$. 
For $p\geq 1$ and $k$ a non-negative integer, the Sobolev space is defined as in \citet[][page 43]{Ziemer1989} as $\mathfrak{W}_{k,p}(\mathcal{V}):=\mathfrak{W}_{k,p}(\mathcal{V},\text{Leb})=\LL_{p}(\mathcal{V},\text{Leb})\cap \{g:D^{\alpha}g\in\LL_{p}(\mathcal{V},\text{Leb}), |\alpha|\leq k\}$. 

The following norms are used throughout. $\|v\|_{\ell_{p}}:=\bigl(|v_{1}|^{p}+\cdots+|v_{d}|^{p}\bigr)^{1/p}$, $\|v\|_{\ell_{\infty}}=\max\{v_{1},\ldots, v_{d}\}$. $\|\cdot\|_{\mathbb{L}_{p}}:=\|\cdot\|_{\mathbb{L}_{p}(\text{Leb})}$ and $\|\cdot\|_{\mathbb{L}_{p}(P)}$ are defined as $\|f\|_{\mathbb{L}_{p}(\text{Leb})}=(\int |f(x)|^{p}dx)^{1/p}$ and $\|f\|_{\mathbb{L}_{p}(P)}=(\int |f(x)|^{p}P(dx))^{1/p}$ respectively, whilst $\|\cdot\|_{\LL_{\infty}}$ is the supremum norm, $\|f\|_{\LL_{\infty}}=\sup_{x\in\mathbb{R}^{d}}|f(x)|$. $\mathfrak{W}_{k,p}(\mathcal{V})$ is equipped with the Sobolev norm $\|f\|_{k,\LL_{p}(\mathcal{V})}=\sum_{|\alpha|\leq k}\|D^{\alpha}f\|_{\LL_{p}(\mathcal{V})}$.

\begin{table}[ht]
\caption{Notation.}
{\centering
\begin{tabular}{l l}
\hline
$g$ & Arbitrary probability density or mass function on $\mathbb{R}^{d}$. \\
$w_{g,i}$ & Weight assigned to observation $i$ by density $g$. \\
$\psi^{*}_{\mathcal{C}}(Q,g)$ & $\argmin_{f\in\mathcal{C}}\int(g-f)^{2}dQ$, where $Q$ is a probability measure on $\mathbb{R}^{d}$. \\
$\Delta^{S}$ & The $S$-dimensional unit simplex. \\
$\Theta$ & A compact finite-dimensional parameter space. \\
$\Theta^{S}$ & The $S$-element cartesian product $\Theta\times\Theta\times\cdots$. \\
$\Xi$ & ($\Delta^{S}\times \Theta^{S}$). \\
$\mathcal{G}$ & The set of all probability measures on $\Theta$. \\
$\mathcal{M}$ & $[-M,M]^{d}$, $M<\infty$. \\
$\mathcal{F}_{d}$ & The class of all Lebesgue densities on $\mathbb{R}^{d}$. \\
$\mathcal{F}^{\mathcal{G}}_{d}$ & $\left\{f\in\mathcal{F}_{d}: f(y)=\int_{\Theta}f(y;\theta)dG(\theta): \;\;G\in\mathcal{G} \right\}$. \\
$\mathcal{F}_{d}^{S}$ & $\{f_{\xi}:\xi\in\Xi\}=\left\{f\in\mathcal{F}_{d}: f(y)=\sum_{s=1}^{S}\pi_{s} f(y;\theta_{s}): \hspace{1pt} \pi\in\Delta^{S},\hspace{1pt}\theta\in\Theta^{S}\right\}$. \\
$\bar{\mathcal{F}}_{d}^{S}$ & $\left\{f\in\mathcal{F}_{d}: f(y)=\sum_{s=1}^{S}\pi_{s} \phi(y;\mu_{s},\underline{q}I_{d}): \hspace{1pt} \pi\in\Delta^{S},\hspace{1pt}\mu_{s}\in\mathcal{M} \;\forall s\in\{1,\ldots,S\}, \underline{q}>0\right\}$. \\
$\mathcal{F}_{d}^{LC}$ & $\left\{f\in\mathcal{F}_{d}: \quad h=\log(f) \text{ is a concave function}\right\}$. \\
$f_{0}$ & The true density function.\\
$f_{n}$ & The collection of ($1/n$)-weighted point masses at $Y_{1},\ldots,Y_{n}$.\\
$\widehat{f}^{P}$ & A generic nonparametric pilot density estimator. \\
$\widehat{f}^{H}$ & A histogram estimator. \\
$\widehat{f}^{k}_{h}$ & A kernel density estimator with bandwidth $h$. \\
$\widehat{f}^{*}_{n}$ & $\psi^{*}_{\mathcal{F}_{d}^{S}}(P_{n},\widehat{f}^{P})$. \\
$\mathbb{M}_{0}(\xi)$, $\widehat{\mathbb{M}}_{n}(\xi)$ & $\int\bigl(f_{0} - f_{\xi}\bigr)^{2}dP$ and $\int\bigl(\widehat{f}^{P} - f_{\xi}\bigr)^{2}dP_{n}$.\\
$\Phi_{Q}(g-f)$ & $\int (g-f)^{2} dQ$, where $Q$ is a probability measure on $\mathbb{R}^{d}$. \\
\hline
\end{tabular}

}\label{tableNotation}
\end{table}

\subsection{Motivation for the SPE}

As above, let $\widehat{f}^{P}$ be a pilot estimator for $f_{0}$. The choice of $\widehat{f}^{P}$ is made, baring in mind Condition \ref{conditionPilot} below, either on computational grounds, or with a view to exploiting assumed or detected structure on $f_{0}$, such as conditional independence relations amongst the marginals of $f_{0}$, unimodality, or other shape restrictions. The resulting estimate is characterised by the advantages and disadvantages of the left hand column of Table 1; for instance, it may be non-differentiable and awkward to evaluate and store. To remove such undesirable features, the pilot estimate is projected onto a mixture class. One possible choice of structural constraint is forest structure, as in Example \ref{exampleConstraint}.

\begin{example}\label{exampleConstraint}
For random variables $X$ and $X'$ independent conditional on $Z$, the joint density of $Y=(X,X',Z)$ is $f_{Y}=f_{X|Z}f_{X'|Z}f_{Z}$. Analogously, any multidimensional joint density possessing sparsity in its conditional independence graph can be expressed in terms of lower dimensional conditional and marginal densities.
\end{example}

\begin{remark}
\emph{\citet{LaffertyLiuWasserman2012}} provide a review of nonparametric graph estimation. Alternatively, graphical structure is sometimes justified by the scientific problem underlying the statistical one \emph{\citep[e.g.][]{HuberGraphsGenetics}}.
\end{remark}

Example \ref{exampleConstraint} exposes a weakness of classical mixture models, which do not lend themselves naturally to graphical sparsity constraints. By contrast, the na\"ive nature of the histogram construction makes it well suited for imposing graphical structure as well as simple shape structure such as unimodality; further explanation is provided in Example \ref{exampleGraphical}.

\begin{example}\label{exampleGraphical}
The general $d$-dimensional histogram is defined as
\begin{equation}\label{densityHist}
\widehat{f}^{H}(y)=\frac{1}{n \prod _{j=1}^{d} h_{n,j}}\sum_{i=1}^{n}{1{\hskip -2.5 pt}\hbox{\text{\emph{I}}}} \{Y_{i}\in A_{n}(y)\}
\end{equation}
for $A_{n}(y)$ the set containing $y$ in the partition $\mathcal{P}_{n}$ of $\mathbb{R}^{d}$, where $\mathcal{P}_{n}$ is defined through an anchor point \emph{(}taken as the origin without loss of generality\emph{)} and a collection of bin widths $\{h_{n,j}: j=1,\ldots, d\}$.

Let $X$, $X'$, $Z$ and $Y$ be as in Example \ref{exampleConstraint}. Estimating $f_{Z}$ using a histogram and conditioning on realisations of $Z$ falling in bin $B$ allows construction of
\[
\widehat{f}^{H}_{(X,X')|B}((x,x')|z\in B)=\widehat{f}^{H}_{X|B}(x|z\in B)\widehat{f}^{H}_{X'|B}(x'|z\in B)\widehat{P}^{H}_{Z}(B),
\]
where $\widehat{f}^{H}_{X|B}$ is used to denote the histogram estimate of $f_{X|B}$ and $\widehat{P}^{H}_{Z}(B)=\int_{B}\widehat{f}^{H}_{Z}(z)dz$ is the estimated probability of $Z$ falling in bin $B$. Analogous estimates for all bins in $\widehat{f}^{H}_{Z}$ are used to construct a histogram estimate of the joint density.
\end{example}

\subsection{Construction of the SPE}

The (infinite dimensional) smooth projection estimator is defined as
\begin{equation}\label{eqSmoothProjection}
\widehat{f}^{*}_{n}:=\psi^{*}_{\mathcal{F}_{d}^{\mathcal{G}}}(P_{n},\widehat{f}^{P}).
\end{equation}
Given mixture components $\{f_{\theta}: \theta\in\Theta\}$, an $f\in\mathcal{F}_{d}^{\mathcal{G}}$ is completely parameterised by a $G\in \mathcal{G}$, hence 
\begin{equation}\label{eqInfMinProblem}
\min_{f\in\mathcal{F}_{d}^{\mathcal{G}}}\int(\widehat{f}^{P}-f)^{2}dP_{n} = \min_{G\in\mathcal{G}}\int\bigl(\widehat{f}^{P}-\int_{\Theta} f_{\theta}dG(\theta)\bigr)^{2}dP_{n}.
\end{equation}
\begin{lemma}\label{minimumExists}
Let $\Theta$ be a compact finite dimensional parameter space. The minimum of equation \eqref{eqInfMinProblem} exists, and there exists a mixing distribution $\widehat{G}$ whose support contains no more than $n+1$ points such that $\psi^{*}_{\mathcal{F}_{d}^{\mathcal{G}}}(P_{n},\widehat{f}^{P})=\int f_{\theta}d\widehat{G}(\theta)$ achieves this minimum.
\end{lemma}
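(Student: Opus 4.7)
The plan is to reduce the infinite-dimensional optimisation over $\mathcal{G}$ to a finite-dimensional geometric problem in $\mathbb{R}^{n}$, and then invoke Carath\'eodory's theorem. The key observation is that the criterion $\int(\widehat{f}^{P}-\int_{\Theta}f_{\theta}dG(\theta))^{2}dP_{n} = \frac{1}{n}\sum_{i=1}^{n}\bigl(\widehat{f}^{P}(Y_{i})-\int_{\Theta}f_{\theta}(Y_{i})dG(\theta)\bigr)^{2}$ depends on $G$ only through the $n$-vector
\[
T(G) := \Bigl(\int_{\Theta}f_{\theta}(Y_{1})dG(\theta),\ldots,\int_{\Theta}f_{\theta}(Y_{n})dG(\theta)\Bigr) \in \mathbb{R}^{n}.
\]
So I would first rewrite \eqref{eqInfMinProblem} as the minimisation of the continuous convex quadratic $\Psi(t)=\frac{1}{n}\sum_{i=1}^{n}(\widehat{f}^{P}(Y_{i})-t_{i})^{2}$ over the feasible set $T(\mathcal{G})\subset\mathbb{R}^{n}$.

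Next I would identify $T(\mathcal{G})$ with the convex hull of the curve $C:=\{v(\theta):\theta\in\Theta\}\subset\mathbb{R}^{n}$, where $v(\theta)=(f_{\theta}(Y_{1}),\ldots,f_{\theta}(Y_{n}))$. Indeed, point masses in $\mathcal{G}$ map to points of $C$, and for any $G\in\mathcal{G}$ the $n$-dimensional barycentre $T(G)$ lies in $\mathrm{conv}(C)$. Conversely, every finite convex combination of elements of $C$ is realised by a $G$ which is a finitely supported probability measure, so $T(\mathcal{G})=\mathrm{conv}(C)$. Because $\Theta$ is compact and $\theta\mapsto f_{\theta}(y)$ is continuous for each fixed $y$ (a mild regularity assumption on the mixture components, satisfied for the Gaussian kernels used later), the set $C$ is a compact subset of $\mathbb{R}^{n}$.

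Now I would apply Carath\'eodory's theorem in $\mathbb{R}^{n}$: every point of $\mathrm{conv}(C)$ is a convex combination of at most $n+1$ points of $C$, and the convex hull of a compact subset of $\mathbb{R}^{n}$ is itself compact. Hence $T(\mathcal{G})$ is compact, the continuous function $\Psi$ attains its infimum on it, and the minimum in \eqref{eqInfMinProblem} is therefore attained. Pick any minimiser $t^{*}\in T(\mathcal{G})$; Carath\'eodory supplies weights $\widehat{\pi}_{1},\ldots,\widehat{\pi}_{S}\geq 0$ summing to $1$ and atoms $\widehat{\theta}_{1},\ldots,\widehat{\theta}_{S}\in\Theta$ with $S\leq n+1$ such that $t^{*}=\sum_{s=1}^{S}\widehat{\pi}_{s}v(\widehat{\theta}_{s})$. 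The discrete mixing measure $\widehat{G}=\sum_{s=1}^{S}\widehat{\pi}_{s}\delta_{\widehat{\theta}_{s}}$ then satisfies $T(\widehat{G})=t^{*}$ and thus realises the minimum, with support of cardinality at most $n+1$.

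The only delicate point is ensuring that $\Psi$ actually attains its infimum, which I would handle precisely via compactness of $\mathrm{conv}(C)$; this is where Carath\'eodory does double duty (both giving the atomic representation \emph{and} underwriting the closedness of the feasible set, since the convex hull of a compact set in a finite-dimensional space is already closed). I do not expect real obstacles beyond checking the continuity of $\theta\mapsto f_{\theta}(Y_{i})$ required to conclude compactness of $C$, which is part of the standing assumptions on the mixture family.
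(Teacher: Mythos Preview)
Your proposal is correct and follows essentially the same route as the paper's proof: reduce the objective to a function of the $n$-vector $(f_{G}(Y_{1}),\ldots,f_{G}(Y_{n}))$, identify the feasible set with the convex hull of the curve $\{(f_{\theta}(Y_{1}),\ldots,f_{\theta}(Y_{n})):\theta\in\Theta\}\subset\mathbb{R}^{n}$, and apply Carath\'eodory's theorem (which the paper cites as Theorem III.E of Roberts--Varberg). The paper invokes Phelps' barycentre representation to justify the inclusion $T(G)\in\mathrm{conv}(C)$ for arbitrary $G\in\mathcal{G}$, which you assert directly; and your existence argument via compactness of $\mathrm{conv}(C)$ is in fact cleaner than the paper's appeal to convexity of the feasible region and objective, which on its own does not guarantee attainment.
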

Lemma \ref{minimumExists} is very similar to the well-known result of \citet{Lindsay1983} for mixture likelihoods. For convenience, the argument of the proof is reproduced (with the relevant modifications) in Appendix \ref{sectionAppendixB}. The implication of Lemma \ref{minimumExists} is that a minimiser $\widehat{G}$ necessarily takes the form $\widehat{G}=\sum_{s=1}^{S}\pi_{s}\delta(\theta_{s})$ with $S\leq n+1$, where $\theta_{1},\ldots,\theta_{S}$ are distinct elements of $\Theta$, $\delta(\theta)$ assigns probability one to any set containing $\theta$, and $\pi=(\pi_{1},\ldots, \pi_{S})$ belongs to the unit $S$-simplex, $\Delta^{S}$. The implication of this result is that the original infinite dimensional minimisation problem is equivalent to the finite dimensional minimisation problem
\begin{equation}\label{eqFiniteMinProblem}
\min_{\pi\in\Delta^{S},\theta\in\Theta^{S}}\int\bigl(\widehat{f}^{P}-f_{\pi,\theta}\bigr)^{2}dP_{n} = \min_{\xi\in\Xi}\int(\widehat{f}^{P}-f_{\xi})^{2}dP_{n},
\end{equation}
justifying the finite dimensional SPE $\widehat{f}_{n}^{*}:=\psi^{*}_{\mathcal{F}_{d}^{S}}(P_{n},\widehat{f}^{P})$ with $S\leq n+1$.
\begin{remark}
Lemma \ref{minimumExists} specifies that the support size of $\widehat{G}$ is no larger than $n+1$. From a practical point of view, it is desirable to take $S$ smaller than $n+1$. Simulations reveal decreasing marginal improvements from increasing $S$ and the SPE performs well for $S$ much smaller than $n+1$ \emph{(}see \S \ref{sectionSimulation}\emph{)}.
\end{remark}

\subsection{Statistical properties of the SPE}\label{statProperties}

Conditions \ref{conditionTrueDensity}, \ref{conditionPilot}, \ref{conditionMixtures} and \ref{conditionSE} provide requirements on, respectively, $f_{0}$, $\widehat{f}^{P}$, $\mathcal{F}_{d}^{S}$, and ($f_{0},\hspace{1pt}\widehat{f}^{P},\hspace{1pt}\mathcal{F}_{d}^{S}$) simultaneously that are sufficient for the theoretical results reported in Theorems \ref{thmRate} and \ref{thmConsistency}.

\begin{condition}\label{conditionTrueDensity} \emph{[on true density].}
$f_{0}\in \LL_{\infty}(\RR^{d})\cap \mathfrak{W}_{1,1}(\RR^{d},\text{\emph{Leb}})$.
\end{condition}
Condition \ref{conditionTrueDensity} is a weak one, requiring only that the true density be uniformly bounded with integrable first partial derivatives.

\begin{condition}\label{conditionPilot}\emph{[on pilot estimator].}
$\Pr\bigl(\widehat{f}^{P}\in\mathcal{D}\bigr)\longrightarrow 1$, where $\mathcal{D}$ is a $P$-Donsker class of functions. Furthermore, for $r_{n}$ and $s_{n}$ positive deterministic sequences satisfying $r_{n}\searrow 0$ and $s_{n}\searrow 0$ as $n\rightarrow \infty$, $\widehat{f}^{P}$ satisfies $\mathbb{E}\|\widehat{f}^{P}-f_{0}\|^{2}_{\mathbb{L}_{2}}=O(r_{n})$ and $\mathbb{E}\sup_{t\in\mathbb{R}^{d}}\bigl|\widehat{F}^{P}(t)-F_{n}(t)\bigr|=O(s_{n})$, where $\widehat{F}^{P}$ is the distribution function corresponding to the density function $\widehat{f}^{P}$.
\end{condition}

The rates of convergence $r_{n}$ and $s_{n}$ determine the rate of convergence of the least squares criterion function in Theorem \ref{thmRate}. A particularly interesting case is that in which $r_{n}$ and $s_{n}$ are $O(n^{-1/2})$; this delivers a parametric rate in Theorem \ref{thmRate}. For estimation of densities in $\mathfrak{W}_{\ell,\infty}(\RR^{d})$, the minimax rate is $\mathbb{E}\|\widehat{f}^{P}-f_{0}\|^{2}_{\mathbb{L}_{2}}=O(n^{-2\ell/(2\ell+d)})$ \citep[][]{Ibragimov1983}, which is $O(n^{-1/2})$ for $d\leq 2\ell$. The shifted histogram \citep[e.g.][]{DasGupta2008, Scott1992} with AMISE minimising bin width and the multivariate kernel density estimator with product kernel and AMISE-minimising bandwidth $h=O(n^{-1/(2t+d)})$ both achieve this minimax rate \citep[e.g.][Theorems 5.3 and 6.4]{Scott1992}. The minimax rate of convergence for log-concave density estimation is $\mathbb{E}\|\widehat{f}^{P}-f_{0}\|^{2}_{\mathbb{L}_{2}} = O(n^{-2/(d+1)})$ \citep[][]{KimSamworth2014}, which is $O(n^{-1/2})$ for $d\leq 3$.

Condition \ref{conditionPilot} allows a rate in Theorem \ref{thmRate} below that is faster than the rate of convergence of $\widehat{f}^{P}$ to $f_{0}$ in $\LL_{2}$ norm. The rate is optimised by optimally trading off the rate of convergence of $\widehat{f}^{P}$ to $f_{0}$ in squared $\LL_{2}$ norm and the rate of convergence of $\widehat{F}^{P}$ to $F_{n}$ in supremum norm. This involves a choice of tuning parameter in the pilot estimation stage that converges faster than that typically used for optimal density estimation using the pilot estimator alone. In other words, optimal rates of convergence for SPE involve undersmoothing at the pilot estimation stage, where the degree of undersmoothing is dictated by the precise choice of pilot estimator. We illustrate this for the kernel density estimator.

Let $k_{h}(x)=\frac{1}{h^{d}}k(x/h)$, where $k:\RR^{d}\rightarrow \RR$, and define the kernel-smoothed empirical distribution function as $\widehat{F}_{n,h}^{k}(x)=\int_{-\infty}^{x_{1}}\cdots \int_{-\infty}^{x_{d}}\widehat{f}_{n,h}^{k}(y)dy$ where
\begin{equation}\label{eqKernelDensityEstimator}
\widehat{f}_{n,h}^{k}
(x)=P_{n}\ast k_{h}(x)=\frac{1}{nh^{d}}\sum_{i=1}^{n}k\Bigl(\frac{x-X_{i}}{h}\Bigr) \quad x\in\RR^{d}.
\end{equation}
In Proposition \ref{propositionKernel} we provide a range of bandwidths for which the third requirement of \ref{conditionPilot} is fulfilled. In Proposition \ref{propositionKernel}, $\sigma^{2}(b)$ is the supremum over $0<h<b$ and $x\in\RR^{d}$ of the variance of $\bigl(K\bigl((x-X)/h\bigr)-\ind\{X\leq x\}\bigr)$, where $K(z)=\int_{-\infty}^{z_{1}}\cdots\int_{-\infty}^{z_{d}} k(y)dy$ for $z=(z_{1},\ldots,z_{d})\in\RR^{d}$.
\begin{proposition}\label{propositionKernel}
Let $f_{0}\in\mathfrak{W}_{\ell,\infty}(\RR^{d})$, $\ell\geq 0$, and let $k:\RR^{d}\rightarrow \RR$ be such that $\int_{\RR^{d}}k(z)dz=1$, $\int_{\RR^{d}}|k(z)||z^{\alpha}|dz<\infty$ for all $\alpha$ such that $|\alpha|=\ell+1$, where $z^{\alpha}=z_{1}^{\alpha_{1}}z_{2}^{\alpha_{2}}\cdots z_{d}^{\alpha_{d}}$, and \emph{$\ind\{\ell>0\}\sum_{|\alpha|\leq \ell}\int_{\RR^{d}}z^{\alpha}k(z)dz=0$}. Then, for $b_{n}$ a sequence of constants $0<b_{n}<1$ such that $\sigma^{2}(b_{n})\searrow 0$ as $b_{n}\searrow 0$ and $b_{n}=o(n^{-1/2(\ell+1)}\sqrt{\log\log n})$, $\sup_{0<h\leq b_{n}}\EE\|\widehat{F}_{n,h}^{K}-F_{n}\|_{\LL_{\infty}}=o(\sqrt{n^{-1}\log\log n})$.
\end{proposition}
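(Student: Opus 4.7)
The plan is a bias--variance decomposition of the kernel-smoothed c.d.f.\ against the empirical c.d.f.. Since $\widehat{F}^{K}_{n,h}(x) = \frac{1}{n}\sum_{i=1}^{n} K((x - X_{i})/h)$ and $\mathbb{E}F_{n}(x) = F(x)$, one writes
\begin{equation*}
\widehat{F}^{K}_{n,h}(x) - F_{n}(x) = U_{n}(x,h) + B(x,h),
\end{equation*}
where $U_{n}(x,h) = \frac{1}{n}\sum_{i=1}^{n}\{Z_{i}(x,h) - \mathbb{E}Z_{1}(x,h)\}$ is a centred empirical process in the functions $Z_{i}(x,h) = K((x-X_{i})/h) - \ind\{X_{i} \leq x\}$, and $B(x,h) = \mathbb{E}\widehat{F}^{K}_{n,h}(x) - F(x)$ is the bias of the smoothed c.d.f.\ estimator.

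For the bias, Fubini (or iterated integration by parts) yields $\mathbb{E}\widehat{F}^{K}_{n,h}(x) = \int k(u)F(x-hu)\,du$, so that $B(x,h) = \int k(u)\{F(x-hu) - F(x)\}\,du$. Because $f_{0} \in \mathfrak{W}_{\ell,\infty}(\mathbb{R}^{d})$, the c.d.f.\ $F$ admits a multivariate Taylor expansion around $x$ to order $\ell$ with remainder of order $\ell + 1$; the vanishing-moment hypothesis on $k$ annihilates the intermediate terms, and the hypothesis $\int |u^{\alpha}||k(u)|\,du < \infty$ for $|\alpha| = \ell + 1$ controls the remainder. One concludes $\sup_{x}|B(x,h)| = O(h^{\ell+1})$ uniformly, which the stated bandwidth condition on $b_{n}$ forces into $o(\sqrt{n^{-1}\log\log n})$.

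For the stochastic part, the class $\mathcal{G}_{h} = \{y \mapsto K((x-y)/h) - \ind\{y \leq x\} : x \in \mathbb{R}^{d}\}$ has envelope bounded by $1$ and, by the very definition of $\sigma^{2}(h)$, $P$-variance at most $\sigma^{2}(h)$. Both $\{K((x - \cdot)/h) : x \in \mathbb{R}^{d}\}$ (VC-subgraph by coordinatewise monotonicity of $K$) and $\{\ind\{\cdot \leq x\} : x \in \mathbb{R}^{d}\}$ (orthant indicators) are VC-type classes with constants depending on $d$ but not on $h$, so the same holds for $\mathcal{G}_{h}$. A local Talagrand-type maximal inequality for the empirical process over a VC-type class with small variance (e.g.\ Gin\'{e}--Guillou, or van der Vaart--Wellner) then delivers
\begin{equation*}
\mathbb{E}\sup_{x}|U_{n}(x,h)| \lesssim \sigma(h)\sqrt{\frac{\log(1/\sigma(h))}{n}} + \frac{\log(1/\sigma(h))}{n}.
\end{equation*}
Since $\sigma^{2}(b_{n}) \searrow 0$, the factor $\sigma(h)\sqrt{\log(1/\sigma(h))}$ vanishes uniformly in $h \in (0,b_{n}]$, so the stochastic contribution is also $o(\sqrt{n^{-1}\log\log n})$.

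Combining the two estimates and taking the supremum over $h \leq b_{n}$---painless since both bounds are monotone in $h$---completes the proof. The main technical obstacle is the stochastic step: verifying uniform VC-type entropy for the class $\mathcal{G}_{h}$ and invoking a maximal inequality sensitive to the shrinking variance $\sigma^{2}(h)$, so that the $1/\sqrt{n}$ rate is accompanied by a $\sigma(h)$ prefactor rather than merely the crude envelope bound. The bias analysis, in contrast, is a standard Taylor-with-moments computation.
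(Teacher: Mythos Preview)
Your proof is correct and follows the same decomposition as the paper: a Taylor-expansion bound on the bias $\|\EE\widehat{F}^{K}_{n,h}-F_{0}\|_{\LL_\infty}=O(h^{\ell+1})$, plus a small-variance empirical-process bound on the centred term. The paper handles the stochastic piece by citing Corollary~1.2 of Mason and Swanepoel (2013) as a black box, whereas you sketch the Gin\'e--Guillou/Talagrand maximal-inequality argument that underlies that result; one minor caveat is that your VC-subgraph justification via coordinatewise monotonicity of $K$ presumes $k\geq 0$, which fails for the higher-order kernels needed when $\ell\geq 2$ --- a bounded-variation (or finite-sum-of-monotone) entropy argument is the standard substitute, and does not change the structure of your proof.
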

\begin{remark}
In Proposition \ref{propositionKernel}, the requirement that \emph{$\ind\{\ell>0\}\sum_{|\alpha|\leq \ell}\int_{\RR^{d}}z^{\alpha}k(z)dz=0$} simply amounts to using a symmetric kernel if $\ell=1$, whilst if $f_{0}$ possesses more smoothness, the same clause prescribes the use of higher-order kernels, thereby allowing slower convergence of the bandwidth sequence to deliver the same rate of convergence of $\EE\|\widehat{F}_{n,h}^{K}-F_{n}\|_{\LL_{\infty}}$, and hence a faster rate of convergence for $\EE\|\widehat{f}^{P}-f_{0}\|_{\LL_{2}}^{2}$.
\end{remark}

Using a bandwidth $h$ of order $b_{n}=o(n^{-1/2(\ell+1)}\sqrt{\log\log n})$, and examining the bias and variance terms of the multivariate product kernel estimator \citep[][Theorem 6.4]{Scott1992}, the dominating term is in the variance, and is of order $o((\log\log n)^{-1/2}n^{(d/2(\ell+1))-1})=o(n^{-1/2})$ if $d\leq \ell+1$. Thus, with a pilot estimator constructed as in the setting of Proposition \ref{propositionKernel} with $d\leq \ell+1$, the SPE criterion function achieves a rate of convergence of $o_{p}(\sqrt{n^{-1}\log\log n})$.

The first requirement of Condition \ref{conditionPilot} is not restrictive and is satisfied by e.g.~the kernel density estimator with product Gaussian kernel, as established in Proposition \ref{propositionDonsker}.

\begin{proposition}\label{propositionDonsker}
Suppose $f_{0}\in\mathfrak{W}_{\ell,2}(\RR^{d},\emph{\text{Leb}})$, $d\leq \ell$. Let $k_{h}(v)=\frac{1}{h^{d}}\prod_{j=1}^{d}k(v_{j}/h)$, where $k$ is the Gaussian Kernel $k(x)=\frac{1}{\sqrt{2\pi}}\exp\{-x^{2}/2\}$. With $\widehat{f}^{P}$ the kernel density estimator $\widehat{f}_{n,h_{n}}^{k}$ defined in equation \eqref{eqKernelDensityEstimator} with $h_{n}$ any bandwidth converging no faster than $O_{p}(n^{-1/2d})$, $\Pr(\widehat{f}^{P}\in\mathcal{D})\longrightarrow 1$ as $n\longrightarrow \infty$, where $\mathcal{D}$ is a $P$-Donsker class of functions.
\end{proposition}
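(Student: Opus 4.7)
The plan is to exhibit a ball in a Sobolev space that is $P$-Donsker and then show that, with probability tending to one, the Gaussian kernel density estimator lies inside it. Specifically, I would pick an index $s$ with $d/2 < s \leq \ell$ (available since $\ell \geq d > d/2$) and take $\mathcal{D} = \{f \in \mathfrak{W}_{s,2}(\RR^{d}) : \|f\|_{s, \LL_{2}} \leq M\}$ for a sufficiently large constant $M$. Standard bracketing-entropy estimates for Sobolev balls (Birman--Solomjak type, combined with Sobolev embedding of $\mathfrak{W}_{s,2}$ into a H\"older space, as in \citet{Ziemer1989}) give $\log N_{[]}\bigl(\epsilon, \mathcal{D}, \LL_{2}(P)\bigr) \lesssim \epsilon^{-d/s}$, whose square root is integrable at $0$ precisely when $s > d/2$; this yields the $P$-Donsker property of $\mathcal{D}$.

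The remainder reduces to showing $\|\widehat{f}^{P}\|_{s,\LL_{2}} = O_{p}(1)$. The natural decomposition is $\widehat{f}^{P} = k_{h_{n}} \ast f_{0} + \bigl(\widehat{f}^{P} - k_{h_{n}} \ast f_{0}\bigr)$. The smoothed-bias piece satisfies $\|k_{h_{n}} \ast f_{0}\|_{s, \LL_{2}} \leq \|f_{0}\|_{s, \LL_{2}} < \infty$, since for each multi-index $|\alpha| \leq s$, $D^{\alpha}(k_{h_{n}} \ast f_{0}) = k_{h_{n}} \ast D^{\alpha} f_{0}$ and Young's inequality with $\|k_{h_{n}}\|_{\LL_{1}} = 1$ gives $\|k_{h_{n}} \ast D^{\alpha} f_{0}\|_{\LL_{2}} \leq \|D^{\alpha} f_{0}\|_{\LL_{2}}$ (valid for $s \leq \ell$). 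The zero-mean variance piece is an average of i.i.d.~centred functions in $\mathfrak{W}_{s,2}(\RR^{d})$, and a direct variance computation yields $\mathbb{E}\|\widehat{f}^{P} - k_{h_{n}} \ast f_{0}\|_{s, \LL_{2}}^{2} \leq C \|k_{h_{n}}\|_{s, \LL_{2}}^{2}/n$, where an elementary rescaling gives $\|k_{h_{n}}\|_{s, \LL_{2}}^{2} \asymp h_{n}^{-d-2s}$. Markov's inequality then selects $M$ large enough that $\Pr(\widehat{f}^{P} \in \mathcal{D}) \to 1$, once the variance term is shown to be controlled.

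The main obstacle is the tight interplay between the hypothesised bandwidth $h_{n} \gtrsim n^{-1/(2d)}$ and the Donsker-threshold smoothness $s > d/2$: the variance bound $n h_{n}^{d+2s} \gtrsim 1$ holds comfortably only when $s \leq d/2$, precisely at the endpoint needed for the Donsker criterion. To push strictly past the boundary I would exploit the Gaussian kernel's exponential Fourier decay, $|\widehat{k_{h_{n}}}(\xi)| = \exp(-h_{n}^{2}|\xi|^{2}/2)$, which concentrates the relevant frequencies on $|\xi| \lesssim 1/h_{n}$ and yields a sharper variance estimate than the crude Sobolev-norm count, effectively gaining a logarithmic factor sufficient to accommodate some $s > d/2$. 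A secondary technicality is that classical Sobolev-ball bracketing-entropy estimates are stated on bounded sets, so I would either augment $\mathcal{D}$ with a polynomial-decay envelope (easily verified for the Gaussian KDE, given $f_{0}\in\LL_{\infty}\cap\LL_{1}$) or restrict to a compact $\mathcal{V}$ containing the essential support of $f_{0}$ plus a slowly growing tail, and verify that the discarded mass is negligible with probability tending to one.
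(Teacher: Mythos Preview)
Your overall strategy matches the paper's: show that the Gaussian KDE lands, with probability tending to one, in a Sobolev-type class, and then invoke a Donsker result for that class. The paper argues that $\|D^{\alpha}(\widehat f^{P}-k_{h_n}\ast f_0)\|_{\LL_1}\to 0$ almost surely for every $\alpha$ under the stated bandwidth restriction, concludes $\Pr(\widehat f^{P}\in\mathfrak W_{\infty,2})\to 1$, and cites Theorem~1.3 of \citet{Marcus1985} directly for the Donsker property of $\mathfrak W_{\ell,2}$ when $\ell>d/2$; you instead work in $\LL_2$, insist correctly on a fixed-radius ball, and appeal to Birman--Solomjak bracketing entropy. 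The bias control via Young's inequality and the bias--variance split are the same in spirit.

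However, the remedy you propose for the variance obstacle does not work. Carrying out the calculation in the Fourier domain gives
\[
\mathbb{E}\bigl\|\widehat f^{P}-k_{h_n}\ast f_0\bigr\|_{s,\LL_2}^{2}
\;\asymp\;\frac{1}{n}\int_{\RR^{d}}(1+|\xi|^{2})^{s}\,e^{-h_n^{2}|\xi|^{2}}\,\bigl(1-|\widehat f_0(\xi)|^{2}\bigr)\,d\xi,
\]
and the Gaussian factor $e^{-h_n^{2}|\xi|^{2}}$ is \emph{already} the mechanism that localises the integral to $|\xi|\lesssim 1/h_n$; evaluating it produces precisely the power law $\int(1+|\xi|^{2})^{s}e^{-h_n^{2}|\xi|^{2}}d\xi\asymp h_n^{-d-2s}$, with no residual logarithmic gain to extract. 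The factor $1-|\widehat f_0(\xi)|^{2}$ offers no help either, since $f_0\in\mathfrak W_{\ell,2}$ forces $\widehat f_0(\xi)\to 0$ and hence $1-|\widehat f_0(\xi)|^{2}$ is bounded away from zero on the dominant shell $|\xi|\asymp 1/h_n$. At the boundary bandwidth $h_n\asymp n^{-1/(2d)}$ this yields $\mathbb{E}\|\cdot\|_{s,\LL_2}^{2}\asymp n^{(2s-d)/(2d)}\to\infty$ for every $s>d/2$, so the fixed Sobolev ball cannot contain $\widehat f^{P}$ with high probability at exactly the smoothness index required for the Donsker conclusion. The paper's own proof is terse here and does not visibly track the $|\alpha|$-dependence of the derivative norms, but your more careful accounting exposes a tension that the Gaussian-Fourier heuristic does not resolve.
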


\begin{condition}\label{conditionMixtures}\emph{[on mixture class].}
$\mathcal{F}_{d}^{S}=\{f_{\xi}:\xi \in\Xi\}$ with $\Xi$ a compact set. $\mathcal{F}_{d}^{S}\subseteq \LL_{\infty}(\RR^{d})\cap \mathfrak{W}_{1,1}(\RR^{d},\text{\emph{Leb}})$. Finally, for all $\xi,\xi'\in\Xi$, there exists a $K<\infty$ such that $\|f_{\xi}-f_{\xi'}\|_{\mathbb{L}_{2}(\text{\emph{Leb}})}\leq  K \|\xi-\xi'\|_{\ell_{1}}$. 
\end{condition}

Since $\Xi=\Delta^{S}\times\Theta^{S}$ with $\Delta^{S}$ the unit $S$-simplex, compactness of $\Xi$ follows if $\Theta$ is compact, which is also a requirement of Lemma \ref{minimumExists}. A sufficient condition for the Lipschitz requirement on $\mathcal{F}_{d}^{S}$ is that $A_{f_{\xi}}\in\mathbb{L}_{2}(\mathbb{R}^{d},\text{Leb})$ where 
\begin{equation}\label{suffLipschitz}
A_{f_{\xi}}(y):=\Bigl\|\sup_{\xi\in\Xi}\nabla_{\xi}^{T}f_{\xi}(y)\Bigr\|_{\ell_{\infty}}.
\end{equation}
The previous statement follows by the mean value theorem because
\begin{eqnarray*}
|f_{\xi}(y)-f_{\xi'}(y)|&   =  &\left|\left(\nabla_{\xi}^{T}f_{\bar{\xi}}\right)(y)(\xi-\xi') \right| \quad \bar{\xi}\in \text{\emph{conv}}(\xi,\xi')\\
											& \leq & \Bigl\|\sup_{\xi\in\Xi}\nabla_{\xi}^{T}f_{\xi}(y)\Bigr\|_{\ell_{\infty}}\|\xi-\xi'\|_{\ell_{1}}.
\end{eqnarray*}
$A_{f_{\xi}}\in\mathbb{L}_{2}(\mathbb{R}^{d},\text{Leb})$ thus ensures $\|f_{\xi}-f_{\xi'}\|_{\mathbb{L}_{2}(\text{Leb})}\leq K\|\xi-\xi'\|_{\ell_{1}}$ for all $\xi,\xi'\in\Xi$ with $K=\left\|A_{f_{\xi}}\right\|_{\mathbb{L}_{2}(\text{Leb})}$.

\begin{condition}\label{conditionSE}\emph{[on ($f_{0},\hspace{1pt}\widehat{f}^{P},\hspace{1pt}\mathcal{F}_{d}^{S}$) simultaneously].}
For any $\delta>0$,
\begin{equation}\label{conditionStochEqui}
\sup_{\xi,\xi'\in\Xi: \; \|\xi-\xi'\|_{\ell_{1}}<\delta}\Bigl|\frac{1}{n}\sum_{i=1}^{n}\bigl[f_{0}(Y_{i})(f_{\xi'}(Y_{i})-f_{\xi}(Y_{i})\bigr)\bigr]-\mathbb{E}\bigl[f_{0}(Y_{i})\bigl(f_{\xi'}(Y_{i})-f_{\xi}(Y_{i})\bigr)\bigr]\Bigr| = O_{p}(v_{n}),
\end{equation}
where $v_{n}=\max\{n^{-1/2},r_{n},s_{n}\}$ with $r_{n}$ and $s_{n}$ defined as in Condition \ref{conditionPilot}.
\end{condition}
When $\mathcal{F}_{d}^{S}$ is the space of spherical Gaussian mixtures and when $f_{0}$ satisfies Condition \ref{conditionTrueDensity}, Conditions \ref{conditionMixtures} and \ref{conditionSE} are satisfied with a rate of $O_{p}(1/\sqrt{n})$ in equation \eqref{conditionStochEqui}, which a fortiori is $O_{p}(v_{n})$. 
\begin{proposition}\label{propositionNormalMixtures}
Under Condition \ref{conditionTrueDensity} on $f_{0}$ and Condition \ref{conditionPilot} on $\widehat{f}^{P}$, $\bar{\mathcal{F}}_{d}^{S}$ \emph{(}cf.~Table \ref{tableNotation}\emph{)} satisfies Conditions \ref{conditionMixtures} and \ref{conditionSE} with $S\leq n+1$. 
\end{proposition}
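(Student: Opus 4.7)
The proposition splits into verifying Conditions \ref{conditionMixtures} and \ref{conditionSE} for $\bar{\mathcal{F}}_d^S$. The plan is to handle them sequentially, exploiting the fact that $\bar{\mathcal{F}}_d^S$ is a finite-dimensional parametric family indexed by $\xi=(\pi,\mu_1,\ldots,\mu_S)\in \Xi=\Delta^S\times\mathcal{M}^S$, which is compact as a product of compact sets, and that spherical Gaussian densities together with their derivatives enjoy uniform boundedness and exponential tails.

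For Condition \ref{conditionMixtures}, compactness of $\Xi$ is immediate. Uniform boundedness of $f_\xi$ follows from $\phi(\cdot;\mu,\underline{q}I_d)\leq (2\pi\underline{q})^{-d/2}$ and convexity of the mixture, so $\bar{\mathcal{F}}_d^S\subseteq \LL_\infty(\RR^d)$. Membership in $\mathfrak{W}_{1,1}(\RR^d,\text{Leb})$ follows because $D_j\phi(y;\mu,\underline{q}I_d)=-\underline{q}^{-1}(y_j-\mu_j)\phi(y;\mu,\underline{q}I_d)$ is Lebesgue-integrable, with $\LL_1$ norm bounded uniformly in $\mu$. For the Lipschitz property I would verify the sufficient condition $A_{f_\xi}\in\LL_2(\RR^d)$ from \eqref{suffLipschitz}. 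The coordinate partial derivatives are
\[
\frac{\partial f_\xi}{\partial \pi_s}(y)=\phi(y;\mu_s,\underline{q}I_d), \qquad \frac{\partial f_\xi}{\partial \mu_{s,j}}(y)=\pi_s\underline{q}^{-1}(y_j-\mu_{s,j})\phi(y;\mu_s,\underline{q}I_d).
\]
Taking the supremum over $\xi\in\Xi$ restricts $\mu_s\in\mathcal{M}=[-M,M]^d$ and $\pi_s\in[0,1]$; the supremum of each term is bounded above by a polynomial in $\|y\|$ multiplied by $\exp\{-(\|y\|-M\sqrt d)_+^2/(2\underline{q})\}$, hence the $\ell_\infty$ envelope $A_{f_\xi}$ lies in $\LL_2(\RR^d)$, yielding the desired Lipschitz bound with some constant $K<\infty$.

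For Condition \ref{conditionSE}, introduce the class $\mathcal{H}=\{h_{\xi,\xi'}:=f_0\cdot(f_{\xi'}-f_\xi):(\xi,\xi')\in\Xi\times\Xi\}$. Since $f_0\in\LL_\infty(\RR^d)$ by Condition \ref{conditionTrueDensity} and $\sup_{\xi\in\Xi}\|f_\xi\|_{\LL_\infty}\leq (2\pi\underline{q})^{-d/2}$, $\mathcal{H}$ admits a constant envelope. Combining $\|f_0\|_{\LL_\infty}<\infty$ with the Lipschitz bound from the previous paragraph gives, for some $C<\infty$,
\[
\|h_{\xi,\xi'}-h_{\eta,\eta'}\|_{\LL_2(P)}\leq C\bigl(\|\xi-\eta\|_{\ell_1}+\|\xi'-\eta'\|_{\ell_1}\bigr),
\]
so $\mathcal{H}$ is a Lipschitz-parametrised family indexed by a compact subset of $\RR^{2S(1+d)}$. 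Standard bracketing-entropy calculations for such families (e.g.~Theorem 2.7.11 and Example 19.7 of van der Vaart and Wellner) provide the bound $N_{[\,]}(\varepsilon,\mathcal{H},\LL_2(P))\lesssim \varepsilon^{-2S(1+d)}$, whence $\mathcal{H}$ is $P$-Donsker; the standard maximal inequality over $\ell_1$-balls of radius $\delta$ then yields that the left-hand side of \eqref{conditionStochEqui} is $O_p(n^{-1/2})$, uniformly in $\delta>0$. Since $v_n=\max\{n^{-1/2},r_n,s_n\}\geq n^{-1/2}$, this is a fortiori $O_p(v_n)$.

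The main obstacle is the envelope step: one must be careful that after taking $\sup_{\mu\in\mathcal{M}}$ inside the Gaussian factor, the exponential decay in $y$ survives uniformly, because the sup is attained at the projection of $y$ onto $\mathcal{M}$, shifting the effective centre. This shift only displaces the Gaussian tail by at most $M\sqrt d$ and multiplies by a polynomial factor $|y_j-\mu_{s,j}|\leq \|y\|_{\ell_\infty}+M$, so $\LL_2$-integrability is preserved. Once the Lipschitz bound is in hand, the empirical process step is a routine invocation of the Donsker property for Lipschitz-parametrised classes on a compact finite-dimensional index set.
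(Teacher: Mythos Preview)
Your treatment of Condition~\ref{conditionMixtures} matches the paper's: compactness from $\mathcal{M}=[-M,M]^d$, $\LL_\infty\cap\mathfrak{W}_{1,1}$ from explicit Gaussian-derivative computations, and the Lipschitz property via the sufficient condition $A_{f_\xi}\in\LL_2(\RR^d)$.

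For Condition~\ref{conditionSE} you take a genuinely different route. The paper applies the mean value theorem to write $f_{\xi'}(Y_i)-f_\xi(Y_i)=\nabla_\xi^T f_{\bar\xi}(Y_i)(\xi'-\xi)$, observes that $\sup_{y,\xi}\|\nabla_\xi f_\xi(y)\|_{\ell_\infty}<\infty$ (a stronger fact than $A_{f_\xi}\in\LL_2$), and then claims the bound $\delta\cdot C\cdot\bigl|(P_n-P)f_0\bigr|$, reducing everything to a single scalar CLT on $f_0(Y_i)$. Your argument instead shows that $\mathcal{H}=\{f_0(f_{\xi'}-f_\xi):(\xi,\xi')\in\Xi^2\}$ is a bounded Lipschitz-parametrised class over a compact Euclidean index set, obtains polynomial bracketing numbers, and deduces the Donsker property and hence the uniform $O_p(n^{-1/2})$ rate. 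The paper's approach is more elementary and avoids empirical-process machinery; yours is heavier but arguably more airtight, since the paper's key inequality $\bigl|(P_n-P)[f_0\,g]\bigr|\leq \|g\|_{\LL_\infty}\bigl|(P_n-P)f_0\bigr|$ (with $g$ the gradient term, varying in $y$) does not follow from H\"older's inequality alone for the signed measure $P_n-P$, whereas your Donsker argument bypasses this step entirely.
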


\begin{theorem}\label{thmRate}
For $f_{0}$, $f_{\xi}\in\mathcal{F}_{d}^{S}$ $S\leq n+1$ and $\widehat{f}^{P}$ obeying Condition \ref{conditionTrueDensity}, \ref{conditionMixtures} and \ref{conditionPilot} respectively,
\[
\sup_{\xi\in\Xi}\bigl|\widehat{\mathbb{M}}_{n}(\xi) - \mathbb{M}_{0}(\xi)\bigr|=O_{p}(v_{n}).
\]
where $v_{n}=\max\{n^{-1/2},r_{n},s_{n}\}$ with $r_{n}$ and $s_{n}$ defined as in Condition \ref{conditionPilot}.
\end{theorem}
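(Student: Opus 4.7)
The plan is to split the criterion difference into an empirical-process contribution and a deterministic bias contribution via the add-and-subtract
\[
\widehat{\mathbb{M}}_n(\xi) - \mathbb{M}_0(\xi) = (P_n - P)\bigl[(\widehat{f}^{P} - f_\xi)^2\bigr] + P\bigl[(\widehat{f}^{P} - f_\xi)^2 - (f_0 - f_\xi)^2\bigr],
\]
and to bound each piece uniformly over $\xi\in\Xi$ at rate $O_p(v_n)$. For the first piece, on the event $\{\widehat{f}^{P}\in\mathcal{D}\}$ (which has probability tending to one by the first clause of Condition \ref{conditionPilot}), the family $\{(\widehat{f}^{P} - f_\xi)^2 : \xi\in\Xi\}$ lies in the squared version of $\{g - f_\xi : g\in\mathcal{D},\ \xi\in\Xi\}$. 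The latter is $P$-Donsker by preservation of the Donsker property under additive perturbations by a uniformly bounded, Lipschitz parametric class indexed by the compact, finite-dimensional set $\Xi$ (Condition \ref{conditionMixtures}); squaring a uniformly bounded Donsker class preserves the Donsker property. Hence $\sup_\xi |(P_n - P)[(\widehat{f}^{P} - f_\xi)^2]| = O_p(n^{-1/2}) = O_p(v_n)$.

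For the bias contribution, the algebraic identity
\[
(\widehat{f}^{P} - f_\xi)^2 - (f_0 - f_\xi)^2 = (\widehat{f}^{P} - f_0)^2 + 2(\widehat{f}^{P} - f_0)(f_0 - f_\xi)
\]
splits the second piece as $P[(\widehat{f}^{P} - f_0)^2] + 2\, P[(\widehat{f}^{P} - f_0)(f_0 - f_\xi)]$. The first summand is $\xi$-free and dominated by $\|f_0\|_{\mathbb{L}_\infty}\, \|\widehat{f}^{P} - f_0\|_{\mathbb{L}_2}^2 = O_p(r_n)$ using Conditions \ref{conditionTrueDensity} and \ref{conditionPilot}. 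For the cross term, set $g_\xi := (f_0 - f_\xi)\, f_0$ and rewrite
\[
P[(\widehat{f}^{P} - f_0)(f_0 - f_\xi)] = \int g_\xi\, d\bigl(\widehat{F}^{P} - F_0\bigr),
\]
where $\widehat{F}^{P}$ and $F_0$ are the CDFs associated with the densities $\widehat{f}^{P}$ and $f_0$. Multivariate integration by parts, made legitimate by the Sobolev regularity in Conditions \ref{conditionTrueDensity} and \ref{conditionMixtures}, transfers the derivative onto $g_\xi$ and produces a bound of the form $C\cdot \|\widehat{F}^{P} - F_0\|_{\mathbb{L}_\infty}$ with a finite constant $C$ uniform in $\xi\in\Xi$. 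A triangle inequality together with the Dvoretzky--Kiefer--Wolfowitz bound then yields $\|\widehat{F}^{P} - F_0\|_{\mathbb{L}_\infty} \leq \|\widehat{F}^{P} - F_n\|_{\mathbb{L}_\infty} + \|F_n - F_0\|_{\mathbb{L}_\infty} = O_p(s_n) + O_p(n^{-1/2})$. Summing the three contributions gives $O_p(n^{-1/2}) + O_p(r_n) + O_p(s_n + n^{-1/2}) = O_p(v_n)$.

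The main obstacle is the cross term $P[(\widehat{f}^{P} - f_0)(f_0 - f_\xi)]$: a direct Cauchy--Schwarz bound produces only $O_p(\sqrt{r_n})$, which is too weak whenever $r_n$ is the dominant component of $v_n$. The resolution is the integration-by-parts step above, which trades an $\mathbb{L}_2$-density-level gap $r_n$ for the Kolmogorov-type CDF-level gap $s_n$. This is precisely why Condition \ref{conditionPilot} couples the density-level rate $r_n$ with a CDF-level rate $s_n$, and why Conditions \ref{conditionTrueDensity} and \ref{conditionMixtures} impose the $\mathfrak{W}_{1,1}$-type regularity needed to shift the derivative from $\widehat{f}^{P}-f_0$ onto $g_\xi$.
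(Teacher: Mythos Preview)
Your argument is correct and uses the same key ingredients as the paper --- the algebraic expansion of the bias term, the integration-by-parts step that trades the density gap $r_n$ for the CDF gap $s_n$, and a Donsker argument for the empirical-process piece --- but organizes them differently. The paper uses a three-term decomposition
\[
\sqrt{n}\bigl(P_{n}Q_{\widehat{f},\xi}-PQ_{f_{0},\xi}\bigr)=\mathbb{G}_{n}\bigl(Q_{\widehat{f},\xi}-Q_{f_{0},\xi}\bigr)+\mathbb{G}_{n}Q_{f_{0},\xi}+\sqrt{n}P\bigl(Q_{\widehat{f},\xi}-Q_{f_{0},\xi}\bigr),
\]
controls the first piece via the asymptotic-equicontinuity result of van der Vaart--Wellner (Lemma~\ref{lemmaVdVWThm1}), and in the third piece routes the cross term through the \emph{empirical} CDF $F_n$ rather than the true $F_0$; this leaves a residual $Z_\xi=(P_n-P)[f_0(f_0-f_\xi)]$ whose uniform $O_p(n^{-1/2})$ rate is established via a separate stochastic-equicontinuity argument using the Lipschitz clause of Condition~\ref{conditionMixtures}. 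Your two-term decomposition with the direct route through $F_0$ followed by DKW is more streamlined: because $\|\widehat{F}^{P}-F_0\|_{\mathbb{L}_\infty}$ is $\xi$-free, uniformity over $\Xi$ for the cross term reduces to showing $\sup_{\xi}\sum_{|\alpha|=1}\|D^{\alpha}[(f_0-f_\xi)f_0]\|_{\mathbb{L}_1}<\infty$, which is a deterministic condition; the price is that you need the uniform-boundedness caveat in ``squaring a uniformly bounded Donsker class'' to hold for $\mathcal{D}$, which the paper also implicitly assumes when it asserts that $\{Q_{f,\xi}:f\in\mathcal{H}_0,\xi\in\Xi\}$ is $P$-Donsker.
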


Theorem \ref{thmRate} demonstrates that the empirical criterion function from which the parameters of the SPE are estimated, converges to the ideal criterion function at a rate that is potentially much faster than that of the pilot estimator upon which it is based, provided that the pilot estimator is chosen to satisfy Condition \ref{conditionPilot}. The exact rate is determined by $r_{n}$ and $s_{n}$ from Condition \ref{conditionPilot}. In general, $\psi^{*}_{\mathcal{F}_{d}^{S}}(P,f_{0})$ does not define a unique element of $\mathcal{F}_{d}^{S}$. When it does, and when the projection does not lie on a boundary of the parameter space, the stronger consistency result of Theorem \ref{thmConsistency} is obtained.

\begin{theorem}\label{thmConsistency}
Let $f_{0}$, $\widehat{f}^{P}$ and $\mathcal{F}_{d}^{S}$ be such that \ref{conditionTrueDensity}-\ref{conditionPilot} are fulfilled. Suppose further that $\xi^{*}_{0}$  in $\Xi=(\Delta^{S}\times \Theta^{S})$, defines a unique minimiser of $L_{P}(f_{0};f(\xi))$ and belongs to the interior of $\Xi$. Let $\widehat{\xi}^{*}_{n}$ be a sequence such that $\widehat{\mathbb{M}}_{n}(\xi)\leq \inf_{\xi\in\Xi}\widehat{\mathbb{M}}_{n}+o_{p}(1)$. Then $\widehat{\xi}^{*}_{n} \longrightarrow_{p} \xi^{*}_{0}$.
\end{theorem}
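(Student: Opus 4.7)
The plan is to deduce Theorem \ref{thmConsistency} from the uniform convergence statement of Theorem \ref{thmRate} using the classical argmin theorem for M-estimators (see, e.g., van der Vaart, \emph{Asymptotic Statistics}, Theorem 5.7). The two ingredients required are (i) $\sup_{\xi\in\Xi}|\widehat{\mathbb{M}}_{n}(\xi)-\mathbb{M}_{0}(\xi)|=o_{p}(1)$, which is provided directly by Theorem \ref{thmRate} since $v_{n}\to 0$, and (ii) well-separation of $\xi^{*}_{0}$ as the unique minimiser of $\mathbb{M}_{0}$ on $\Xi$.

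For (ii), I would first observe that the map $\xi\mapsto\mathbb{M}_{0}(\xi)=\int(f_{0}-f_{\xi})^{2}dP$ is continuous on $\Xi$. Indeed, for $\xi,\xi'\in\Xi$, by Cauchy--Schwarz and boundedness,
\[
|\mathbb{M}_{0}(\xi)-\mathbb{M}_{0}(\xi')|\leq \|f_{\xi}-f_{\xi'}\|_{\mathbb{L}_{2}(P)}\cdot\bigl(\|2f_{0}-f_{\xi}-f_{\xi'}\|_{\mathbb{L}_{2}(P)}\bigr),
\]
and the first factor is $O(\|\xi-\xi'\|_{\ell_{1}})$ because $f_{0}$ is bounded (Condition \ref{conditionTrueDensity}) so $\|\cdot\|_{\mathbb{L}_{2}(P)}\lesssim\|\cdot\|_{\mathbb{L}_{2}(\text{Leb})}$, and then Condition \ref{conditionMixtures} gives the Lipschitz bound. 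Since $\Xi$ is compact (Condition \ref{conditionMixtures}) and $\xi^{*}_{0}$ is the unique minimiser, a standard compactness argument yields, for every $\eta>0$,
\[
\inf_{\xi\in\Xi:\,\|\xi-\xi^{*}_{0}\|_{\ell_{1}}\geq\eta}\mathbb{M}_{0}(\xi)>\mathbb{M}_{0}(\xi^{*}_{0}).
\]

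Combining (i) and (ii), I would chain the inequalities
\[
\mathbb{M}_{0}(\widehat{\xi}^{*}_{n})\leq \widehat{\mathbb{M}}_{n}(\widehat{\xi}^{*}_{n})+o_{p}(1)\leq \widehat{\mathbb{M}}_{n}(\xi^{*}_{0})+o_{p}(1)\leq \mathbb{M}_{0}(\xi^{*}_{0})+o_{p}(1),
\]
where the first and third inequalities use the uniform convergence (applied at the random points $\widehat{\xi}^{*}_{n}$ and $\xi^{*}_{0}$ respectively), and the middle inequality uses the defining near-optimality of $\widehat{\xi}^{*}_{n}$. This yields $\mathbb{M}_{0}(\widehat{\xi}^{*}_{n})-\mathbb{M}_{0}(\xi^{*}_{0})=o_{p}(1)$, and the well-separation display then forces $\|\widehat{\xi}^{*}_{n}-\xi^{*}_{0}\|_{\ell_{1}}=o_{p}(1)$, which is the claim.

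The argument is largely routine; there is no serious obstacle. The only point that requires some care is verifying continuity of $\mathbb{M}_{0}$ on $\Xi$, which I would handle via the Lipschitz property in Condition \ref{conditionMixtures} combined with the uniform boundedness afforded by Conditions \ref{conditionTrueDensity} and \ref{conditionMixtures}. The interior assumption on $\xi^{*}_{0}$ does not play a role in consistency per se (it is typically invoked in a follow-up asymptotic distributional result), so I would simply note its redundancy here or carry it along unused.
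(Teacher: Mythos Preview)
Your proposal is correct and follows essentially the same route as the paper: both invoke the uniform convergence of Theorem \ref{thmRate}, establish well-separation of the unique minimiser $\xi^{*}_{0}$, and then appeal to the classical argmin theorem (the paper cites Corollary 3.2.3 of van der Vaart and Wellner, \emph{Weak Convergence and Empirical Processes}, which is the same result as van der Vaart's Theorem 5.7). Your version is slightly more explicit in deriving well-separation via continuity of $\mathbb{M}_{0}$ from the Lipschitz clause of Condition \ref{conditionMixtures} and compactness, whereas the paper simply asserts it; your observation that the interior assumption is not needed for consistency is also correct---the paper carries it along but does not use it in any essential way in this proof.
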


\section{Discussion}\label{sectionDiscussion}

This section focusses primarily on the function performed 
by the structural filtering step of the SPE. The SPE is compared to two other approaches: a direct projection of the ($1/n$)-weighted point masses onto a finite-dimensional mixture class, and a projection of the weighted point masses onto a structurally constrained mixture class.

\subsection{Comparison to direct projection of $f_{n}$ onto an unconstrained mixture class}

In view of the fact that the nonparametric structural filtering step induces a double layer of estimation error, one may question whether this pilot estimation stage is really necessary. Letting $f_{n}$ denote the ($1/n$)-weighted point masses at $Y_{1},\ldots,Y_{n}$, consider the projection of $f_{n}$ onto the class $\bar{\mathcal{F}}_{d}^{S}$ of location mixtures of spherical Gaussian densities. Figures 1 and 2 reveal a 
debilitating feature of the direct projection approach. Specifically, the projected density concentrates its mass on regions of the support where no data are observed.
 
 \bigskip

{\centering
\includegraphics[trim=0.0in 3.8in 0.0in 3.8in, clip, height=0.23\paperwidth]{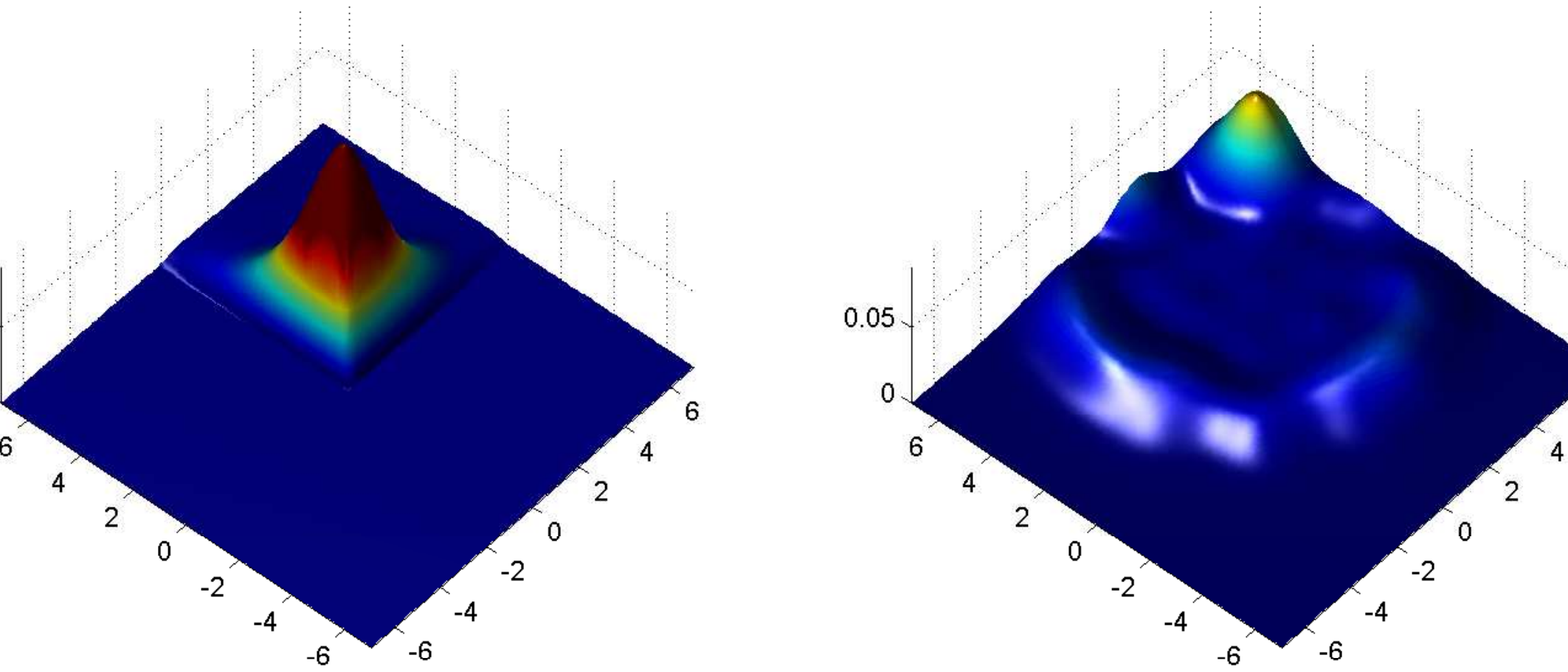}

}

\noindent \small{\textbf{Figure 1.} Left: true gamma density with independent $\Gamma(2,1)$ marginals; projection of the $1/n$-weighted point masses (based on $n=250$ observations) at the data points on the spherical Gaussian mixture class with 64 components.}

\vspace{8pt}

{\centering
\includegraphics[trim=1.5in 3.8in 1.5in 3.8in, clip, height=0.20\paperwidth]{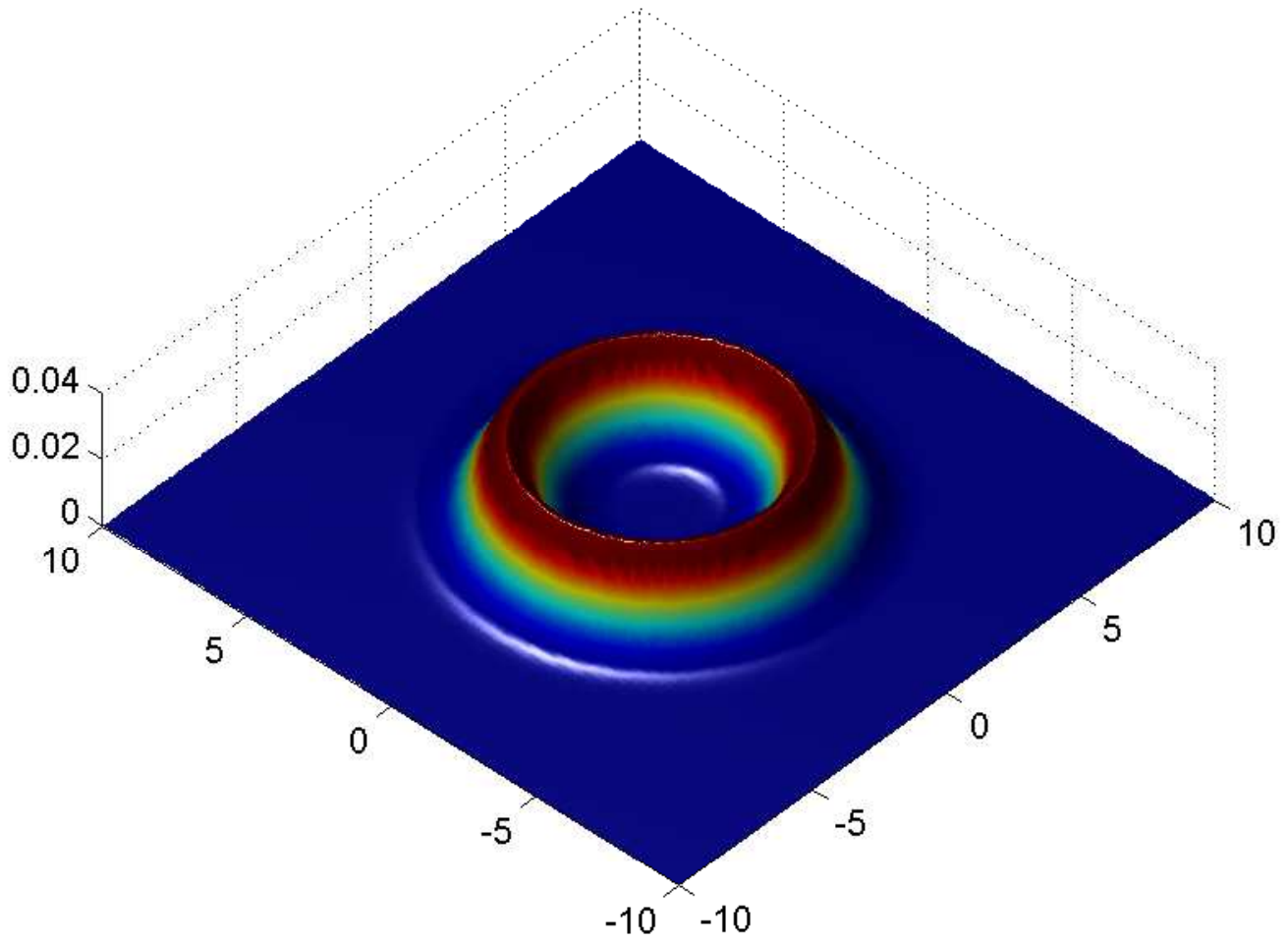}
\includegraphics[trim=1.5in 3.8in 1.8in 3.8in, clip, height=0.19\paperwidth]{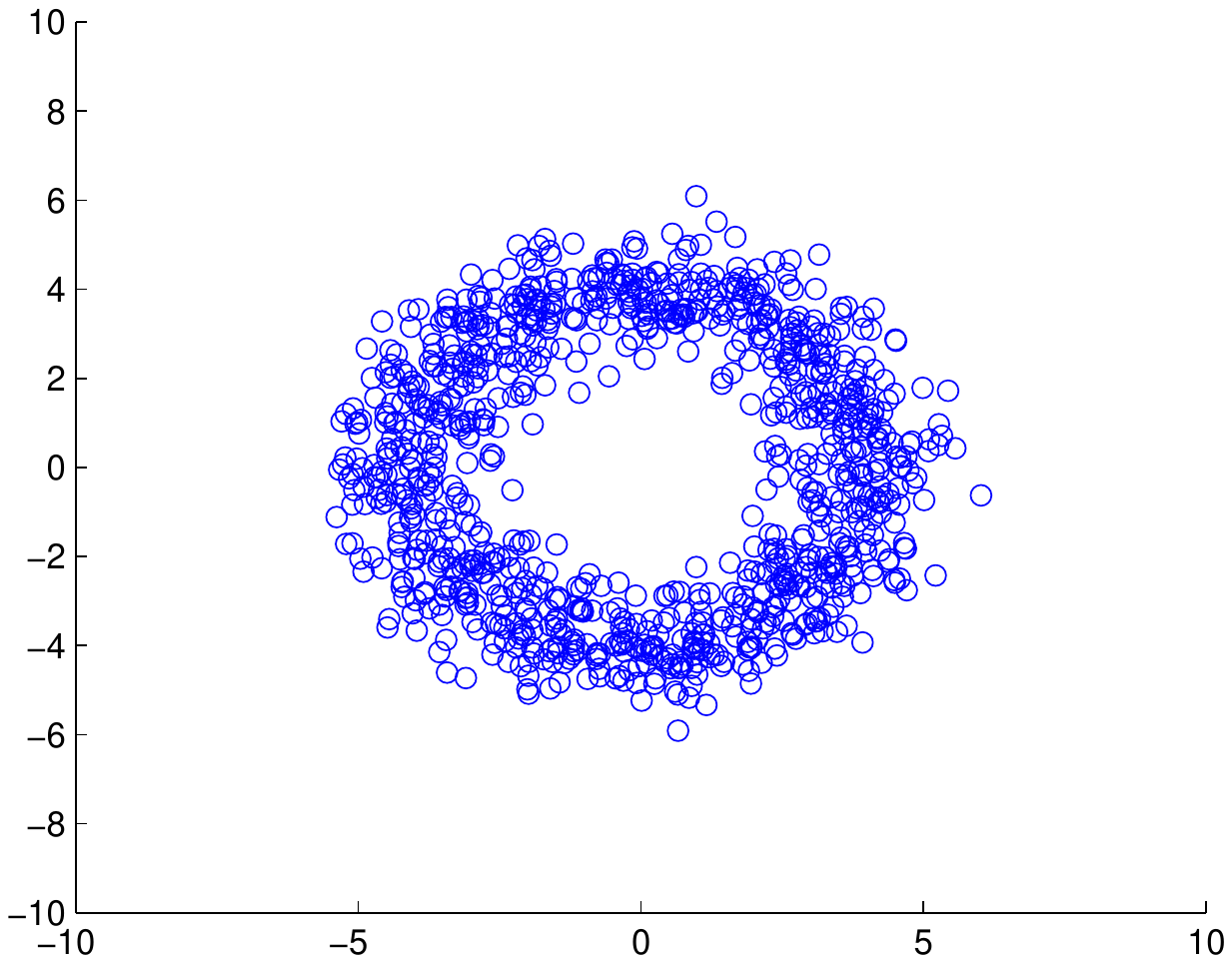}\\
\includegraphics[trim=1.5in 3.8in 1.5in 3.8in, clip, height=0.20\paperwidth]{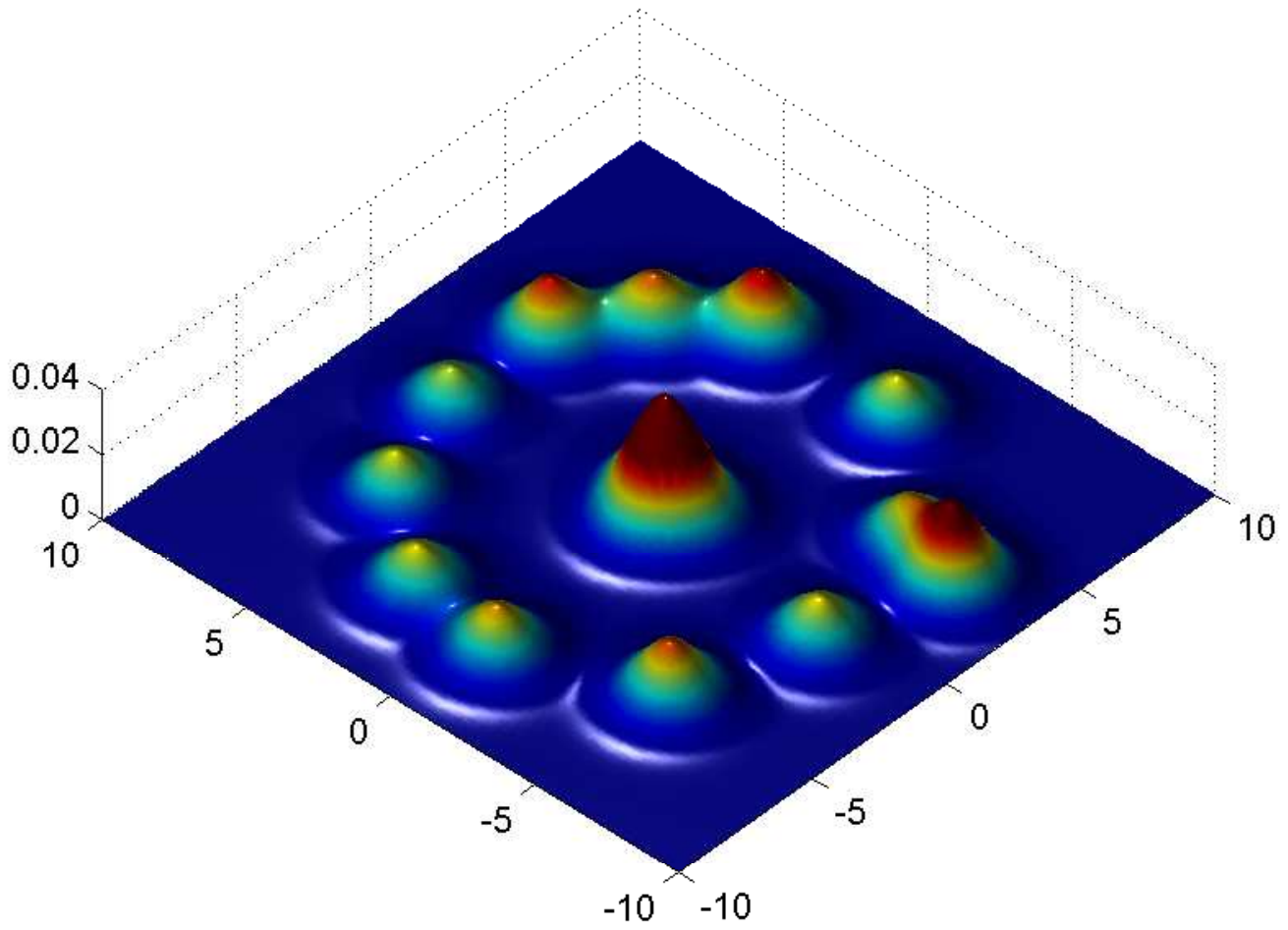}
\includegraphics[trim=1.5in 3.8in 1.8in 3.8in, clip, height=0.20\paperwidth]{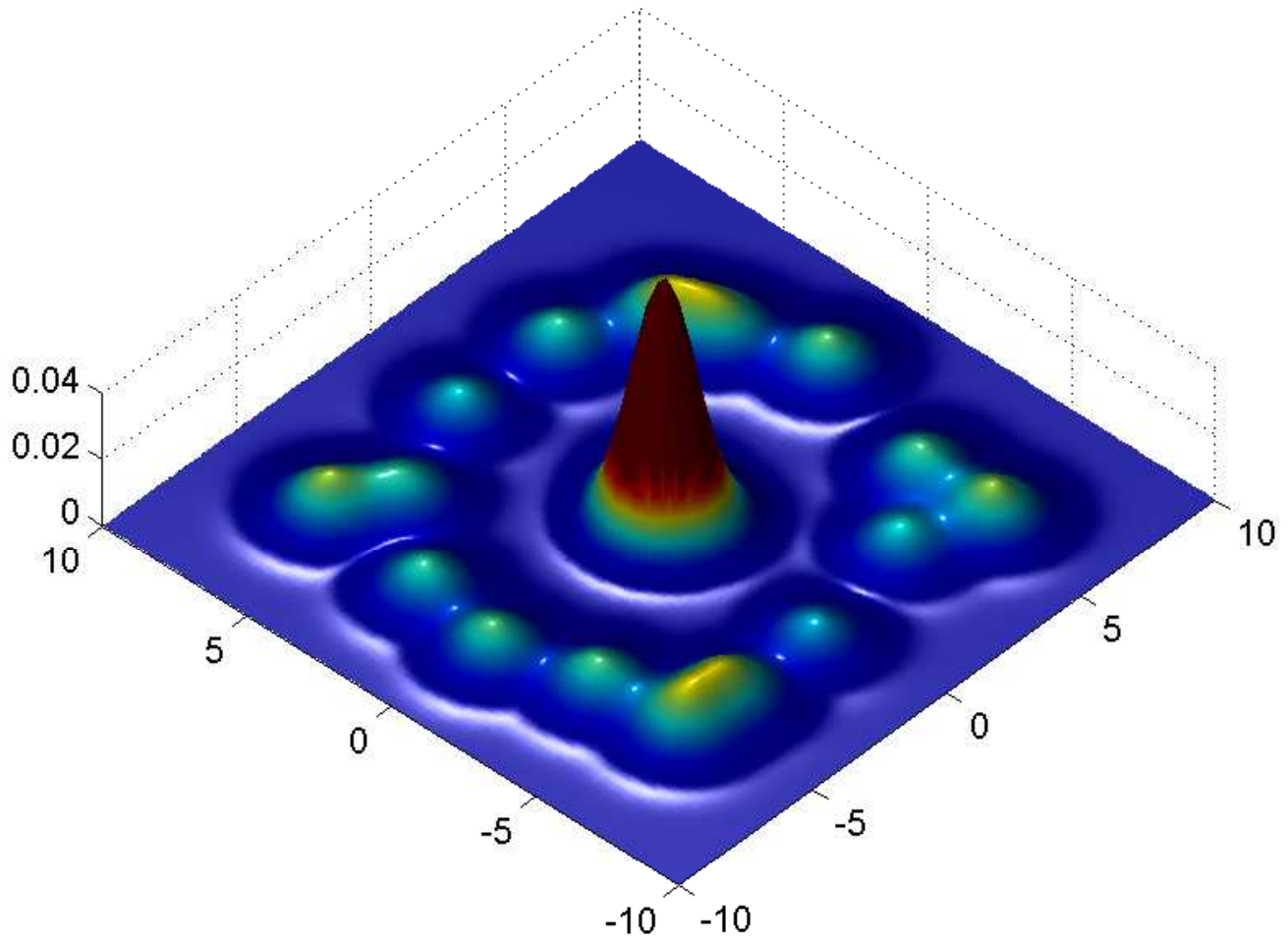}

}

\noindent \small{\textbf{Figure 2.} From top left to bottom right: true density; 1000 random draws from the true density; projection of the $1/n$-weighted point masses at the data points on the spherical Gaussian mixture class with 16 components; projection of the $1/n$-weighted point masses at the data points on the spherical Gaussian mixture class with 25 components.}

\vspace{8pt}

\normalsize{The intuition for Figures 1 and 2 stems from the fact that the criterion function $\Phi_{P_{n}}(g-f)$ is apathetic concerning the ability of a $f\in\mathcal{F}_{d}^{S}$ to well approximate $g$ outside the support of $f_{n}$. Proposition \ref{propositionProjectEmpirical}, gives a concrete description of the nature of the direct projection of $f_{n}$ on the spherical Gaussian mixture class $\bar{\mathcal{F}}_{d}^{S}$ when $f_{0}\in \mathcal{F}_{d}^{LC}$. For log concave $f_{0}$ there exists a nested sequence, $\mathcal{R}_{1}^{f_{0}}\subset \mathcal{R}_{2}^{f_{0}} \subset \cdots$, of closed convex sets such that
\begin{equation}\label{sequenceSets}
\mathcal{R}_{\ell}^{f_{0}}:=\left\{y\in\mathbb{R}^{d}: f_{0}(y)\geq r_{\ell}\right\} \quad r_{\ell}>r_{k} \;\; \forall k>\ell.
\end{equation}
Thus any estimator of $f_{0}\in \mathcal{F}_{d}^{LC}$ should yield a density whose mass is concentrated on these sets. Instead, Proposition \ref{propositionProjectEmpirical} shows that, the direct projection estimator yields, with probability 1, estimates whose mass is concentrated on the complement of successively large convex sets $\mathcal{R}_{k(n)}$ as $n$ grows large.}

\begin{proposition}\label{propositionProjectEmpirical}
Let $f_{0}\in \mathcal{F}_{d}^{LC}$ and $\mathcal{R}_{1}^{f_{0}}\subset \mathcal{R}_{2}^{f_{0}} \subset \cdots$, be defined as in equation \eqref{sequenceSets}.
Suppose further that $f_{0}\notin \bar{\mathcal{F}}_{d}^{S}$ for fixed $S<n$ and write $f_{0}^{*}=\psi^{*}(P;f_{0})$ as 
\[
f_{0}^{*}(y)=\sum_{s=1}^{S}\pi_{0,s}^{*}\phi(y;\mu_{0,s}^{*},\underline{q} I_{d}).
\]
With $g=f_{n}$, for any $n>n_{0}$ \emph{(}$n_{0}<\infty$\emph{)} there exists a $k(n)$ \emph{(}and $r_{k(n)}\searrow 0$ as $n\longrightarrow \infty$\emph{)} such that the $\mu_{n,g}^{*}:=\{\mu_{n,g,1}^{*},\ldots,\mu_{n,g,S}^{*}\}$ that minimise the objective function $\Phi_{P_{n}}(g-f_{\pi_{0}^{*},\mu})$ for $f_{\pi_{0}^{*},\mu}\in \bar{\mathcal{F}}_{d}^{S}$ with $\pi_{0}^{*}=(\pi_{0,1}^{*},\ldots, \pi_{0,S}^{*})\in\Delta^{S}$ known, satisfy one of the following:
\begin{itemize}
\item[(i)] $\mu_{n,g}^{*}\subset \mathbb{R}^{d}\backslash \mathcal{R}_{k(n)}^{f_{0}}$ with probability 1;
\item[(ii)] $\|\mu_{n,g,s}^{*}\|_{\ell_{2}}=\infty$ for all $s\in\{1,\ldots,S\}$.
\end{itemize}
\end{proposition}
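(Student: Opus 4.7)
The plan is to isolate the leading part of the objective $\Phi_{P_n}(f_n-f_{\pi_0^*,\mu})$, take its population limit, and exploit the log-concavity of $f_0$ to show that every minimising centre must sit in a region of vanishing density (or diverge). Interpreting $f_n$ through the weights $w_{f_n,i}=1/n$, I would first write
\[
\Phi_{P_n}(f_n-f_{\pi_0^*,\mu})=\frac{1}{n^2}-\frac{2}{n^2}\sum_{i=1}^{n}f_{\pi_0^*,\mu}(Y_i)+\frac{1}{n}\sum_{i=1}^{n}f_{\pi_0^*,\mu}^2(Y_i).
\]
The first summand is $\mu$-independent and the second is $O(n^{-1})$ uniformly in $\mu$, since $\|f_{\pi_0^*,\mu}\|_{\mathbb{L}_\infty}\le(2\pi\underline{q})^{-d/2}$. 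Thus, up to an $O(n^{-1})$ perturbation, minimising $\Phi_{P_n}$ over $\mu$ reduces to minimising $T_n(\mu):=n^{-1}\sum_{i=1}^{n} f_{\pi_0^*,\mu}^2(Y_i)$.

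Next, a uniform law of large numbers on compact sets of $\mu$---valid by the uniform bound above together with the Lipschitz dependence of $f_{\pi_0^*,\mu}^2$ on $\mu$---yields $T_n(\mu)\to V(\mu):=\int f_{\pi_0^*,\mu}^2 f_0$ with remainder $\eta_n=o_p(1)$. Using the identity $\phi^2(\cdot;\mu,\underline{q}I_d)=2^{-d}(\pi\underline{q})^{-d/2}\,\phi(\cdot;\mu,(\underline{q}/2)I_d)$ and discarding the non-negative cross-terms gives
\[
V(\mu)\;\ge\;\frac{1}{2^d(\pi\underline{q})^{d/2}}\sum_{s=1}^{S}\pi_{0,s}^{*2}\bigl(f_0*\phi_{(\underline{q}/2)I_d}\bigr)(\mu_s).
\]
Log-concavity of $f_0$ implies uniform continuity on its (compact) sublevel sets, so for any $\mu_s\in\mathcal{R}_{\ell}^{f_0}$ there is a radius $\rho_\ell>0$ on which $f_0\ge r_\ell/2$, whence $(f_0*\phi_{(\underline{q}/2)I_d})(\mu_s)\ge c_\ell\, r_\ell$ for a constant $c_\ell=c_\ell(\underline{q},d,f_0)>0$. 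Hence $\mu^*_s\in\mathcal{R}_{\ell}^{f_0}$ forces $V(\mu^*)\ge C_\ell\,\pi_{0,s}^{*2}\,r_\ell$.

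On the other hand, dominated convergence applied to the uniformly bounded integrand $f_{\pi_0^*,\mu}^2 f_0$ gives $V(\mu)\to 0$ whenever all $\|\mu_s\|_{\ell_2}\to\infty$, so the infimum of $V$ equals $0$ and is attained in $(\mathbb{R}^d)^S$ only in the limit---this is case (ii). For the empirical problem, comparing a purported minimiser $\mu^*$ having $\mu^*_s\in\mathcal{R}_{\ell}^{f_0}$ against an ``at-infinity'' candidate $\tilde\mu$ (for which $V(\tilde\mu)<C_\ell\pi_{0,s}^{*2}r_\ell/2$), and absorbing the uniform-LLN remainder $\eta_n$, forces any admissible level $\ell(n)$ to satisfy $\eta_n<C_{\ell(n)}\pi_{0,\min}^{*2}r_{\ell(n)}/4$. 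Solving this implicit inequality defines $k(n)$ with $r_{k(n)}\searrow 0$ as $n\to\infty$, at a rate governed by that of $\eta_n$ (for example $O_p(n^{-1/2}\sqrt{\log n})$ by standard empirical-process bounds).

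The main obstacle is performing the uniform LLN over the \emph{non-compact} $\mu$-domain, needed to exclude sequences of near-minimisers whose centres drift slowly to infinity whilst still revisiting the shifting high-density set $\mathcal{R}_{k(n)}^{f_0}$. A compactification argument---partitioning the parameter space into a bounded regime handled by the uniform LLN and a tail regime where $V(\mu)$ is directly small through pointwise Gaussian decay---resolves this. Cross terms of the form $\phi(\cdot;\mu_s)\phi(\cdot;\mu_t)f_0$ pose no difficulty, since they are non-negative and can only reinforce the lower bound.
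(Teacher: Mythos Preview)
Your plan is sound and genuinely different from the paper's argument. The paper does \emph{not} pass to a population limit or isolate the quadratic term $T_n(\mu)=n^{-1}\sum_i f_{\pi_0^*,\mu}^2(Y_i)$. Instead it writes the first-order condition $\partial \Phi_{P_n}/\partial \mu_s=0$, which at a minimiser forces every $Y_i$ to lie in one of three sets: points where the mixture exactly matches the weight $w_{g,i}$, points where the gradient factor vanishes (only possible if every $\|\mu_s^*\|=\infty$, yielding case~(ii)), or a set $\mathcal{E}$ on which the signed residuals $w_{g,i}-\sum_s\pi_{0,s}^*\phi(Y_i;\mu_{n,g,s}^*,\underline{q}I_d)$ sum to zero. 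The paper then argues by contradiction: if the centres sat inside $\mathcal{R}_{k(n)}^{f_0}$, the mixture would exceed $w_{g,i}=1/n$ at essentially all $Y_i$ in that high-density set, making the positive part of the residual sum on $\mathcal{E}$ dominate the negative part and breaking the balance condition.

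Your route is more direct and more quantitative. By recognising that the cross term $-2n^{-2}\sum_i f_\mu(Y_i)$ is $O(n^{-1})$ while the competitor value $\Phi_{P_n}(\tilde\mu)\to 1/n^2$ as $\tilde\mu\to\infty$, you reduce the problem to lower-bounding $T_n(\mu^*)$ whenever \emph{any single} $\mu_s^*$ lies in $\mathcal{R}_\ell^{f_0}$; this delivers the ``all centres outside'' conclusion in one stroke, whereas the paper's contradiction hypothesis is that \emph{all} centres lie inside. Your comparison also yields an explicit rate $r_{k(n)}\lesssim \eta_n+n^{-1}$, which the first-order-condition argument does not. Two small points to tighten: (a) the uniform LLN is only needed over the compact set $\mathcal{R}_\ell^{f_0}$, since the far-away competitor $\tilde\mu$ can be handled pointwise (each $f_{\tilde\mu}(Y_i)\to 0$), so the non-compactness obstacle is milder than you suggest; and (b) the constant $c_\ell$ in your lower bound depends on $\ell$ through the radius $\rho_\ell$, so when solving for $k(n)$ you must track this dependence rather than treat $c_\ell$ as fixed.
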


\begin{remark}
In the limit as $n\longrightarrow \infty$, scenario (ii) is the only possibility.
\end{remark}

The intuition for the result in Proposition \ref{propositionProjectEmpirical} does not depend on the log concavity of $f_{0}$, only on the fact that $f_{0}$ and $\widehat{f}^{P}$ are both Lebesgue densities whilst $f_{n}$ is not. Whilst $\int f_{n}dP_{n}=1$, $\int \widehat{f}^{P}dP_{n}>1$ as $\widehat{f}^{P}$ is normalised to integrate to one in the most meaningful sense. Replacing $f_{n}$ with a continuous approximation to a mixture of $n$ spikes at the data points, for instance 
\[
f_n(y) = n^{-1}\sum_{i=1}^{n} \phi(y-Y_{i};n^{-1}I_{d})
\] 
where $\phi(y-Y_{i};n^{-1}I_{d})$ is the $d$-dimensional spherical normal with variance $1/n$, imposes Lebesgue integrability and prevents the mass from piling up at points in $\mathbb{R}^{d}$ where no data are observed. However, this continuous approximation is just another example of a pilot estimator, and one that is not convenient from the standpoint of exploiting assumed structure, such as inclusion in a smoothness class.

Proposition \ref{propositionProjectLogConcave} shows that the phenomenon observed in Proposition \ref{propositionProjectEmpirical} is corrected through the use of a pilot density estimator whose mass is concentrated on sufficiently small sets of sufficiently high probability under the true $P$.

\begin{proposition}\label{propositionProjectLogConcave}
Let $g=f_{0}$ or $g=\widehat{f}^{P}$ where $f_{0}, \widehat{f}^{P}\notin \bar{\mathcal{F}}_{d}^{S}$ for fixed $S<n$. Let $A^{g}$ be the smallest set in $\text{\emph{supp}}(g)$ satisfying $P(A^{g})\geq 1/2$ \emph{(}for $g=f_{0}$, $A^{g}$ is simply the $\mathcal{R}_{\ell}^{f_{0}}$ on which half the mass of $f_{0}$ is concentrated\emph{)}. With probability 1, the $\mu_{n,g}^{*}:=\{\mu_{n,g,1}^{*},\ldots,\mu_{n,g,S}^{*}\}$ that minimise the objective function $\Phi_{P_{n}}(g-f_{\pi_{0}^{*},\mu})$ for $f_{\pi_{0}^{*},\mu}\in \bar{\mathcal{F}}_{d}^{S}$ with $\pi_{0}^{*}=(\pi_{0,1}^{*},\ldots, \pi_{0,S}^{*})\in\Delta^{S}$ known, satisfy
\begin{itemize}
\item[(i)] $\sum_{s=1}^{S}\pi_{0,s}^{*}\int_{A^{g}}\phi(y;\mu_{n,g,s}^{*},\underline{q}I_{d})dy\geq 1/2$ or
\item[(ii)] $\|\mu_{n,g,s}^{*}\|_{\ell_{2}}=\infty$ for all $s\in\{1,\ldots,S\}$.
\end{itemize}
\end{proposition}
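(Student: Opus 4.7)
My plan is a contradiction argument: I assume scenario (ii) fails and derive scenario (i). First I reduce to the case in which every optimal center $\mu_{n,g,s}^*$ has finite $\ell_2$-norm: any components whose centers tend to $\infty$ contribute negligibly (through their Gaussian values) to the mixture evaluated at the data and can be relocated to bounded points without strictly altering $\Phi_{P_n}$, so the infimum is attained at a finite-center configuration $f^*=\sum_s\pi_{0,s}^*\phi(\cdot;\mu_s^*,\underline{q}I_d)$. Suppose, for contradiction, that
\[
\sum_{s=1}^S \pi_{0,s}^*\int_{A^g}\phi(y;\mu_s^*,\underline{q}I_d)\,dy<\tfrac12,
\]
so that $f^*$ places strictly more than half of its Lebesgue mass on $(A^g)^c$.

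The key step is to build an explicit competitor $\tilde f=\sum_s\pi_{0,s}^*\phi(\cdot;\tilde\mu_s,\underline{q}I_d)$ with every $\tilde\mu_s$ placed in the interior of $A^g$, and to decompose
\[
\Phi_{P_n}(g-f)=\int_{A^g}(g-f)^2\,dP_n+\int_{(A^g)^c}(g-f)^2\,dP_n.
\]
Because $P(A^g)\geq \tfrac12$, the strong law of large numbers gives $P_n(A^g)\to P(A^g)\geq\tfrac12$ almost surely, so asymptotically at least a fraction $\tfrac12$ of the data $Y_i$ sit in $A^g$. On $A^g$, $g$ takes large values (it is concentrated on $A^g$ by the definition of the latter) while $f^*$ takes small values under the contradiction hypothesis, so $(g-f^*)^2$ dominates; on $(A^g)^c$, $g$ is small and $\tilde f$ is small (up to Gaussian tails) while $f^*$ is comparatively large, so $(g-\tilde f)^2$ is dominated by $(g-f^*)^2$. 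Summing the two pieces yields $\Phi_{P_n}(g-\tilde f)<\Phi_{P_n}(g-f^*)$ almost surely, contradicting the optimality of $\mu_{n,g}^*$ and forcing (i).

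The main obstacle will be making the competitor construction sufficiently quantitative to secure a strict inequality. The components of $\tilde f$ have fixed spread $\underline{q}I_d$ and always leak Gaussian tails outside $A^g$, so the $\tilde\mu_s$ must be placed far enough into the interior of $A^g$ (relative to $\underline{q}$) that the tail mass on $(A^g)^c$ is strictly less than $\tfrac12$, and a lower bound $g\geq c>0$ on a compact subset of $A^g$ must be produced to drive the inequality on $A^g$. When $g=\widehat f^P$ is random, one additionally invokes Condition \ref{conditionPilot} so that $g(Y_i)$ for $Y_i\in A^g$ behaves as $f_0(Y_i)$ does in the limit, and one verifies that $A^{\widehat f^P}$ approaches $A^{f_0}$. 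The remaining steps, namely continuity of $\Phi_{P_n}$ in the mixture parameters and the almost-sure bookkeeping of these limits, are routine.
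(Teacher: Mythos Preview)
Your approach is genuinely different from the paper's, and the gap in it is real.

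\textbf{What the paper does.} The paper does not build a competitor. It differentiates the objective $\Phi_{P_n}(g-f_{\pi_0^*,\mu})$ in $\mu_s$ and works with the first-order conditions at a minimiser. These force each data point $Y_i$ to lie in one of three sets: a set $\mathcal{J}$ where all gradient factors vanish (forcing every $\mu_s^*$ to infinity, i.e.\ scenario (ii)); a set $\mathcal{I}^c$ where the residual $g(Y_i)-f^*(Y_i)$ vanishes; or a set $\mathcal{E}$ on which the signed residuals must sum to zero. The paper then splits $\mathcal{E}$ into $\mathcal{E}_g^+$ (mixture overshoots $g$) and $\mathcal{E}_g^-$ (mixture undershoots), and argues that if (i) fails --- i.e.\ $f^*$ puts less than half its mass on $A^g$ while $P_n(A^g)\to P(A^g)\ge\tfrac12$ --- the balance condition $\sum_{\mathcal{E}_g^-}(\cdot)=-\sum_{\mathcal{E}_g^+}(\cdot)$ cannot hold for large $n$, contradicting stationarity. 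The argument is about \emph{signed residuals at data points summing to zero}, not about comparing two values of $\Phi_{P_n}$.

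\textbf{The gap in your argument.} Your contradiction hinges on the chain: ``$\int_{A^g}f^*<\tfrac12$'' $\Rightarrow$ ``$f^*$ is small on $A^g$ at the data points'' $\Rightarrow$ ``$(g-f^*)^2>(g-\tilde f)^2$ in $P_n$-average on $A^g$''. Neither implication holds in general. The contradiction hypothesis is an \emph{integrated} mass statement and does not prevent $f^*$ from having one or more components centred right at data points inside $A^g$, making $f^*(Y_i)$ large there while the bulk of its Lebesgue mass sits outside. Conversely, your competitor $\tilde f$ is a spherical Gaussian mixture with centres in $A^g$; there is no mechanism forcing $\tilde f(Y_i)\approx g(Y_i)$ on $A^g$, since $g$ need not resemble any such mixture. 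So the inequality $(g-\tilde f)^2<(g-f^*)^2$ on $A^g$ is unsupported. The same issue bites on $(A^g)^c$: $f^*$ having more than half its mass there does not make $f^*(Y_i)$ large at the (relatively few) data points in $(A^g)^c$. In short, a competitor argument would need pointwise control of $f^*$ and $\tilde f$ at the $Y_i$, which the mass hypothesis alone does not provide; the paper sidesteps this by using the first-order conditions, which are already statements about the residuals \emph{at the data points}.

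A secondary issue: the negation of (ii) is that \emph{some} $\mu_s^*$ is finite, not all. Your reduction ``relocate infinite centres to bounded points without strictly altering $\Phi_{P_n}$'' is heuristic; to make it rigorous you would need to argue that any minimiser with a mix of finite and infinite centres is equivalent (in $\Phi_{P_n}$-value) to one with all finite centres, which is plausible but not free.
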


\begin{remark}
Scenario (ii) of Proposition \ref{propositionProjectLogConcave} can be ruled out in practice by initialising the optimisation scheme over a grid of points whose outer edges are dictated by the convex hull of the data.
\end{remark}

\subsection{Direct projection of $f_{n}$ onto a constrained mixture class}

The previous subsection illustrates the importance of performing the nonparametric structural filtering step prior to projection on the mixture class. The fact that the structure exploited in the filtering stage is transformed by the projection is not necessarily a limitation of the SPE, as the filtering step is simply a means to achieve appropriate concentration of mass and regularisation. In view of this, mispecification of the structure in the filtering stage is not of great concern.

If it is deemed important that the final estimate obeys some structural restrictions, an alternative way to proceed is to directly project $f_{n}$ onto $\mathcal{F}_{d}^{S,C}=\mathcal{F}_{d}^{S}\cap \mathcal{F}_{d}^{C}$, a structurally constrained subset of the mixture class. The final estimate will, by construction, possess a succinct parametric representation and obey the constraints.

\subsubsection{Estimation error versus approximation error}

Let $f_{0}\in\mathcal{F}_{d}^{C}$ but $f_{0}\notin\mathcal{F}_{d}^{S}$ and consider $P(\ell\circ \widetilde{f}_{n})-P(\ell\circ \widehat{f}^{*}_{n})$, where $\widehat{f}^{*}_{n}$ is as in Table \ref{tableNotation} and, for the squared error loss function $\ell$,
\[
P(\ell\circ g)=\int(f_{0}-g)^{2}dP \quad \text{and} \quad \widetilde{f}_{n}:=\arginf_{f\in\mathcal{F}^{S,C}}\frac{1}{n}\sum_{i=1}^{n}(\delta_{X_{i}}-f)^{2},
\]
with $\delta_{x}$ the point mass at $x$. A cursory theoretical analysis of the relative performance of the direct projection to that of the SPE is obtained through a standard decomposition into estimation error and approximation error:
\begin{eqnarray*}
& & P(\ell\circ \widetilde{f}_{n}) - P(\ell\circ \widehat{f}^{*}_{n})\\
&=& P(\ell\circ \widetilde{f}_{n}) - \inf_{f\in\mathcal{F}_{d}}P(\ell\circ f) + \inf_{f\in\mathcal{F}_{d}}P(\ell\circ f) - P(\ell\circ \widehat{f}^{*}_{n})\\
&=& \left(\Big[P(\ell\circ \widetilde{f}_{n}) - \inf_{f\in\mathcal{F}_{d}^{S,C}}P(\ell\circ f) \Big] + \Big[\inf_{f\in\mathcal{F}_{d}^{S,C}}P(\ell\circ f) - \inf_{f\in\mathcal{F}_{d}}P(\ell\circ f)\Big]\right) \\
& & \;\; - \left(\Big[P(\ell\circ \widehat{f}^{*}_{n}) - \inf_{f\in\mathcal{F}_{d}^{S}}P(\ell\circ f) \Big] +\Big[\inf_{f\in\mathcal{F}_{d}^{S}}P(\ell\circ f) - \inf_{f\in\mathcal{F}_{d}}P(\ell\circ f)\Big]\right)\\
&=& \text{I}+\text{II}-(\text{III}+\text{IV}) = (\text{I}-\text{III}) + (\text{II}-\text{IV}).
\end{eqnarray*}
It is always true that II$>$IV because $\mathcal{F}_{d}^{S,C}\subset \mathcal{F}_{d}^{S}$ and $f_{0}\notin \mathcal{F}_{d}^{S}$, therefore a fortiori $f_{0}\notin \mathcal{F}_{d}^{S,C}$. We see that $P(\ell\circ \widetilde{f}_{n}) - P(\ell\circ \widehat{f}^{*}_{n})<0$ if and only if -(I-III)$>$(II-IV), i.e.~if $\widetilde{f}_{n}$ achieves a smaller estimation error and if the difference in estimation error exceeds the difference in approximation error.

An example gives weight to our claim that the model class $\mathcal{F}^{S,C}$ is not sufficiently rich to well approximate many elements of $\mathcal{F}_{d}^{C}$. By contrast $\mathcal{F}_{d}^{S}$ is a rich model class which engenders small values of
\[
\sup_{f_{0}\in\mathcal{F}_{d}^{C}}\left(\inf_{f\in\mathcal{F}_{d}^{S}}P(\ell\circ f) - P(\ell\circ f_{0})\right),
\]
despite the fact that the $f\in\mathcal{F}_{d}^{S}$ do not obey the defining features of elements in $\mathcal{F}_{d}^{C}$.
	
Take $\mathcal{F}_{d}^{C}=\mathcal{F}_{d}^{LC}$ and $\mathcal{F}_{d}^{S}=\bar{\mathcal{F}}_{d}^{S}$ (cf.~Table 2). A twice continuously differentiable function $h: \mathbb{R}^{d}\longrightarrow \mathbb{R}$ is concave if and only if $-\nabla \nabla^{T} h(y)$ is positive semi-definite for all $y\in\mathbb{R}^{d}$ \citep[][Theorem 4.5]{Rockafellar1970}. An $f\in\mathcal{F}_{d}^{LC} \cap \bar{\mathcal{F}}_{d}^{S}$ is thus an element of $\bar{\mathcal{F}}_{d}^{S}$ satisfying $-\nabla \nabla^{T} \log f(y)$ positive semi-definite. To simplify notation, let $\phi_{s}(y):=\phi(y;\mu_{s},\underline{q}I_{d})$ for $y\in\mathbb{R}^{d}$. We have
\[
\nabla \phi_{s}(y)=-\phi_{s}(y)\left(\frac{y-\mu_{s}}{\underline{q}}\right)
\]
and
\begin{eqnarray*}
\nabla \nabla^{T} \phi_{s}(y)&=&-\left(\nabla \phi_{s}(y)\left(\frac{y-\mu_{s}}{\underline{q}}\right)^{T}+\phi_{s}\nabla\left(\frac{(y-\mu_{s})}{\underline{q}}\right)^{T}\right) \\
														 &=&\frac{\phi_{s}(y)}{\underline{q}}\left(\left(\frac{y-\mu_{s}}{\sqrt{\underline{q}}}\right)\left(\frac{y-\mu_{s}}{\sqrt{\underline{q}}}\right)^{T} - I_{d}\right).
\end{eqnarray*}
Substituting in
\begin{eqnarray*}
& & -\nabla \nabla^{T}(\log f)(y)\\
&=& \left(\frac{1}{f^{2}}\nabla f (\nabla f)^{T} - \frac{1}{f}\nabla \nabla^{T} f\right)(y) \\
														&=& \frac{1}{f^{2}(y)}\left[\left(\sum_{s=1}^{S}\pi_{s}\nabla\phi_{s}(y)\right)\left(\sum_{s=1}^{S}\pi_{s}\nabla\phi_{s}(y)\right)^{T}\right] - \frac{1}{f(y)}\left[\sum_{s=1}^{S}\pi_{s}\nabla \nabla^{T} \phi_{s}(y) \right] \\
														&=& \frac{1}{f^{2}(y)}\left[\left(\sum_{s=1}^{S}\pi_{s}\phi_{s}(y)\left(\frac{y-\mu_{s}}{\underline{q}}\right)\right)\left(\sum_{s=1}^{S}\pi_{s}\phi_{s}(y)\left(\frac{y-\mu_{s}}{\underline{q}}\right)\right)^{T}\right]\\
														& & \qquad  + \quad \frac{1}{f(y)}\left[\sum_{s=1}^{S}\frac{\pi_{s}}{\underline{q}}\phi_{s}(y)\left(I_{d}-\left(\frac{y-\mu_{s}}{\sqrt{\underline{q}}}\right)\left(\frac{y-\mu_{s}}{\sqrt{\underline{q}}}\right)^{T}\right)\right],
\end{eqnarray*}
we see that $f$ is in $\mathcal{F}_{d}^{LC}\cap \bar{\mathcal{F}}_{d}^{S}$ if and only if the eigenvalues of
\[
\frac{1}{f(y)}\left[\sum_{s=1}^{S}\frac{\pi_{s}}{\underline{q}}\phi_{s}(y)\left(I_{d}-\left(\frac{y-\mu_{s}}{\sqrt{\underline{q}}}\right)\left(\frac{y-\mu_{s}}{\sqrt{\underline{q}}}\right)^{T}\right)\right]
\]
are non-negative for all $y\in\mathbb{R}^{d}$. This constraint is easily imposed (at high computational expense) at the optimisation stage via a constraint-violation penalty in the objective function (see Section \ref{sectionNumerical} for details of implementation in the unconstrained case).

Take $f_{0}\in\mathcal{F}_{d}^{LC}$ as the density of the uniform distribution on the $r$-radius disk, i.e.
\[
f_{0}(y)=\left\{\begin{array}{cl} 
					(\pi r^{2})^{-1} & \text{if } y_{1}^{2}+y_{2}^{2}<r^{2} \\
					0 & \text{otherwise}.
\end{array} \right.
\]
The choice of this example is motivated by knowledge of the fact that log-concave densities are necessarily unimodal, whilst it is clear that the only unimodal members of $\bar{\mathcal{F}}_{d}^{S}$ are the mixtures whose component means are sufficiently close together, with the exact proximity depending on $\underline{q}$. The result is that, although multimodal members of $\bar{\mathcal{F}}_{d}^{S}$ produce a more uniform dispersion of mass over the support of $f_{0}$, the restriction that the estimate lies in $\bar{\mathcal{F}}_{d}^{S}\cap \mathcal{F}_{d}^{LC}$ leads to an extremely poor estimate of $f_{0}$, which concentrates its mass over a small region of the support of $f_{0}$. Figure 3.~depicts a typical realisation of the log-concave mixture projection and the SPE based on a log-concave MLE pilot \citep{SamworthReadPaper}.

\vspace{8pt}

{\centering
(A)\includegraphics[trim=1.5in 4.0in 1.5in 4.2in, clip, height=0.20\paperwidth]{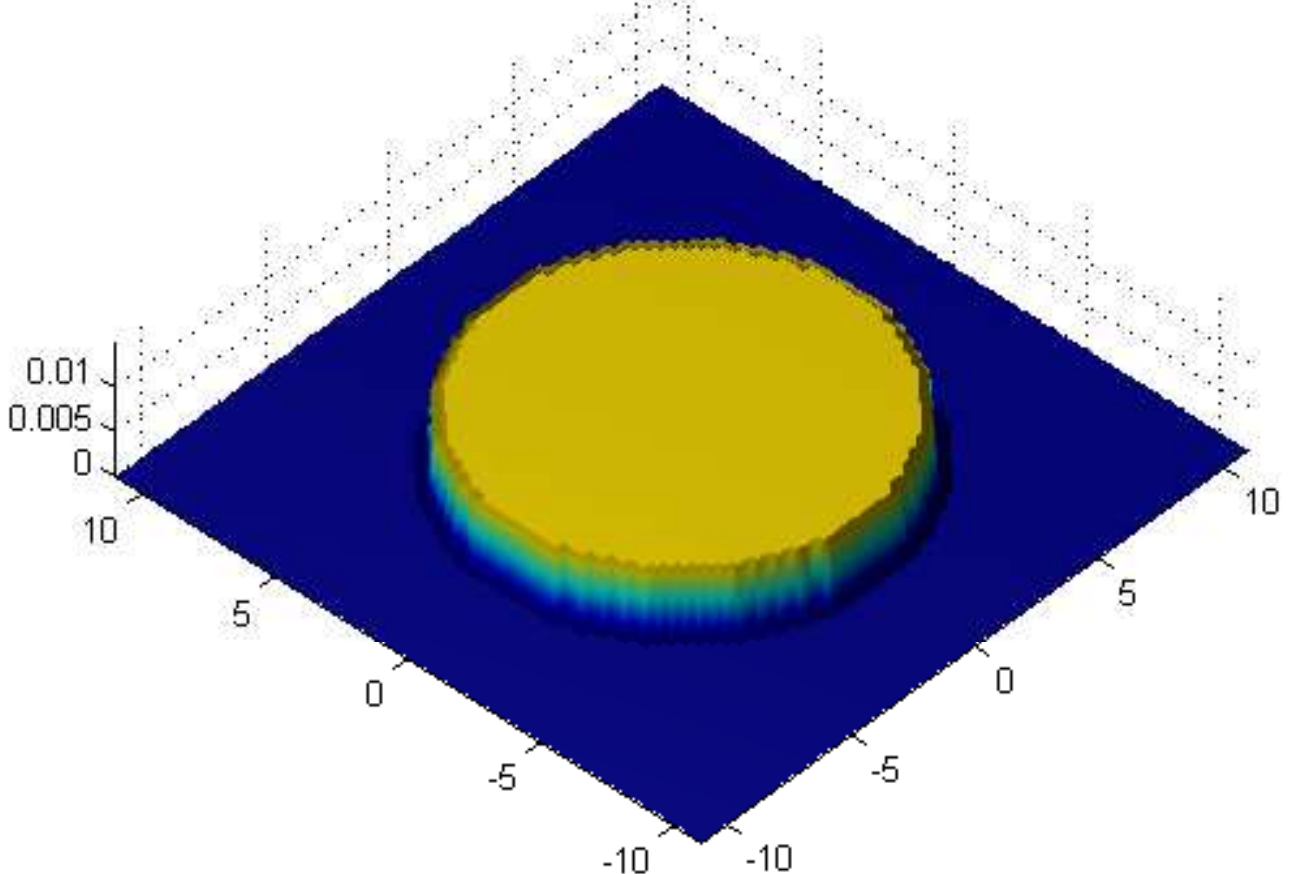}
(B)\includegraphics[trim=1.5in 4.0in 1.8in 4.2in, clip, height=0.20\paperwidth]{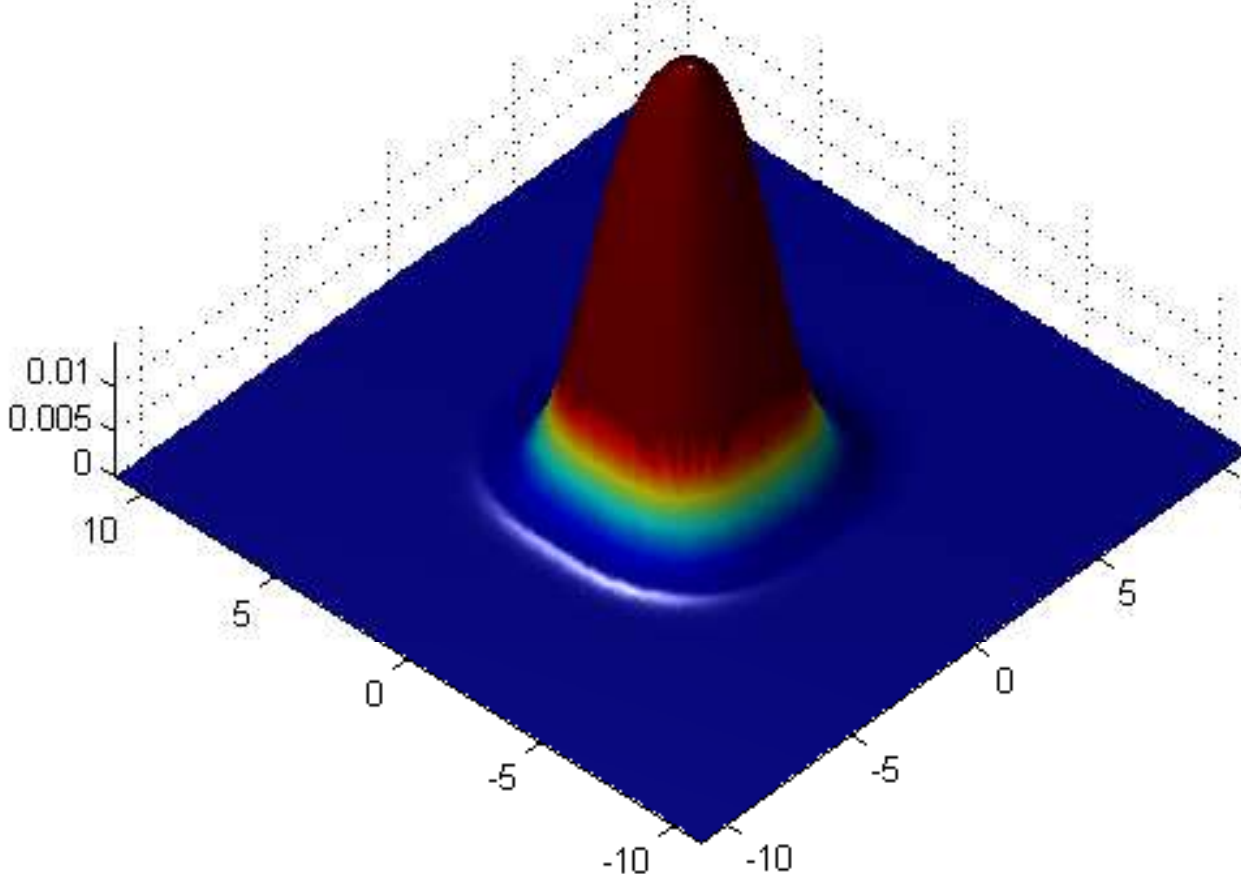}\\
(C)\includegraphics[trim=1.5in 4.0in 1.5in 4.2in, clip, height=0.20\paperwidth]{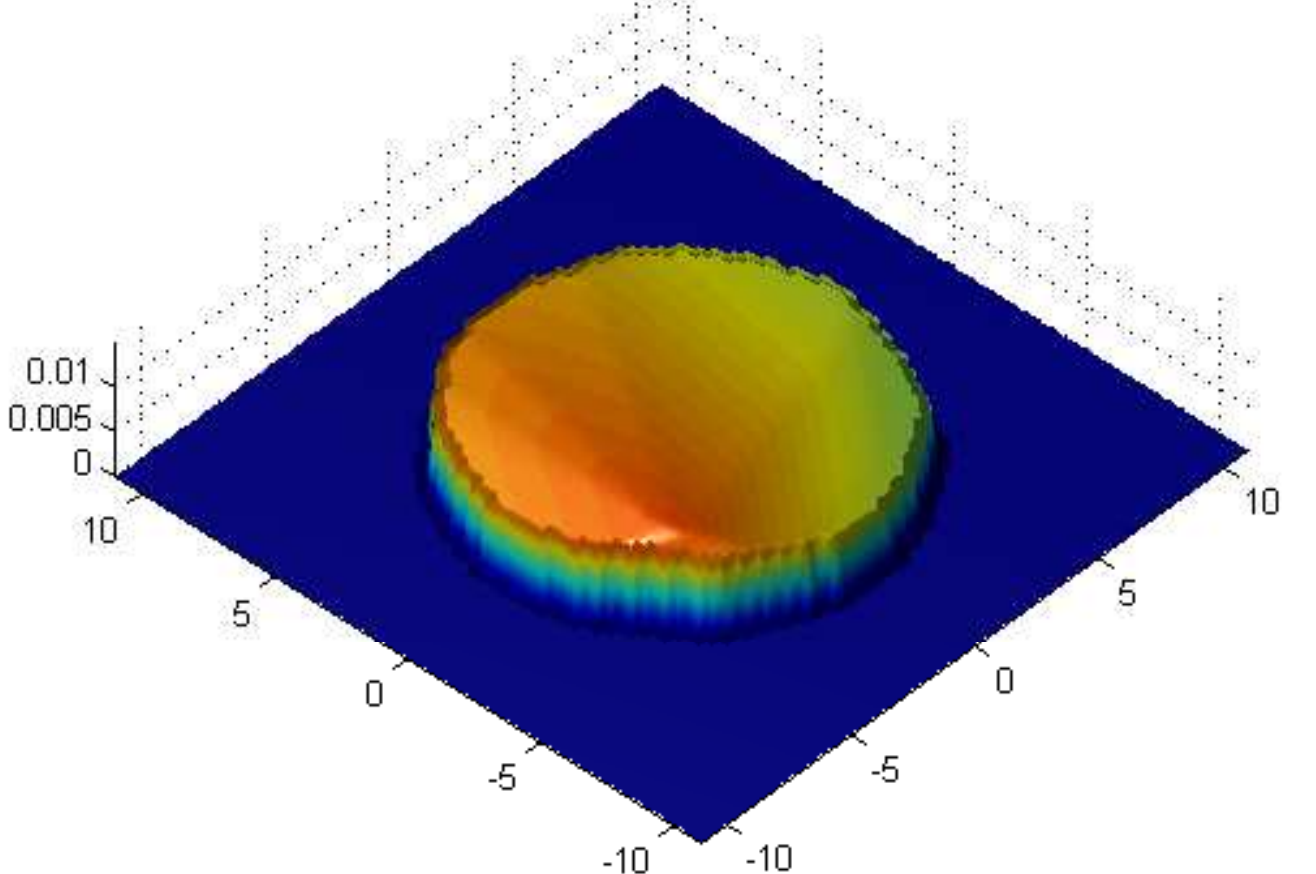}
(D)\includegraphics[trim=1.5in 4.0in 1.8in 4.2in, clip, height=0.20\paperwidth]{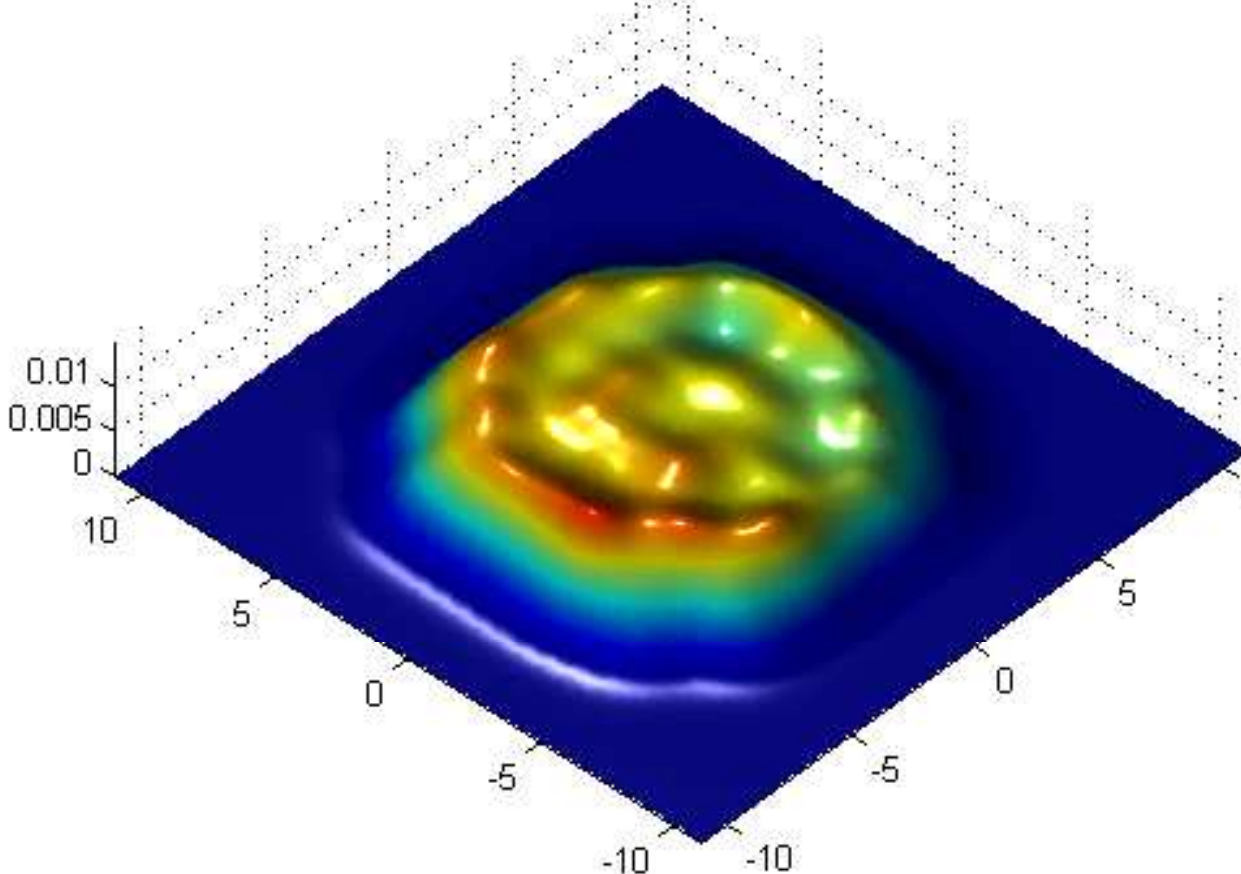}

}

\noindent \small{\textbf{Figure 3.} (A) true density (density of the uniform distribution on the radius 7 disk); (B) projection of $f_{n}$ onto the log-concavity constrained mixture class with 49 mixture components; (C) log-concave MLE; (D) SPE with 49 mixture components.}

\subsubsection{Further practical concerns}

\normalsize{Although the log-concavity condition leads to a relatively simple constraint set, other instances of $\mathcal{F}^{S,C}_{d}=\mathcal{F}_{d}^{S}\cap\mathcal{F}_{d}^{C}$ are extremely difficult to establish and computationally demanding to implement. Consider, for example $\mathcal{F}_{d}^{C}$ the class of density functions with a particular conditional independence structure; imposition of such conditional independence structure entails a highly nonlinear set of constraints composed of marginalised conditional mixture densities. Conditional independence structure imposed through a nonparametric filter is not preserved through projection, but nevertheless provides regularisation in multi-dimensional density estimation problems.}

Sampling from $\widehat{f}_{n}^{*}$ is particularly simple due to its constitution as a finite mixture of parametric densities; sampling from parametric densities is straight forward in view of the probability integral transform. By contrast, sampling from nonparametric estimators is typically difficult; consider for instance the log-concave MLE, which involves an accept-reject algorithm \citep[see][Appendix B.3]{SamworthReadPaper}.

\section{Numerical results} \label{sectionNumerical}

This section examines a particular concrete example from the class of SPEs. $\mathcal{F}_{d}^{S}$ is taken to be $\bar{\mathcal{F}}_{d}^{S}$ and $\widehat{f}^{P}$ is chosen from a large set of nonparametric densities, to be discussed below.

\subsection{Simulation performance}\label{sectionSimulation}

\normalsize{The first simulation study comprises 1000 pseudo random samples of size $n=50, 100, 250, 500$ from $f_{0}$, where $f_{0}$ is a bivariate density of one of the following forms:} 
\begin{itemize}
\item[(i)] Normal location-scale mixture density $\pi_{1}\phi_{\mu_{1},\Sigma_{1}}+\pi_{2}\phi_{\mu_{2},\Sigma_{2}}$ with $\pi_{1}=\pi_{2}=1/2$, $\mu_{1}=(1,2)^{T}$, $\mu_{2}=(-1,1)^{T}$, $\Sigma_{1}=[2,\hspace{1pt} -0.5;-0.5,\hspace{1pt} 1.5]$, $\Sigma_{2}=[4, \hspace{2pt} 0.9;0.9, \hspace{2pt} 1.5]$;
\item[(ii)] The bivariate density with independent $\Gamma(2,1)$ marginals; 
\item[(iii)] $\sum_{j=1}^{500}\pi_{j}\phi_{2,\mu_{j},0.7I}$, where $\pi_{1}=\pi_{2}=\ldots=\pi_{500}=1/500$ and $\mu_{j}=(4\cos(t_{j}), 4\sin(t_{j}))^{T}$, where $t_{1},\ldots,t_{500}$ are equally spaced points in $[0, 2\pi]$. 
\end{itemize}
(i) is a skewed unimodal density exhibiting dependence, (ii) is a log-concave density, and (iii) is a density concentrated on a non-convex domain. 

The algorithm used to estimate the weights and the parameters of the mixture representation is described in Section \ref{sectionAlgo}

\normalsize{Fig.~4 displays estimates resulting from a draw of size $n=50$ from the density of (iii), which is displayed in panel (A) of Fig.~1. (A) is the log concave maximum likelihood estimate (LCMLE) \citep{SamworthReadPaper}, (B) is the projection of the LCMLE on $\bar{\mathcal{F}}_{2}^{S}$, with scale parameter $\underline{q}=0.7$ and $S=36$, (C) is the least-squares cross-validated (LSCV) bandwidth KDE, computed in \verb+R+ using the \verb+ks+ package \citep{Duong} and (D) is the projection of the LSCV KDE.}

\vspace{8pt}

{\centering
(A)\includegraphics[trim=2.0in 3.8in 2.0in 3.8in, clip, height=0.25\paperwidth]{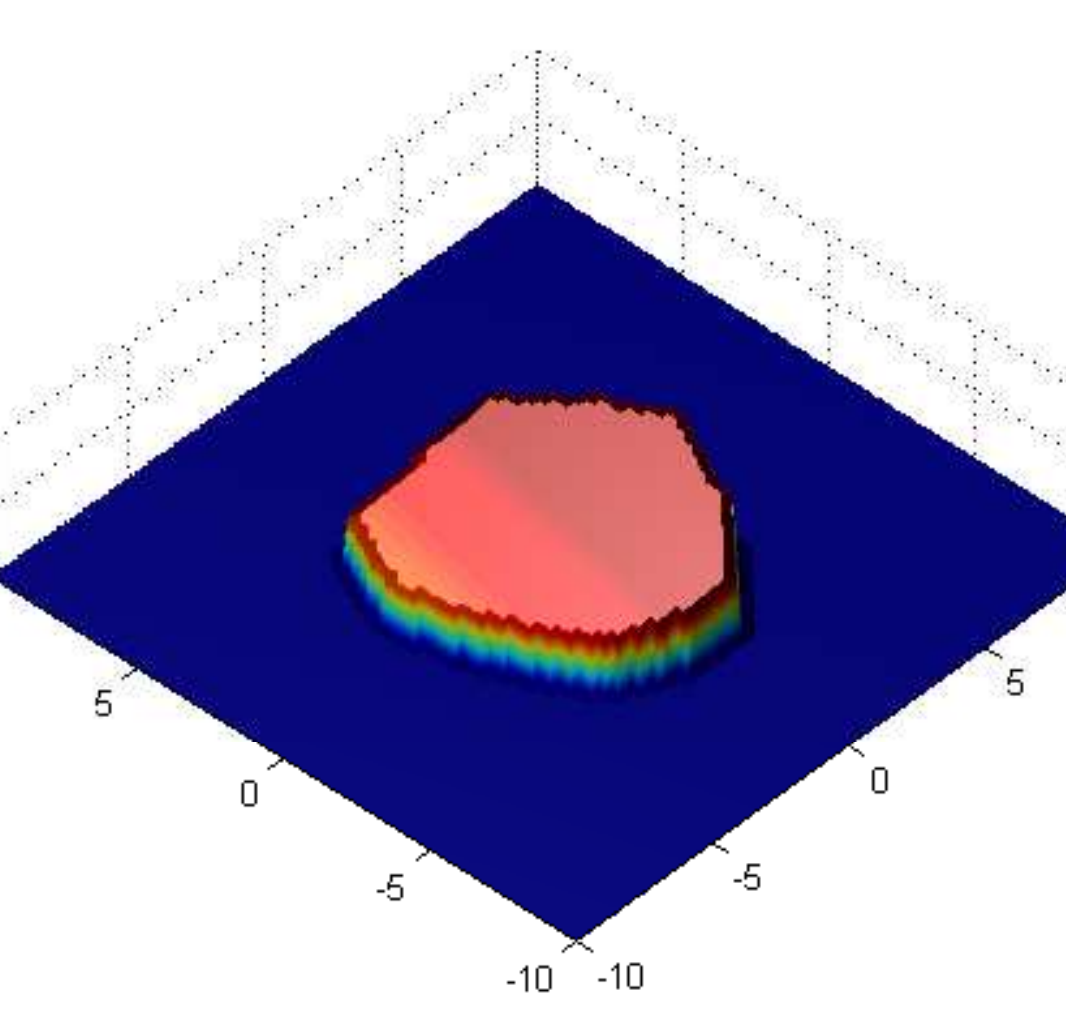} 
(B)\includegraphics[trim=2.0in 3.8in 2.0in 3.8in, clip, height=0.25\paperwidth]{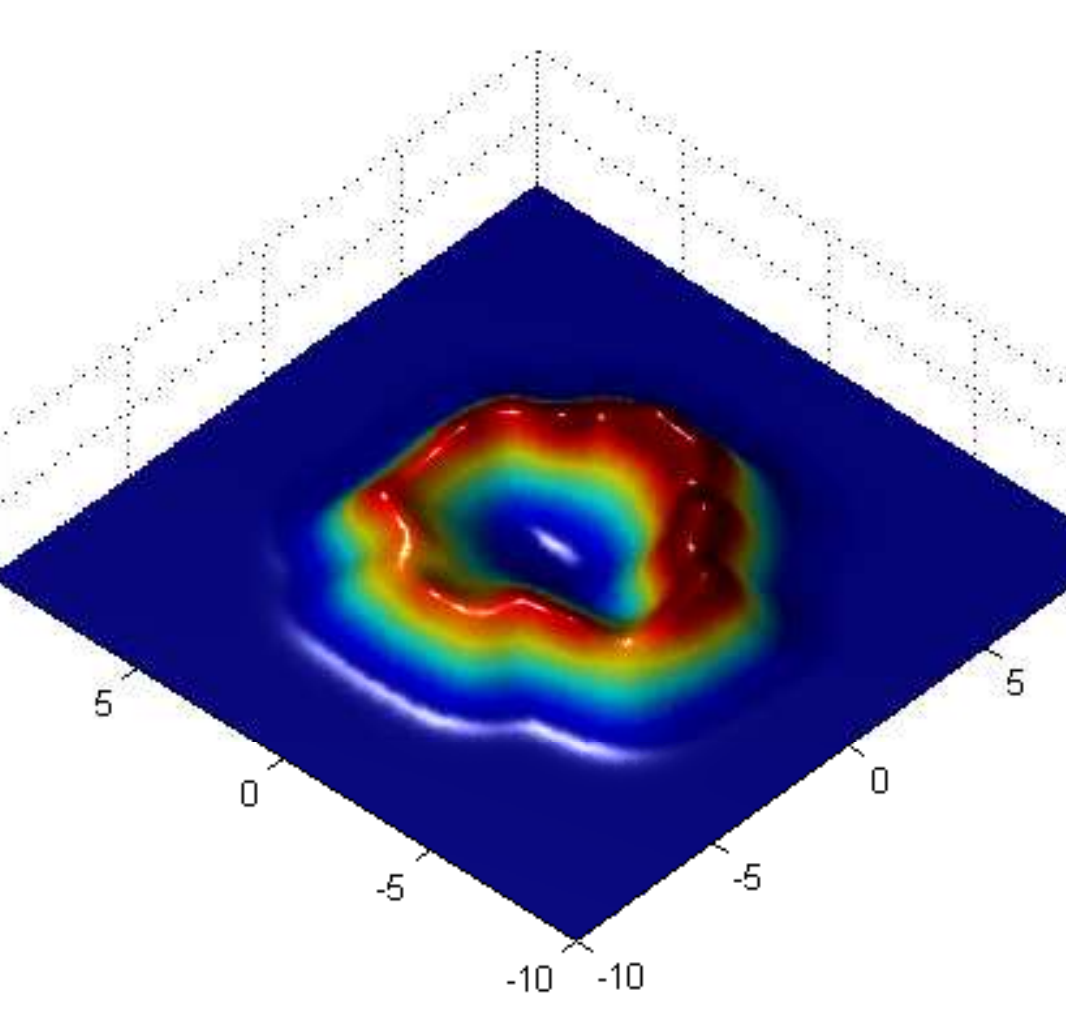} \\
(C)\includegraphics[trim=2.0in 3.8in 2.0in 3.8in, clip, height=0.25\paperwidth]{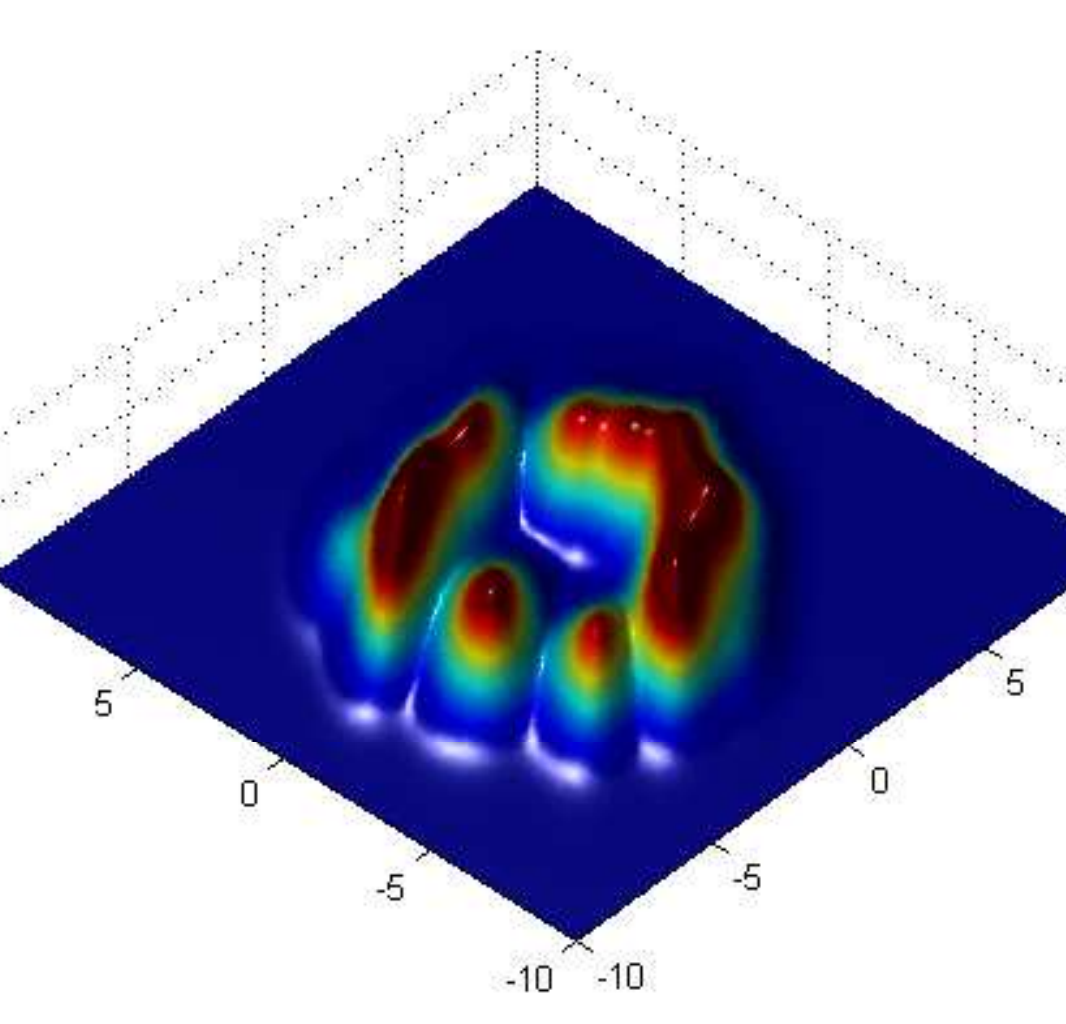} 
(D)\includegraphics[trim=2.0in 3.8in 2.0in 3.8in, clip, height=0.25\paperwidth]{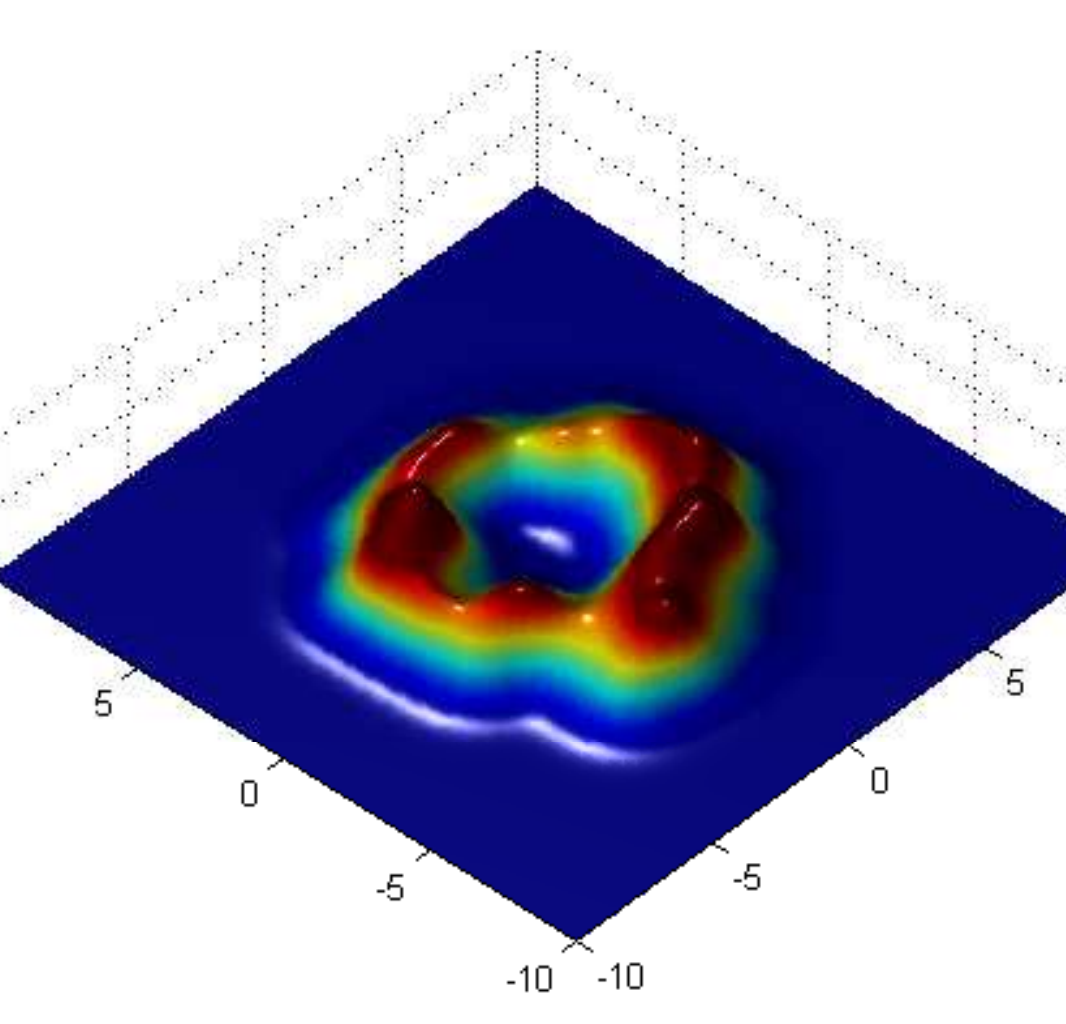}

}
\vspace{4pt}
\noindent \small{\textbf{Figure 4.} Estimates of the density in Fig.~2 (A) constructed as described above.}

\vspace{8pt}

\normalsize{Figures 5-7 illustrate the behavior of the integrated squared error (ISE) with increasing $n\in\{50,100,250,500\}$ for the density of (i)-(iii) respectively. We consider (1) the KDE with LSCV bandwidth (`K-CV'); (2) the projection of K-CV on $\bar{\mathcal{F}}_{2}^{S}$ with $S=36$ and $\underline{q}=0.7$ (K-CVProject); (3) the KDE with 2-stage PI bandwidth (`K-PI'); (4) the projection of K-PI (K-PIProject); (5) the log-concave MLE (`Lcd'); (6) the projection of Lcd (LcdProject); (7) the histogram (`Hist1') with coordinate-wise bin widths $(IQ)_{j}n^{-1/4}$ where $(IQ)_{j}$ is the inter-quartile range of $Y_{j}$; (8) the projection of (7) (`HistProject1'); (9) the perturbed histogram (`pHist1', see below for further details); (10) the projection of (9) (`pProject1'); (11)-(14) as (7)-(10) but with the coordinate-wise bin widths taken as $2(IQ)_{j}n^{-1/4}$; (15) the EM implementation of the Kiefer-Wolfowitz estimator with the same fixed scaling matrix $\underline{q}I_{d}$ across all location mixture components. The so called perturbed histogram is a mixture of 5 histograms, one with anchor point zero and the others with anchor points that are small perturbations from zero.}

For this two-dimensional scenario, EM is a competitive adversary, achieving a strong performance in terms of ISE over all data generating mechanisms whilst yielding a succinct parametric representation of the density. The SPE with histogram pilot estimator is also universally unintimidated over the various experiments.


{\centering
\includegraphics[trim=1.1in 0.0in 0.0in 3.5in, clip, height=0.39\paperwidth]{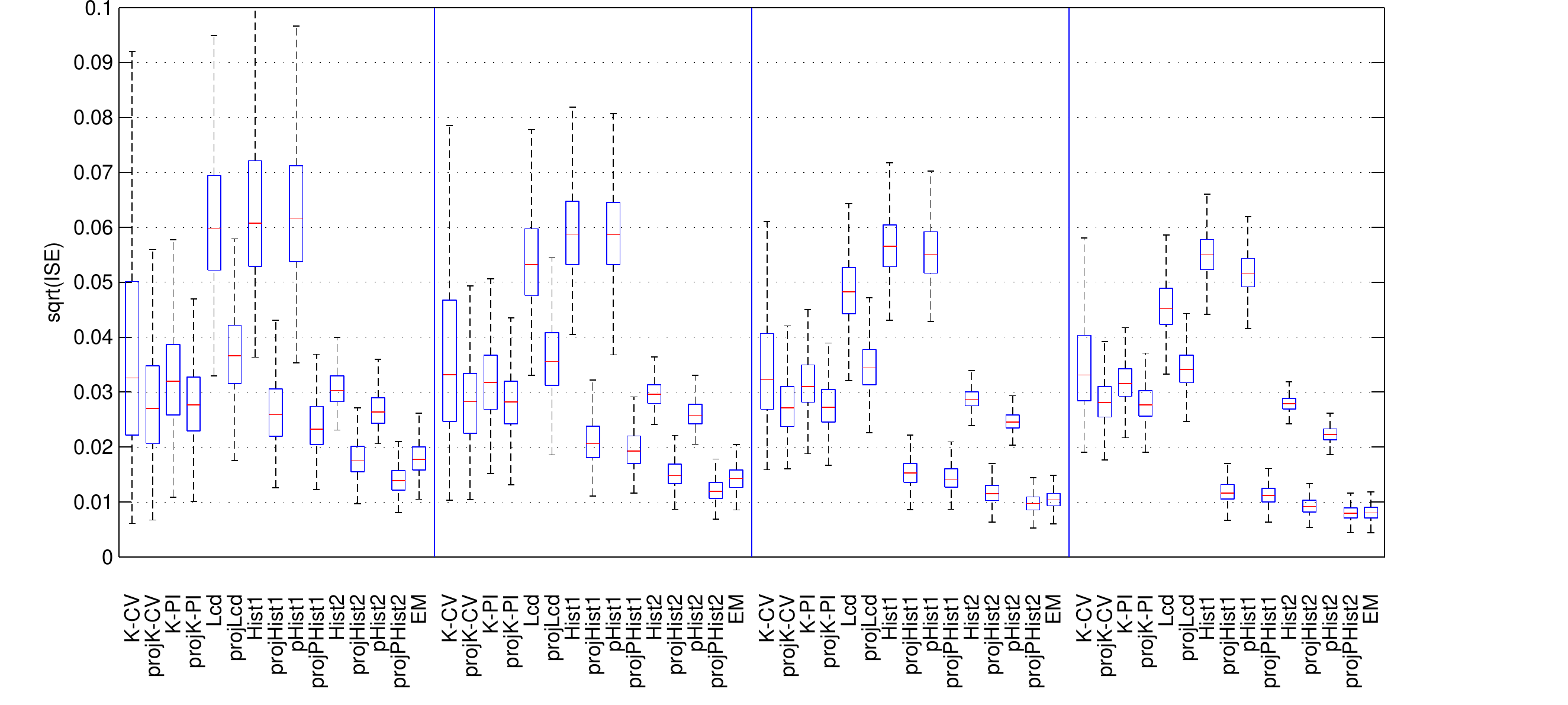} 

}
\noindent \small{\textbf{Figure 5.} Case (i): location-scale normal mixture. Left segment to right segment correspond to $n\in\{50, 100, 250, 500\}$.}

\bigskip

{\centering
\includegraphics[trim=1.1in 0.0in 0.0in 3.5in, clip, height=0.39\paperwidth]{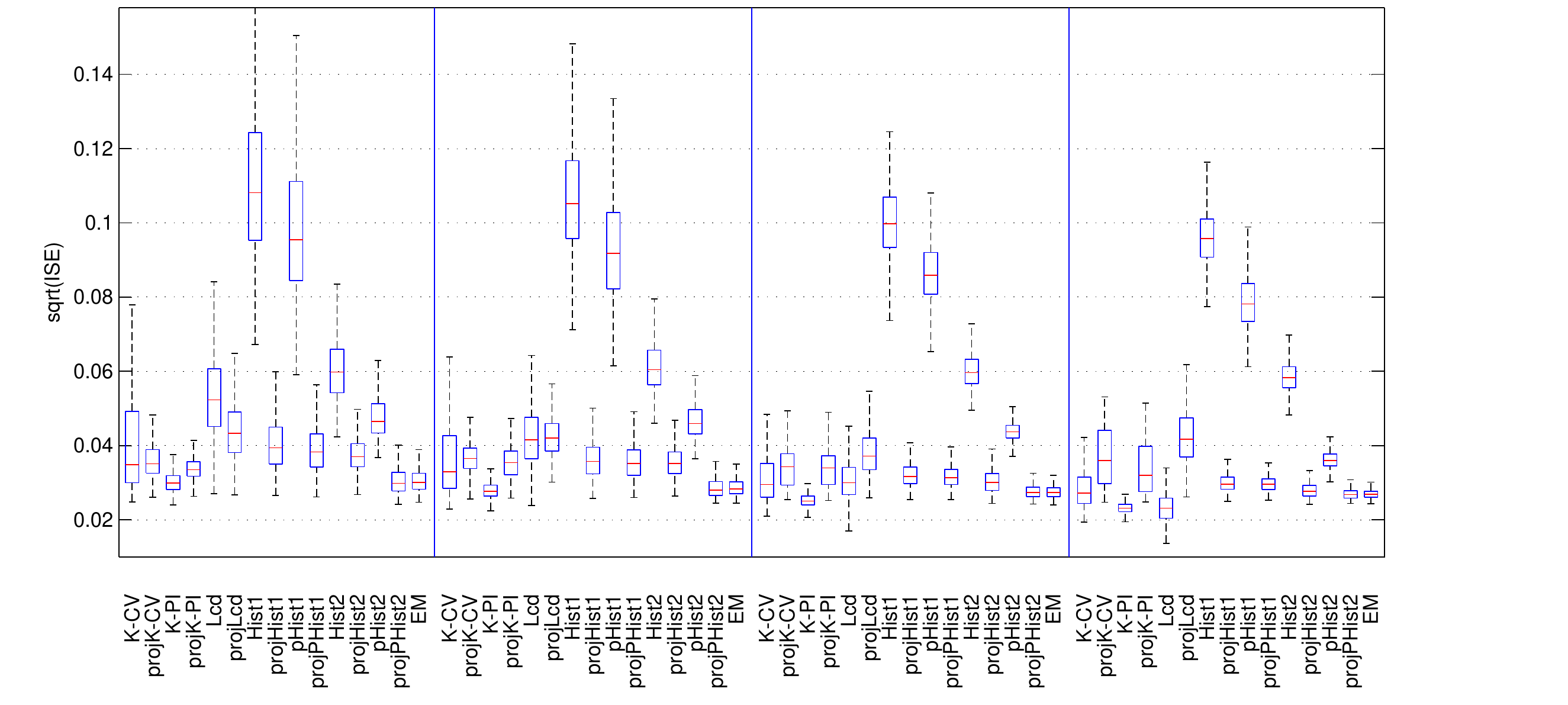} \\

}
\noindent \small{\textbf{Figure 6.} Case (ii): gamma. Left segment to right segment correspond to $n\in\{50, 100, 250, 500\}$.}

\bigskip

{\centering
\includegraphics[trim=1.1in 0.0in 0.0in 3.5in, clip, height=0.39\paperwidth]{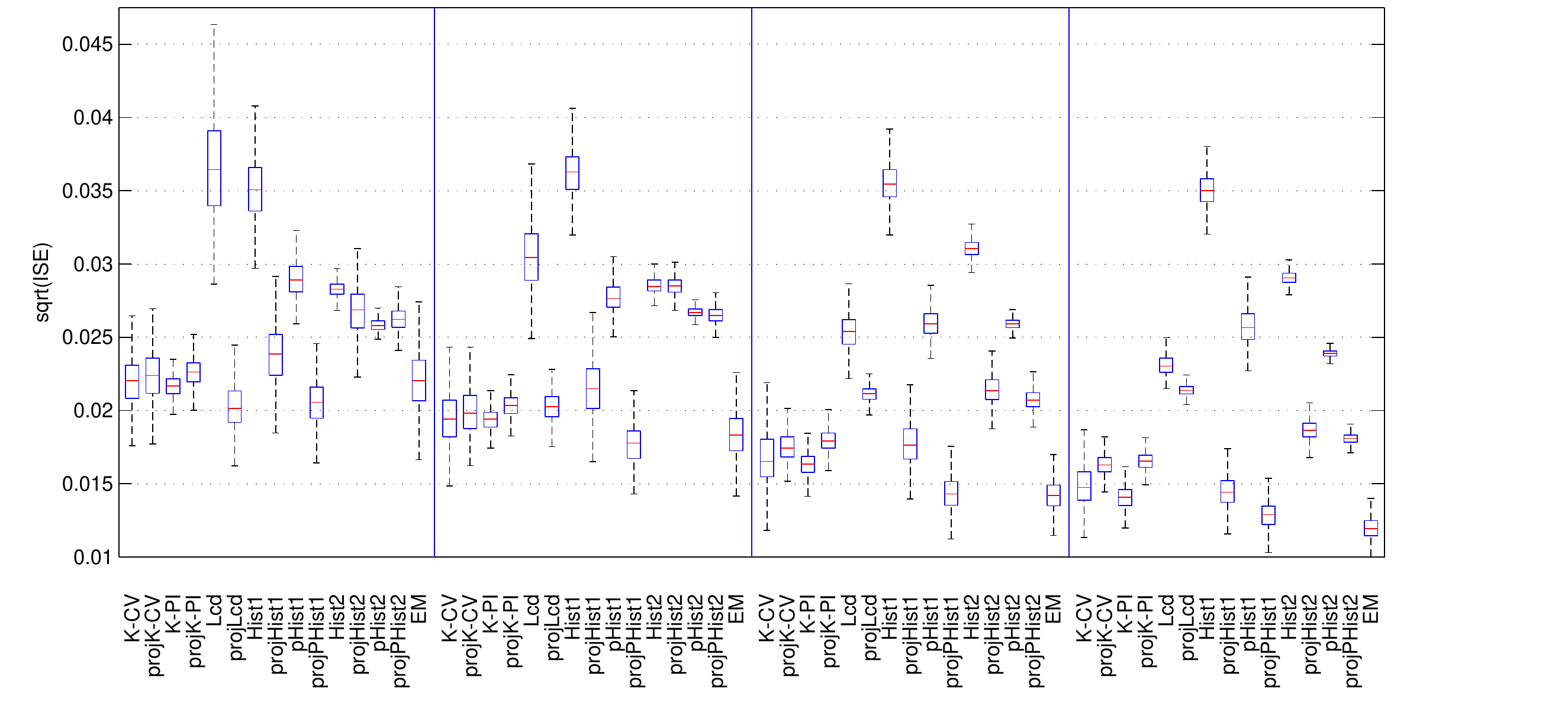} \\

}
\noindent \small{\textbf{Figure 7.} Case (iii): ring. Left segment to right segment correspond to $n\in\{50, 100, 250, 500\}$.}

\vspace{8pt}

\subsubsection{A five-dimensional case with graphical information}

\normalsize{In this section, we illustrate the benefit of exploiting structure in the pilot estimation stage, even though the structure is not ultimately enforced on the final estimate. Over 1000 Monte-Carlo replications, we generate $n\in\{50,100,250,500\}$ observations from a five-dimensional Gaussian graphical model with zero mean vector and inverse covariance matrix}
	\[A=\left[\begin{array}{ccccc} 
		3 & 0 & 0 & 0 & 0 \\
		0 & 5 & 0 & 1 & -1 \\
		0 & 0 & 2 & 0 & 0 \\
		0 & 1 & 0 & 2 & 0 \\
		0 & -1 & 0 & 0& 2
	\end{array}\right]	
	\]
We consider two pilot estimators for the five dimensional density $f_{0}$: a five dimensional histogram estimator (`Hist') and a histogram estimator that exploits the graphical structure in the sample (`graphHist'), constructed as
\[
\widehat{f}^{H}_{Y}=\widehat{f}^{H}_{Y_{5}|B_{2}}\widehat{f}^{H}_{Y_{4}|B_{2}}\widehat{f}^{H}_{Y_{2}}\widehat{f}^{H}_{Y_{1}}\widehat{f}^{H}_{Y_{3}}
\]
where $\widehat{f}^{H}_{Y_{j}|B_{k}}(y_{j}|y_{k}\in B_{k})$ with $B_{k}$ an arbitrary bin for the $k^{th}$ variable, where the bin width is taken as $2(IQ)_{k}n^{-1/2d}=2(IQ)_{k}n^{-1/10}$ where $(IQ)_{k}$ is the inter-quartile range of $Y_{k}$.

The SPE outperms EM for almost every combination of pilot estimator, $n$, and $S$ (see Fig.~8). Moreover, the advantage of exploiting the graphical structure in the nonparametric pilot estimation stage is clearly visible, with the SPE based on the graphical histogram strongly outperforming the SPE based on the agnostic histogram estimator in all situations, despite the agnostic histogram estimator itself having a stronger performance than the graphical histogram. The LSCV KDE performs extremely poorly in this high dimensional setting so is not reported, whilst the five-dimensional log-concave MLE and plug-in KDE are too computationally intensive for testing on a standard desktop computer and are therefore not considered.

{\centering
\includegraphics[trim=1in 3.0in 1in 0.3in, clip, height=7cm, width=10.5cm]{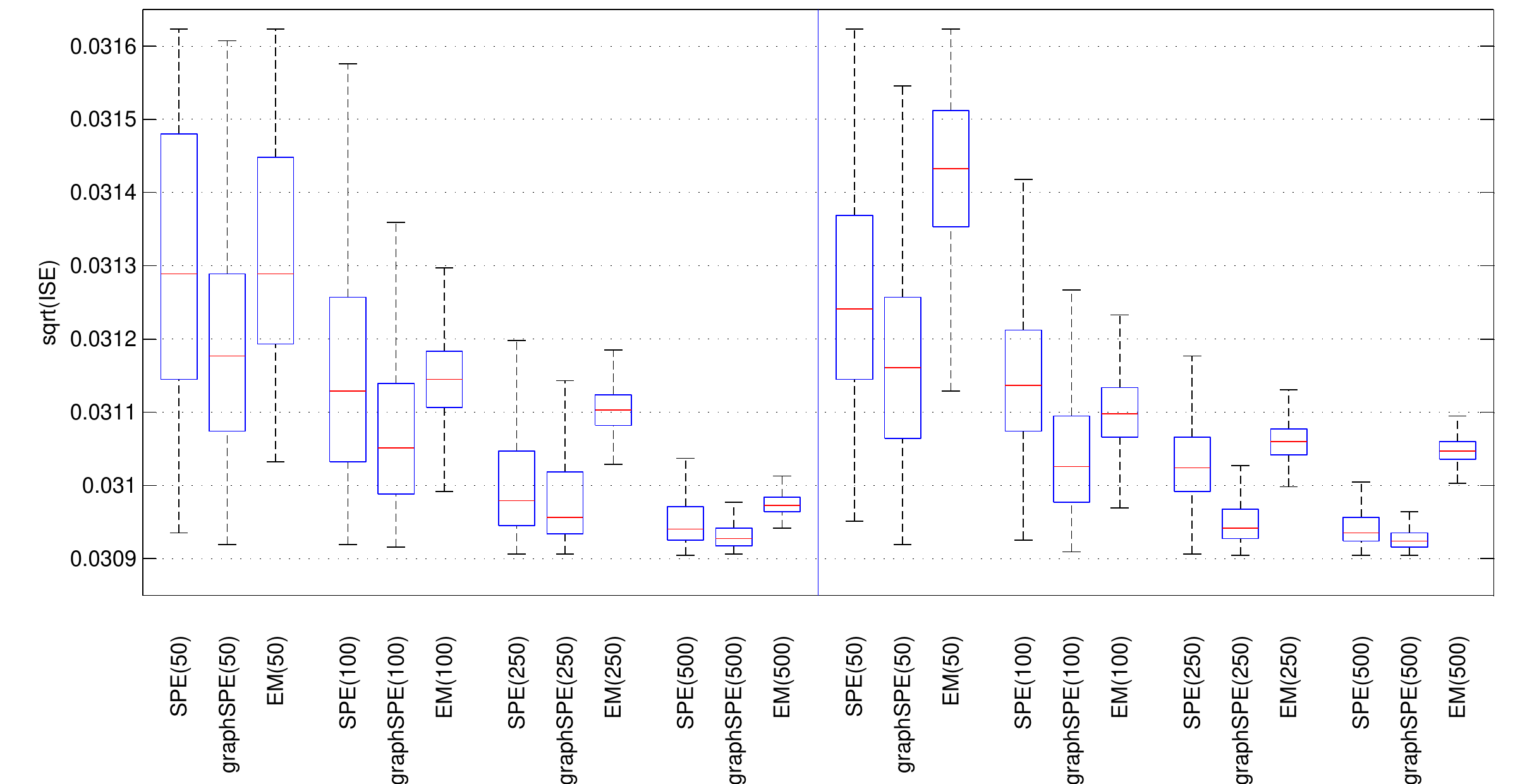}
\includegraphics[trim=1.2in 3.2in 0.5in 3.5in, clip, height=7cm, width=5.7cm]{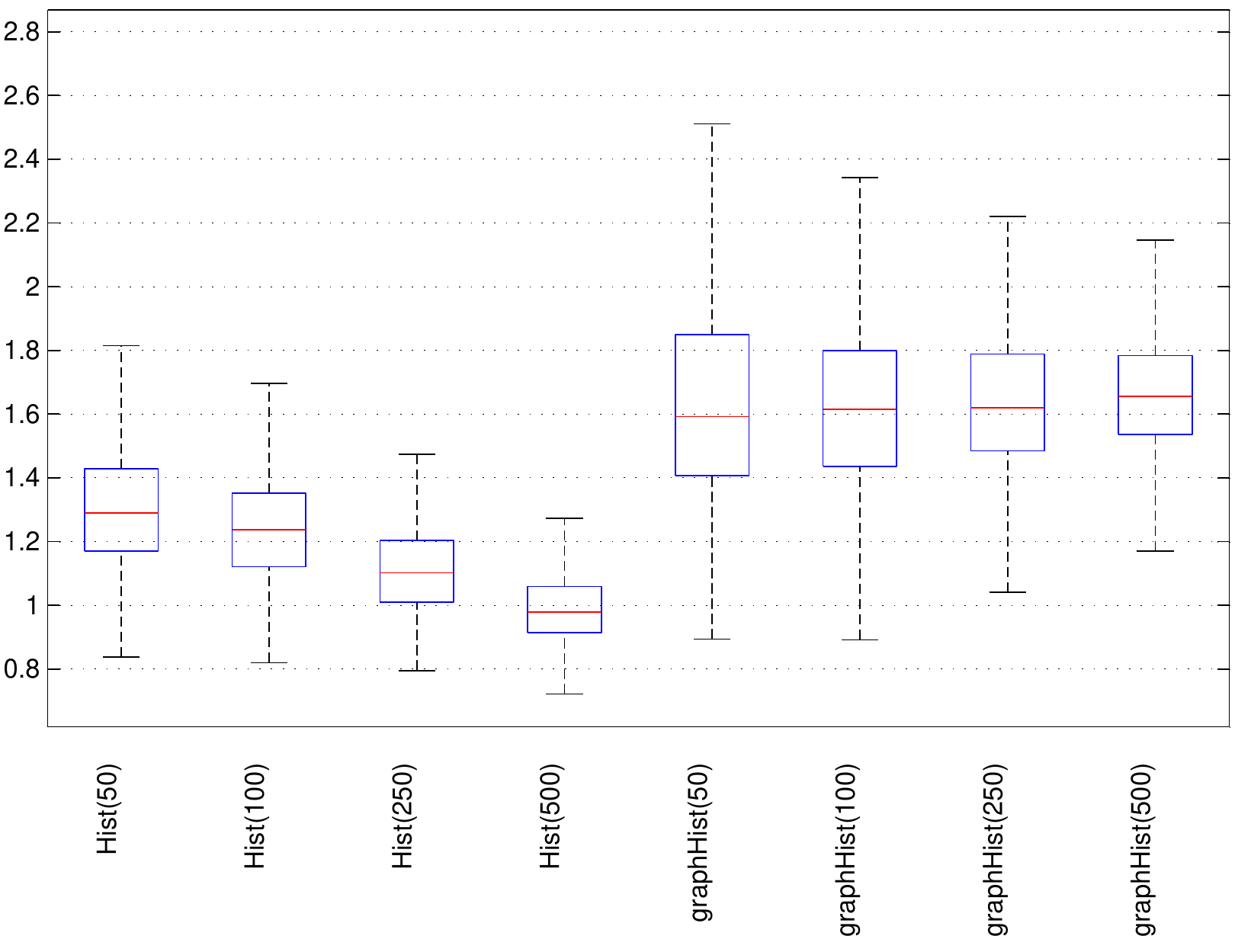}

}
\noindent \small{\textbf{Figure 8.} Left: performance (in terms of square root ISE) of EM, SPE with histogram pilot and SPE with graphical histogram pilot. Left panel corresponds to $2^5=32$ mixture components; right panel corresponds to $3^5=243$ mixture components. Right: Performance (in terms of square root ISE) of the histogram and graphical histogram before projection.}

\subsubsection{Least squares alternation algorithm}\label{sectionAlgo}

\normalsize{Projection of the histogram onto the class of spherical Gaussian mixtures involves solving the non-convex minimisation problem of \eqref{eqFiniteMinProblem}. We estimate $\pi_{1},\ldots, \pi_{S}$ and $\mu_{1},\ldots,\mu_{S}$ using a least-squares alternation algorithm, similar to that proposed by \citet{Yuan}.  The algorithm iteratively minimises \eqref{eqFiniteMinProblem} with respect to the mixing proportions, with the $\{\mu_{s}: s=1,\ldots,S\}$ held fixed (at their estimated values, or an initialisation in the first iteration), and then with respect to the $\{\mu_{s}: s=1,\ldots,S\}$ with the mixing proportions held at their estimated values at the previous iteration. These steps are repeated until convergence. With one set of unknowns held fixed, the minimisation is performed efficiently with standard quadratic program solvers. We suggest taking initial $\pi_{1},\ldots, \pi_{S}$ as the center point of the unit $S$-simplex, i.e.~$(1/S,\ldots, 1/S)$. The solver for $\mu_{1},\ldots,\mu_{S}$ also requires a set of starting values for $\mu_{1},\ldots,\mu_{S}$ in the first iteration; close inspection of the proof of Lemma \ref{lemmaWeakConvergence} reveals that a judicious choice of initial $\mu_{1},\ldots, \mu_{S}$ is to take them equally spaced in $\mathcal{M}=[-M, M]^{d}$. We may use this to guide our choice of $S$, as taking an $S$ that possesses an integer-valued $d^{th}$ root allows us to place the $\mu_{1},\ldots,\mu_{S}$ on a regular grid over an arbitrary pre-defined boundary.} 

\subsubsection{Robustness to choice of tuning parameters}

\normalsize{Figure 9 is based on 100 draws from the density in scenario (iii) for increasing bin widths in a histogram pilot estimation step. Whilst the performance of the perturbed histogram is affected substantially by the choice of bin width, the performance of its projection onto $\bar{\mathcal{F}}_{2}^{S}$ (with $\underline{q}=0.7$ and $S=64$) is substantially more stable. As highlighted in our theoretical results, the bandwidth that is optimal for pilot density estimation is not optimal for estimation with the SPE. More specifically, a degree of undersmoothing is required, as is visible in Figure 9.}

\vspace{8pt}

{\centering
\includegraphics[trim=1.1in 0.0in 0.0in 3.5in, clip, height=0.37\paperwidth]{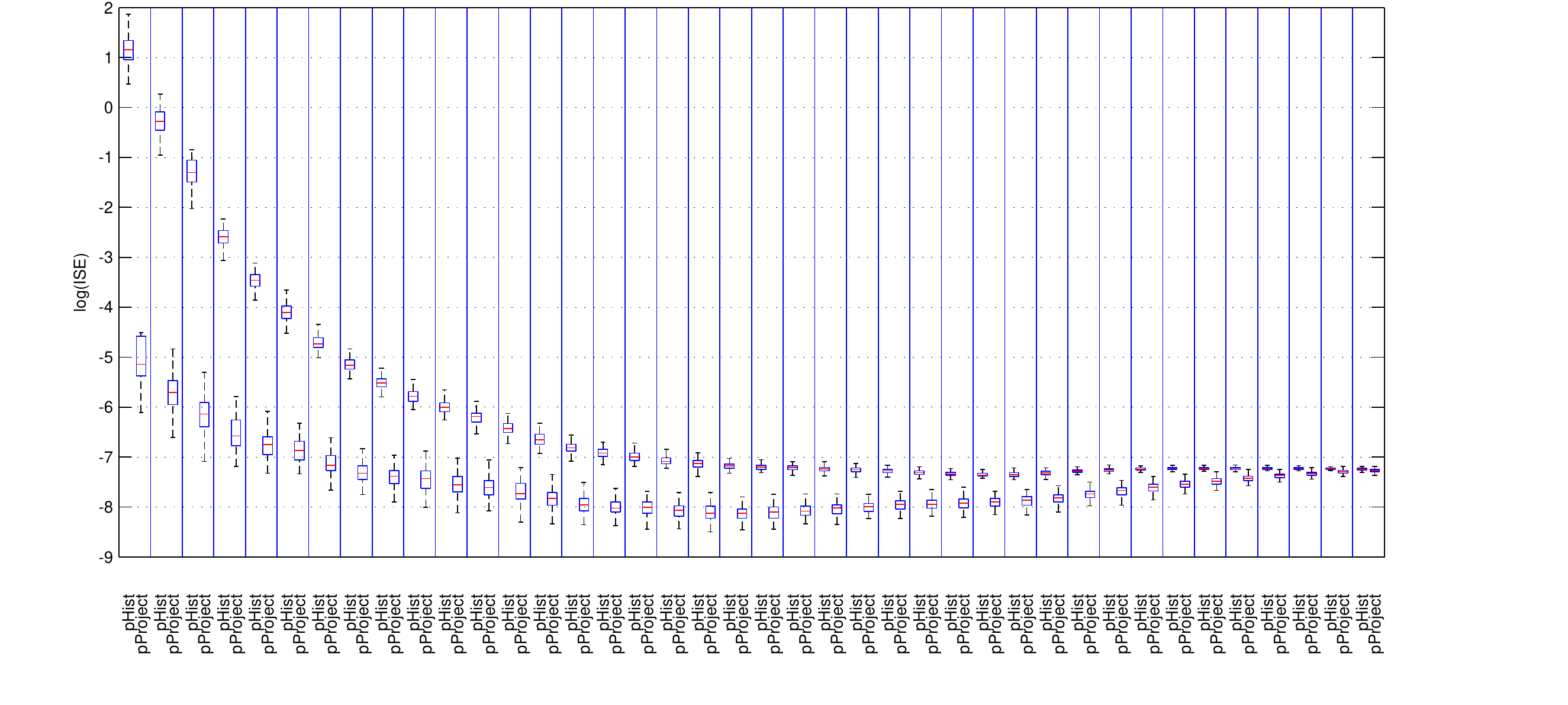}

}

\noindent \small{\textbf{Figure 9.} Robustness to changes in the histogram bin width. From left segment to right segment: increasing constant $c$ in the histogram bin width $h_{n,j}=c(IQ)_{j}n^{-1/2d}$ from 0.05 to 2 in the ring example for a fixed sample size of $n=100$.}

\vspace{8pt}

\normalsize{To illustrate the degree of sensitivity of the SPE to different choices of $\underline{q}$, Fig.~10 provides a heatmap of the log mean ISE, the log median ISE and the log of the variance of the ISE, computed from 100 Monte Carlo replications for each bin width parameter and for each value of $\underline{q}$. Bin widths are increased in 0.05 intervals from 0.05 to 2 and $\underline{q}$ is increased in 0.05 intervals from 0.4 to 2. The red dot indicates the minimum.}

\vspace{8pt}

{\centering
\includegraphics[trim=1.5in 3.5in 1.5in 3.9in, clip, height=0.2\paperwidth]{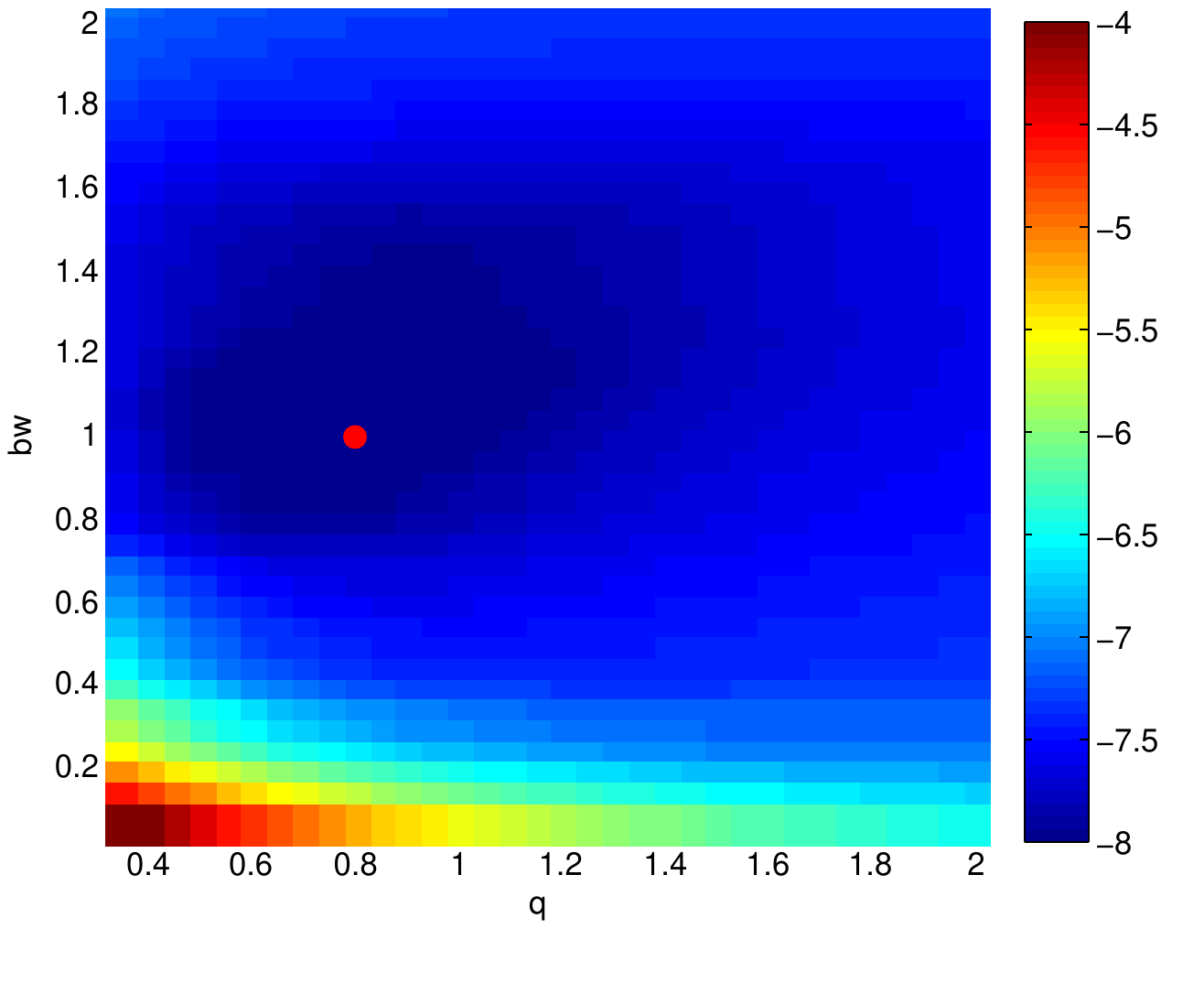}\includegraphics[trim=1.5in 3.5in 1.5in 3.9in, clip, height=0.2\paperwidth]{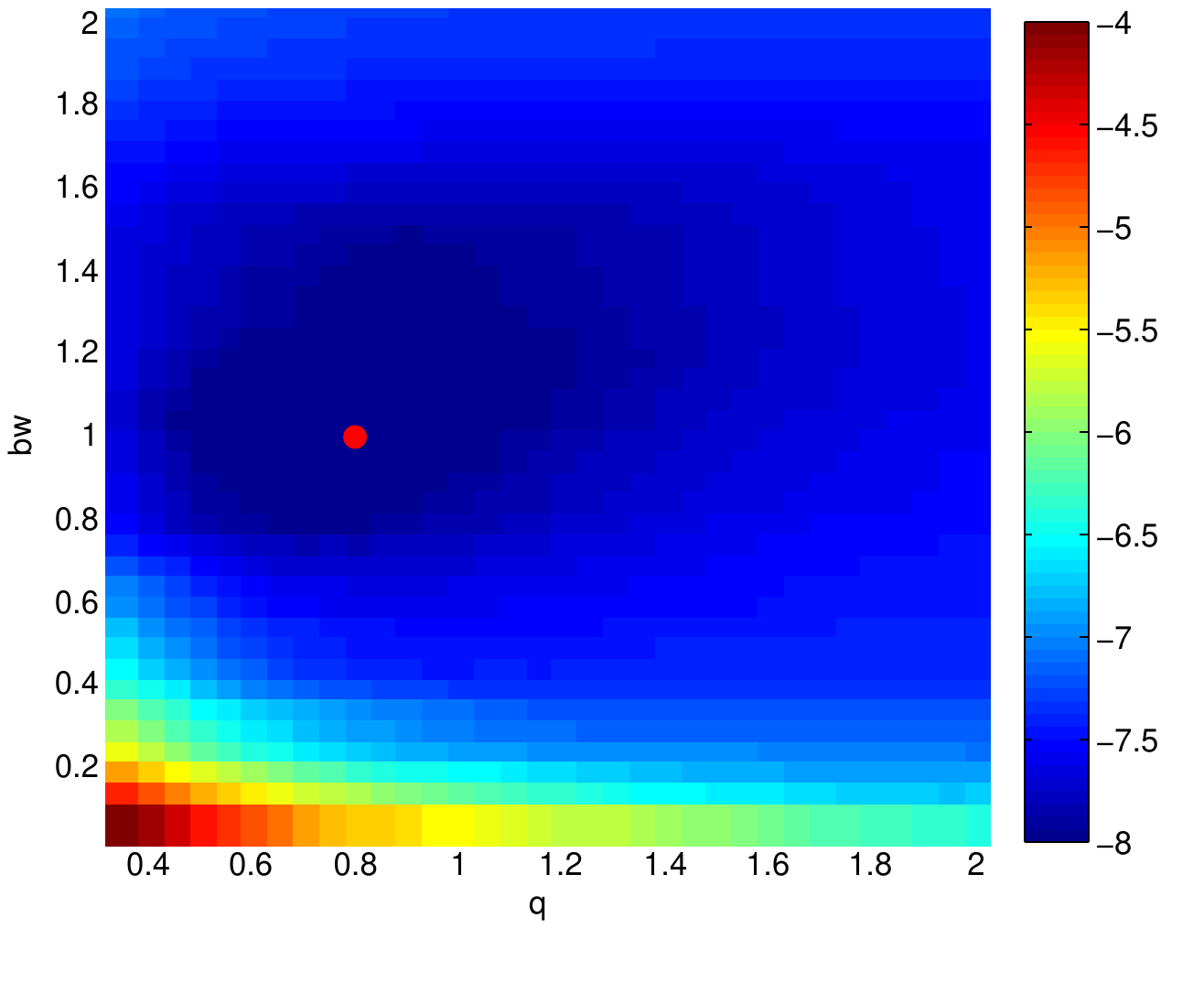}\includegraphics[trim=1.5in 3.5in 1.5in 3.9in, clip, height=0.2\paperwidth]{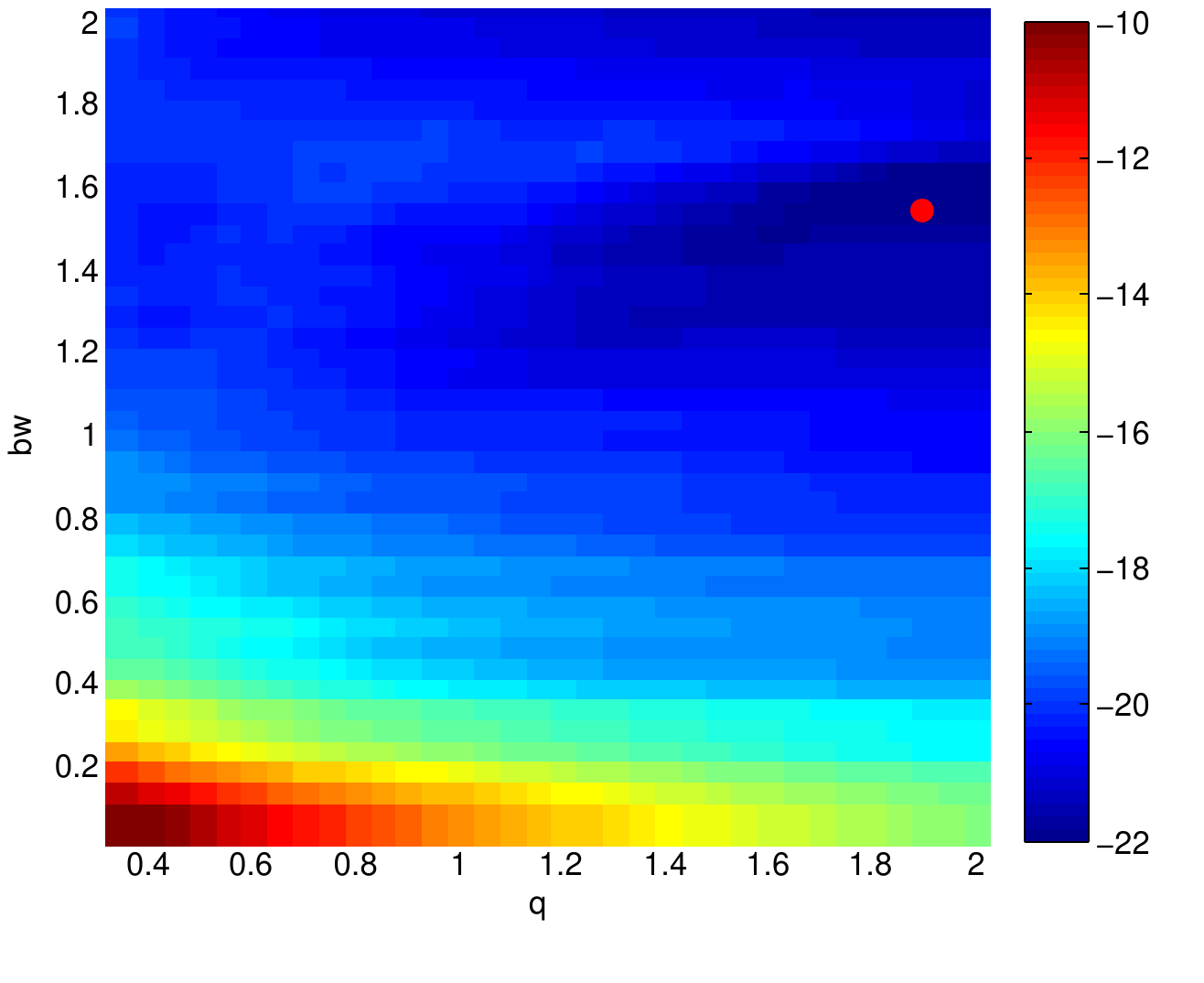}

}

\noindent \small{\textbf{Figure 10.} Joint sensitivity to histogram bin width and $\underline{q}$. From left to right: mean(ISE), median(ISE), var(ISE).}

\subsection{Real data example}\label{sectionRealData}

\normalsize{A dataset that is commonly considered in the context of multivariate density estimation \citep[see e.g.][]{Liebscher, SamworthReadPaper} is the Wisconsin breast cancer (diagnostic) dataset, publically available on the UCI Machine Learning Repository website:} \newline \href{http://archive.ics.uci.edu/ml/datasets/Breast+Cancer+Wisconsin+\%28Diagnostic\%29}{http://archive.ics.uci.edu/ml/datasets/Breast+Cancer+Wisconsin+\%28Diagnostic\%29}.
\newline
\normalsize{It consists of 30 real-valued continuous attributes based on the cell nuclei of 569 breast tumor patients, of which 212 instances are malignant and 357 instances are benign, along with a variable indicating whether the tumor was malignant or benign. The dataset is discussed in further detail in \citet{Street1993}.

{\centering
\includegraphics[trim=1.6in 3.8in 1.6in 3.8in, clip, height=0.29\paperwidth]{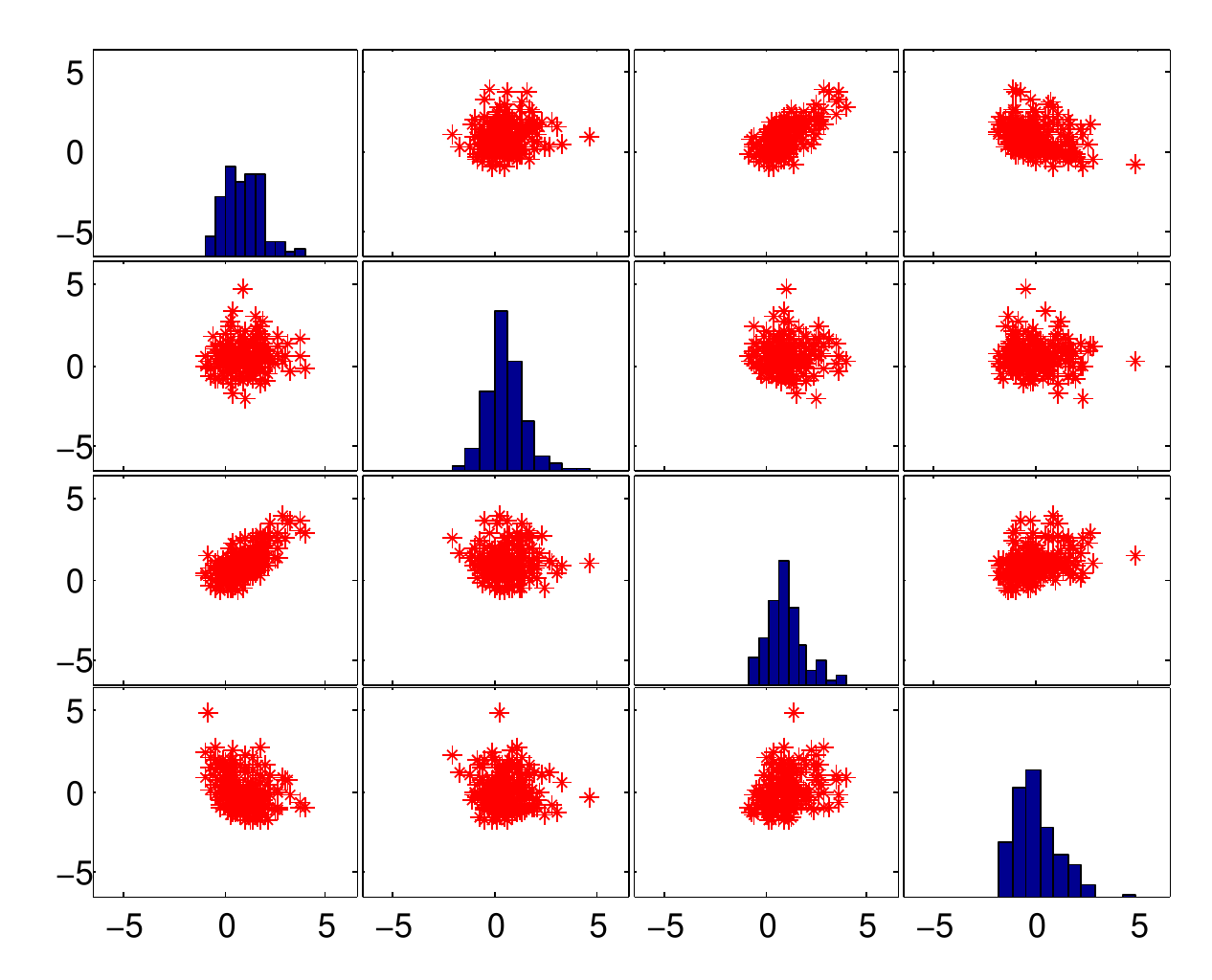}
\includegraphics[trim=1.6in 3.8in 1.6in 3.8in, clip, height=0.29\paperwidth]{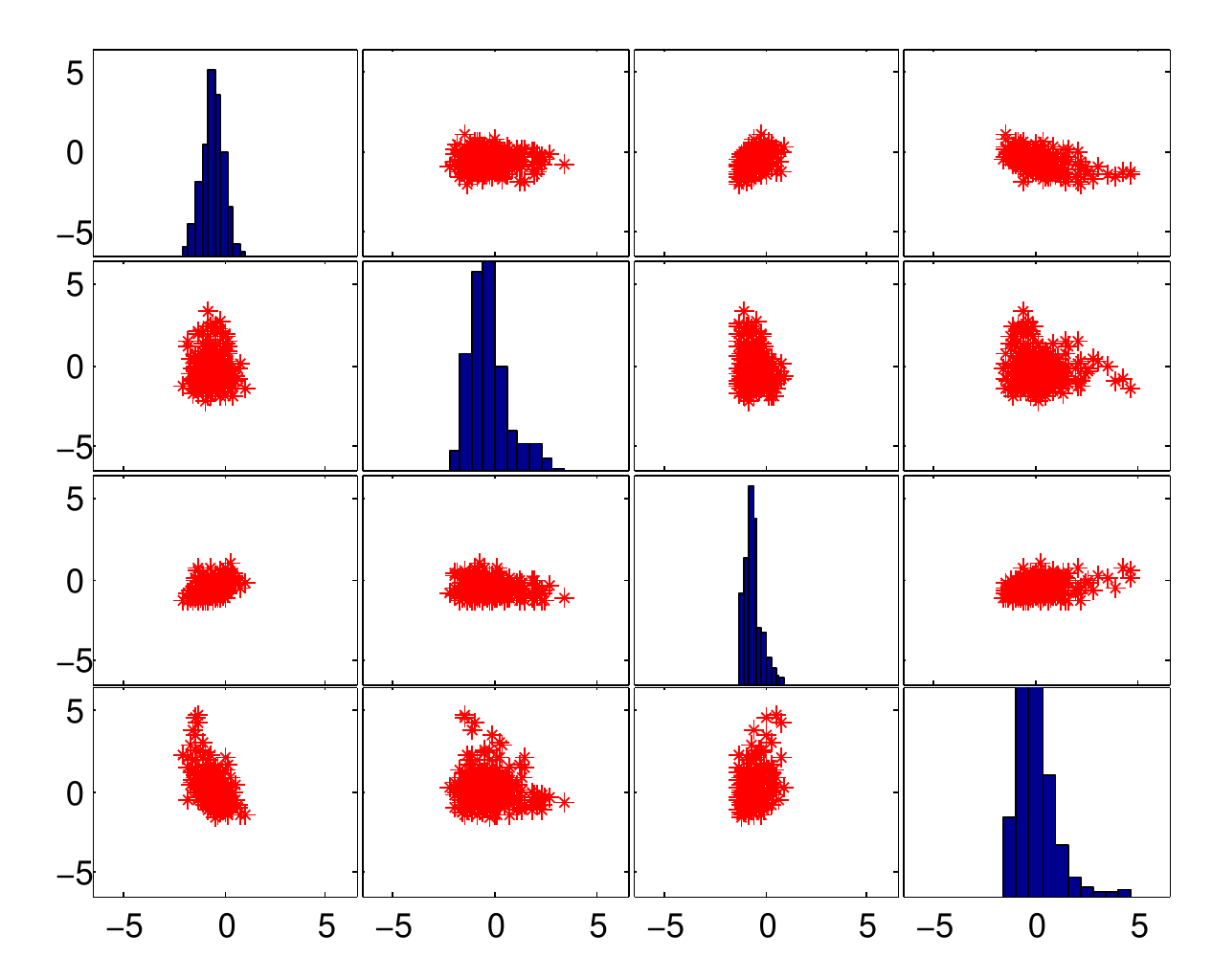}\\

}
\noindent \small{\textbf{Figure 11.} Scatterplot matrices of mean radius, mean texture, mean number of concave portions and mean fractal dimension for malignant (left) and benign (right) groups.}

\normalsize{We consider four variables from this data set: mean radius, mean texture, mean number of concave portions and mean fractal dimension. Scatterplot matrices for the benign and malignant cases are displayed in Fig.~11. Our experiment aims to classify individuals into the malignant or benign class based on observations on these four variables alone, i.e.~the class labels are ignored and are only revealed for testing the performance of the classifier. To this end, we construct a test set of 50 randomly drawn observations from the sample of 569 observations; amongst these test observations, $n^{o}_{M}$ cases correspond to malignant tumours and $n^{o}_{B}$ to benign. On the remaining 519 observations, we construct four-dimensional density estimates based on training sets of $n_{M}$ patients in the malignant group and $n_{B}$ patients in the benign group. These estimates are constructed using three different estimators: the SPE with histogram pilot estimator with bandwidth $2(IQ_{j})h_{n}$ in each coordinate direction and a dictionary of $3^4$ spherical Gaussian densities, the KDE with cross validated bandwidth, and the log-concave maximum likelihood estimator. Notice that $n_{M}^{o}$, $n_{B}^{o}$, $n_{M}$ and $n_{B}$ are random variables.}

\normalsize{Let $\widehat{f}$ denote an arbitrary density estimator and define the posterior probabilities associated with a Bayes classifier by}
\begin{equation}\label{bayesClassifier}
\widehat{P}(1|y)=\frac{\widehat{f}^{1}(y|1)\widehat{P}(1)}{\widehat{f}(y|1)\widehat{P}(1)+\widehat{f}(y|0)\widehat{P}(0)} \qquad \text{and}\qquad \widehat{P}(0|y)=\frac{\widehat{f}^{0}(y|0)\widehat{P}(0)}{\widehat{f}(y|1)\widehat{P}(1)+\widehat{f}(y|0)\widehat{P}(0)}\;.
\end{equation}
where $\hat{f}(y|1)$ and $\hat{f}(y|0)$ are density estimates obtained from the training observations in the malignant and benign group respectively, and $\hat{P}(1)$ and $\hat{P}(0)$ denote the estimated probability of being in the malignant and benign groups respectively; these quantities are obtained using the empirical proportions of malignant and benign cases in the training set (i.e.~$n_{M}/(n_{M}+n_{B})$ and $n_{B}/(n_{M}+n_{B})$ respectively). Letting $\{y_{i}: i=1,\ldots, 50\}$ denote the vector of observations for patient $i$ in the test set, the relative magnitudes of $\hat{P}(1|y_{i})$ and $\hat{P}(0|y_{i})$ determine whether individual $i$ with attribute vector $y_{i}$ is assigned to group 1 (malignant) or group 0 (benign). More specifically, individual $i$ is classified into group $j\in\{0,1\}$ if $\widehat{P}(j|y_{i})>\widehat{P}(\{j\}^{c}|y_{i})$. If the estimated posterior probabilities of a patient being the malignant and benign groups are equal, we use the pessimistic default rule of classifying the patient into the malignant group. A similar experiment to the one described above (but using the first two principal component scores rather than the variables themselves) appears in an early preprint version of \citet{SamworthReadPaper}.

We repeat the above experiment 500 times, recording the misclassification rates for the LSCV KDE and the log-concave MLE in each case, i.e.~letting $m^{o}$ denote the total number of misclassifications, the misclassification rates recorded are $m^{o}/50$, which are then averaged over the 500 experiments.

\vspace{12pt}

{\centering

\begin{tabular}{|c|c|c|c|}
\hline
                                          & KDE(CV) & LogConcave & SmoothProject \\
\hline
Mean $m^{o}/50$        &  0.0934   &  0.1596    &   0.0799    \\
Std dev. $m^{o}/50$ &  0.0424   &  0.0520    &   0.0387    \\
\hline
\end{tabular}

}
\vspace{4pt}
\noindent \small{\textbf{Table 3}: Out-of-sample misclassification rates of the Bayes classifier based on the indicated density estimator.}

\vspace{8pt}
\normalsize{}


\appendix
\section{Further propositions}\label{sectionAppendix}

\begin{proposition}\label{propContainment}
Let
\[
\mathcal{F}^{\dagger}:=\left\{f(\cdot): f(y)=\int_{\mathcal{M}}\phi(y;\mu,q \hspace{1pt}\Omega^{\dagger})dG(\mu), \; G\in \mathcal{G}\right\}
\]
where $q \hspace{1pt}\Omega^{\dagger}$ is an arbitrary covariance matrix whose diagonal elements are all equal to $q$ for $q$ fixed in $[\underline{q},\bar{q}]$. Then $\mathcal{F}^{\dagger}\subseteq \mathcal{F}_{d}^{\mathcal{G}}$.
\end{proposition}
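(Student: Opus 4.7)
The plan is to show that any Gaussian density with covariance $q\Omega^{\dagger}$ is itself expressible as a mixture of spherical Gaussians with covariance $\underline{q}I_d$; elements of $\mathcal{F}^{\dagger}$ then inherit such a representation, placing them in $\mathcal{F}_d^{\mathcal{G}}$. The workhorse is the standard Gaussian convolution identity: whenever $\Sigma - \underline{q}I_d$ is positive semi-definite, the sum $W + V$ of independent $W \sim N(\mu, \Sigma - \underline{q}I_d)$ and $V \sim N(0, \underline{q}I_d)$ is distributed as $N(\mu,\Sigma)$, yielding the pointwise identity
\[
\phi(y;\mu,\Sigma) \;=\; \int_{\mathbb{R}^d} \phi(y;\nu,\underline{q}I_d)\,\phi(\nu;\mu,\Sigma - \underline{q}I_d)\,d\nu.
\]

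The first step is to verify that $q\Omega^{\dagger} - \underline{q}I_d$ is positive semi-definite, which is the natural reading of the condition $q \in [\underline{q},\bar{q}]$ together with the constraint that $q\Omega^{\dagger}$ is a covariance matrix whose common diagonal value $q$ dominates $\underline{q}$. With this in hand, I would take an arbitrary $f \in \mathcal{F}^{\dagger}$ with mixing distribution $G$, substitute the convolution identity into the definition of $f$, and invoke Fubini's theorem (justified by non-negativity of the Gaussian integrands) to obtain
\[
f(y) \;=\; \int_{\mathcal{M}} \phi(y;\mu,q\Omega^{\dagger})\,dG(\mu) \;=\; \int_{\mathbb{R}^d} \phi(y;\nu,\underline{q}I_d)\,h(\nu)\,d\nu,
\]
where $h(\nu) := \int_{\mathcal{M}} \phi(\nu;\mu,q\Omega^{\dagger} - \underline{q}I_d)\,dG(\mu)$ is the Lebesgue density of the convolution of $G$ with the $N(0, q\Omega^{\dagger} - \underline{q}I_d)$ distribution.

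Since $h$ integrates to one, the measure $dH(\nu) := h(\nu)\,d\nu$ belongs to $\mathcal{G}$, and the display above exhibits $f$ in the form $\int \phi(y;\nu,\underline{q}I_d)\,dH(\nu)$ defining an element of $\mathcal{F}_d^{\mathcal{G}}$. The main obstacle I anticipate is bookkeeping around the support of $H$: although $G$ is supported on the compact set $\mathcal{M}$, the convolved measure $H$ has support extending over all of $\mathbb{R}^d$, so the argument relies on the convention that the infinite-dimensional class $\mathcal{F}_d^{\mathcal{G}}$ permits mixing distributions on the ambient parameter space for the mean, rather than on the compact $\mathcal{M}$ used for the finite class $\bar{\mathcal{F}}_d^S$. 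A stricter support convention would necessitate an additional truncation or approximation step, and this is the part of the argument that most requires care.
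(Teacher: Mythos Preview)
Your approach mirrors the paper's: both decompose a $N(\mu, q\Omega^{\dagger})$ variable as an independent sum of a $N(\cdot,\underline{q}I_d)$ piece and a $N(\cdot, q\Omega^{\dagger}-\underline{q}I_d)$ piece, so that a location mixture with kernel covariance $q\Omega^{\dagger}$ becomes a location mixture with kernel covariance $\underline{q}I_d$ under a convolved mixing law. The paper carries this out via moment generating functions, you via the density-level convolution identity and Fubini; these are equivalent manoeuvres.

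Both caveats you flag are genuine, and the paper's proof does not resolve them any better. On the positive semi-definiteness of $q\Omega^{\dagger}-\underline{q}I_d$: this does \emph{not} follow merely from $q\geq\underline{q}$ and $\Omega^{\dagger}$ being a correlation matrix. For $d=2$, $q=\underline{q}$ and off-diagonal entry $0.9$, the difference has eigenvalues $\pm 0.9$. The paper appeals to the assertion that $I\succ\Omega$ in the L\"owner order for every correlation matrix $\Omega$, but this is false for the same reason (and is in any case oriented the wrong way for what is needed here). An additional hypothesis such as $\lambda_{\min}(q\Omega^{\dagger})\geq\underline{q}$ is required for the decomposition to go through. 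On the support issue: the paper's new mixing variable $W'\stackrel{d}{=}Z''+W$ likewise has Gaussian tails over all of $\mathbb{R}^d$, so the containment holds only under the broader reading of $\mathcal{G}$ you describe; the paper does not comment on this point.
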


\begin{proof}
Let $\Sigma$ and $\Sigma^{\dagger}$ be two covariance matrices. Consider the function classes 
\[
\mathcal{F}_{\Sigma}:= \left\{f:f(x)=\int \phi(x|\mu, \Sigma) dG(\mu), \; G\in\mathcal{G}\right\} \quad \text{and} \quad \mathcal{F}_{\Sigma^{\dagger}}:= \left\{f:f(x)=\int \phi(x|\mu, \Sigma^{\dagger}) dG(\mu), \; G\in\mathcal{G}\right\}.
\] 
We will first prove that, if $\Sigma - \Sigma^{\dagger}$ is non-negative definite, $\mathcal{F}_{\Sigma^{\dagger}}\subseteq \mathcal{F}_{\Sigma}$. Let $X$ be a random variable drawn from an arbitrary $f\in\mathcal{F}_{d}^{\mathcal{G}}$, then $X \stackrel{d}{=} Z + W$ where $Z\sim N (\mu, \Sigma)$, $W\sim G\in\mathcal{G}$ and $\stackrel{d}{=}$ means equality in distribution. Its moment generating function is
\begin{eqnarray*}
\mathbb{E}[\exp\{t^{T}X\}] &=& \mathbb{E}[\exp\{t^{T}(Z+W)\}]\\
 &=& \mathbb{E}[\exp\{t^{T}Z\}]\mathbb{E}[\exp\{t^{T}W\}] \\
&=&\exp \left\{t^{T}\mu + \frac{1}{2}t^{T}\Sigma t\right\}] \mathbb{E}\left[\exp\{t^{T}W\}\right] \\
& & \text{[by normality of $Z$]} \\
&=& \exp\left\{t^{T}a + \frac{1}{2}t^{T}\Sigma^{\dagger}t\right\}\exp\left\{t^{T}(\mu-a) + \frac{1}{2}t^{T}(\Sigma -\Sigma^{\dagger})t\right\}\mathbb{E}\left[\exp\{t^{T}W\}\right]
\end{eqnarray*}
so $X \stackrel{d}{=} Z + W \stackrel{d}{=} Z' + W'$, where $Z'\sim N(a,\Sigma^{\dagger})$, $W'\stackrel{d}{=}Z''+W$, $Z''\sim N((\mu-a),\Sigma - \Sigma^{\dagger})$ hence $f\in \mathcal{F}^{\dagger}$, proving that $\mathcal{F}^{\dagger}\subseteq \mathcal{F}_{d}^{\mathcal{G}}$ by the arbitrariness of $f\in \mathcal{F}_{d}^{\mathcal{G}}$. $\Sigma-\Sigma^{\dagger}$ being non-negative definite corresponds to $\Sigma \succ \Sigma^{\dagger}$ in the sense of L\"owner orderings. It is well known \citep[see e.g.][]{Mosler} that for any correlation matrix $\Omega$, $I\succ \Omega$, which proves our claim.
\end{proof}

\begin{proposition}\label{lemmaWeakConvergence}
There exists a discrete measure $G_{S}$, not necessarily unique such that
\[
\sup_{y\in\mathbb{R}^{d}}\int_{\mathcal{M}} \phi(y;\mu,\underline{q}I_{d})G_{S}(d\mu) = \sup_{y\in\mathbb{R}^{d}}\int_{\mathcal{M}} \phi(y;\mu,\underline{q}I_{d})G_{0}(d\mu) + O\left(S^{-1/d}\right).
\]
\end{proposition}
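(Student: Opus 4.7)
The plan is to construct $G_S$ by quantizing $G_0$ on a regular grid over the compact set $\mathcal{M}=[-M,M]^d$ and to transfer the approximation error via the uniform Lipschitz continuity of $\mu\mapsto \phi(y;\mu,\underline{q}I_d)$. Concretely, for each $y\in\mathbb{R}^d$, one computes
\[
\nabla_\mu \phi(y;\mu,\underline{q}I_d) = \phi(y;\mu,\underline{q}I_d)\frac{y-\mu}{\underline{q}},
\]
and the scalar map $t\mapsto |t|\exp(-t^2/(2\underline{q}))$ is maximised at $t=\sqrt{\underline{q}}$ with value $\sqrt{\underline{q}/e}$. Multiplying by the normalising constant $(2\pi\underline{q})^{-d/2}$ and using $\|v\|_{\ell_2}\leq\sqrt{d}\|v\|_{\ell_\infty}$ yields a Lipschitz constant $L=L(d,\underline{q})<\infty$ such that $|\phi(y;\mu,\underline{q}I_d)-\phi(y;\mu',\underline{q}I_d)|\leq L\|\mu-\mu'\|_{\ell_2}$ uniformly in $y\in\mathbb{R}^d$.

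Next, I would partition $\mathcal{M}$ into $S$ (up to rounding $S^{1/d}$ to the nearest integer, which only affects the constant) pairwise disjoint hypercubes $C_1,\ldots,C_S$ of common side length $2M S^{-1/d}$, each of $\ell_2$-diameter bounded by $\sqrt{d}\cdot 2M S^{-1/d}$. Pick a representative $\mu_s$ (say the centre) of each $C_s$ and define the discrete measure
\[
G_S := \sum_{s=1}^{S} G_0(C_s)\,\delta_{\mu_s},
\]
which automatically places at most $S$ atoms in $\mathcal{M}$ and is a probability measure because $G_0$ is.

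Then, for any fixed $y\in\mathbb{R}^d$,
\[
\Bigl|\int_{\mathcal{M}}\phi(y;\mu,\underline{q}I_d)\,G_0(d\mu)-\int_{\mathcal{M}}\phi(y;\mu,\underline{q}I_d)\,G_S(d\mu)\Bigr|\leq \sum_{s=1}^{S}\int_{C_s}\bigl|\phi(y;\mu,\underline{q}I_d)-\phi(y;\mu_s,\underline{q}I_d)\bigr|\,G_0(d\mu),
\]
and applying the Lipschitz bound above gives the right-hand side at most $L\sqrt{d}\cdot 2M S^{-1/d}=O(S^{-1/d})$. Since this bound is independent of $y$, taking suprema in $y$ on both sides and using $|\sup_y h_S(y)-\sup_y h_0(y)|\leq \|h_S-h_0\|_{\LL_\infty}$ yields the claim.

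The only technical subtlety is handling the case in which $S^{1/d}$ is not an integer; this is resolved by grouping the grid cubes so that the total number of atoms is exactly $\lfloor S^{1/d}\rfloor^d\leq S$, which worsens the diameter bound only by an absolute constant and hence does not affect the $O(S^{-1/d})$ rate. There is no genuine obstacle: the argument reduces to a standard quadrature-style quantisation of a probability measure against a uniformly Lipschitz integrand on a compact set.
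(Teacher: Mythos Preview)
Your proof is correct and in fact cleaner than the paper's own argument. Both proofs rest on the same two ingredients: (i) the map $\mu\mapsto\phi(y;\mu,\underline{q}I_d)$ is Lipschitz with a constant independent of $y$, and (ii) the compact set $\mathcal{M}$ can be partitioned into $S$ cells of diameter $O(S^{-1/d})$, on which one places atoms with weights $G_0(C_s)$. The difference is in how the pieces are assembled. The paper first establishes membership in the bounded Lipschitz class $BL(\mathcal{M})$, then runs an (unnecessary) contradiction argument via weak convergence and Dudley's theorem to show \emph{some} discrete approximation converges, and finally derives the rate by bounding the Prohorov distance $\rho(G_S^\epsilon,G_0)$ and implicitly invoking the relationship between the Prohorov and bounded Lipschitz metrics. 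You bypass all of this machinery: once the uniform Lipschitz constant $L$ is in hand, the bound
\[
\Bigl|\int\phi(y;\mu)\,dG_0-\int\phi(y;\mu)\,dG_S\Bigr|\le L\cdot\max_s\operatorname{diam}(C_s)
\]
follows in one line and is already uniform in $y$. Your route is more elementary, avoids the weak-topology detour, and makes the dependence of the constant on $(d,\underline{q},M)$ explicit; the paper's route situates the result within the Prohorov/weak-convergence framework, which may be useful context elsewhere but is not needed for this particular bound.
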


\begin{proof}

We first show that $\Phi:=\{\phi: \mu \mapsto \phi(y;\mu, \underline{q}\hspace{1pt}I_{d}); \mu\in[-M, M]^{d}\}\subset BL(\mathcal{M}):=\{f:\mathcal{M}\longrightarrow \mathbb{R}: \|f\|_{BL} < \infty\}$, the set of bounded lipschitz functions on $\mathcal{M}$,
where $\|f\|_{BL}=\|f\|_{L} + \|f\|_{\LL_{\infty}}$ with $\|f\|_{L}=\sup_{x\neq y; x,y \in \mathcal{M}}|f(x)-f(y)|/d(x,y)$. By the mean value theorem, it suffices to show that the elements of the gradient vector are bounded uniformly over $y\in\mathbb{R}^{d}$ and $\mu\in[-M,M]^{d}$. To this end, consider
\begin{eqnarray*}
& & \sup_{y\in\mathbb{R}^{d}, \mu\in \mathcal{M}} \left|\nabla_{\mu} \phi(y;\mu,\underline{q}I_{d})\right| \\
&=&  \sup_{y\in\mathbb{R}^{d}, \mu\in \mathcal{M}} \left| (\mu^{T}I_{d}^{-1})^{T}(2\underline{q}^{-(d/2+1)}) (2\pi)^{-d/2}\exp\{-(2\underline{q})^{-1}(y - \mu)^{T}(y-\mu)\}  \right| \\
&=&  \sup_{y\in\mathbb{R}^{d}, \mu\in \mathcal{M}} \left| 2\underline{q}^{-(d/2+1)} (2\pi)^{-d/2} \exp\{-(2\underline{q})^{-1}(y - \mu)^{T}(y-\mu)\} \mu \right|,
\end{eqnarray*}
which is bounded elementwise since $\underline{q}>0$ and $M<\infty$. We have shown that $\phi:\mu\mapsto \phi(y;\mu,\underline{q}\hspace{1pt}I_{d})$ is in $BL(\mathcal{M})$.

Suppose for a contradiction that no discrete measure $G_{S}$ exists such that
\[
\sup_{y\in\mathbb{R}^{d}}\int_{\mathcal{M}} \phi(y;\mu,\underline{q}I_{d})G_{S}(d\mu) \longrightarrow \sup_{y\in\mathbb{R}^{d}}\int_{\mathcal{M}} \phi(y;\mu,\underline{q}I_{d})G_{0}(d\mu)\quad \text{ as } S\longrightarrow \infty.
\]
Introduce the set of discrete measures
\[
\mathcal{G}_{s}^{\epsilon}:=\left\{G_{S}^{\epsilon}:G_{S}^{\epsilon}(A)=\sum_{s=1}^{S}\pi_{s}\delta_{\mu_{s}^{\epsilon}}(A): \pi_{1},\ldots, \pi_{S}\geq 0; \sum_{s=1}^{S}\pi_{s}=1; A\subset \mathcal{B}(\mathcal{M})\right\}
\]
where $\mathcal{B}(\mathcal{M})$ is the Borel sigma-algebra on $\mathcal{M}$,  $\delta_{\mu_{s}^{\epsilon}}(A)$ takes the value 1 if $\mu_{s}^{\epsilon}\in A$ and zero otherwise, and $\mu_{1}^{\epsilon},\ldots, \mu_{S}^{\epsilon}$ are $S$ elements of $\mathcal{M}=[-M,M]^{d}$ satisfying $\cup_{s=1}^{S}B(\mu_{s}^{\epsilon},\epsilon)=\mathcal{M}$ and $B(\mu_{s}^{\epsilon},\epsilon) \cap B(\mu_{j}^{\epsilon},\epsilon)=0$ for all $s\neq j$, $s,j \in\{1,\ldots,S\}$. Compactness of $\mathcal{M}$ guarantees the existence of such a collection of $\{\mu_{s}^{\epsilon}: s=1,\ldots,S\}$. Let $\mathcal{G}(\mathcal{M})$ be the set of all probability measures on the Borel sets of $\mathcal{M}$. Since $\mathcal{G}_{S}^{\epsilon}$ is dense in $\mathcal{G}(\mathcal{M})$ under the weak topology \citep[][Theorem 6.3]{Parth1967}, there exists a choice of weights $\{\pi_{s}: s=1,\ldots, S\}$ such that the sequence of weighted discrete measures converges to $G_{0}$ as $S\longrightarrow \infty$. It follows by Theorem 11.3.3 of \citet{Dudley2002} that $\int f dG_{s}^{\epsilon} \longrightarrow \int f dG_{0}$ for all $f\in BL(\mathcal{M})$, hence it cannot be true that $\phi:\mu\mapsto \phi(y;\mu,\underline{q}\hspace{1pt}I_{d})$ is in $BL(\mathcal{M})$, a contradiction.

We next prove that $\rho(G_{S}^{\epsilon},G_{0})=O(S^{-d/2})$ where $\rho(\cdot,\cdot)$ is the Prohorov metric defined below.

\begin{definition}\label{defProhorov}
Let $\mathbb{Q}$ and $\mathbb{P}$ be laws on $\mathcal{M}$. The \emph{Prohorov metric} is
\begin{equation}\label{eq:prokhorov2}
\rho(\mathbb{Q}, \mathbb{P}) := \inf \left\{\epsilon > 0: \mathbb{Q}(A) \leq \mathbb{P}(A^{\epsilon}) + \epsilon \; \; \text{for all Borel sets A} \right\} \\
\end{equation}
where $A^{\epsilon}:=\{y\in\mathcal{M}: d(x,y)<\epsilon \; \text{for some } x\in A\}$, i.e.~the ``$\epsilon$-enlargement'' of $A$.
\end{definition}

Fix $\epsilon>0$ and introduce the function $h(\mu)=0\vee\left(1-d(\mu,A)/\epsilon\right)$ where $d(\mu,A)=\inf_{r\in A}d(\mu,r)$. Then $h\in BL(\mathcal{M})$ \citep[][page 396 and Proposition 11.2.2]{Dudley2002} and $\ind\{\mu\in A\}\leq h(\mu) \leq \ind\{\mu\in A^{\epsilon}\}$. Cover $[-M,M]^{d}$ with disjoint open balls of radius $\epsilon/2$ around $\{\mu_{s}^{\epsilon}:s=1,\ldots,S\}$ and fix weights $\{\pi_{s}:s=1,\ldots,S\}$ such that $
\sum_{s=1}^{S}|\pi_{i}\delta_{\mu_{s}^{\epsilon}}-G_{0}(B_{s}^{\epsilon})|\leq \epsilon$, where $B_{s}^{\epsilon}:=B(\mu_{s}^{\epsilon},\epsilon)$. For an arbitrary Borel set, $A$,
\begin{eqnarray*}
G_{S}^{\epsilon}(A) & \leq & \int_{\mathcal{M}} h(\mu) G_{0}(d\mu) + \int_{\mathcal{M}} h(\mu) |G_{S}^{\epsilon}-G_{0}|(d\mu) \\
									  & \leq &  \int_{\mathcal{M}} \ind\{\mu \in A^{\epsilon}\} G_{0}(d\mu) + \int_{\mathcal{M}} h(\mu) |G_{S}^{\epsilon}-G_{0}|(d\mu) \\
									  & \leq & G_{0}(A^{\epsilon}) + \sup_{r\in \mathcal{M}}|h(r)|\sum_{s=1}^{S} |\pi_{s}\delta_{\mu_{s}^{\epsilon}}-G_{0}(B_{s}^{\epsilon})| + \sup_{r\in \mathcal{M}}|h(r)|G_{0}\left((\cup_{s=1}^{S}B_{s}^{\epsilon})^{c}\right) \\
									  &=& G_{0}(A^{\epsilon}) + \epsilon\;.
\end{eqnarray*}
Since we are allowed $S$ balls to cover $\mathcal{M}=[-M,M]^{d}$, the minimum $\epsilon$ for which the last line of the above display holds is $\epsilon = 2M/S^{1/d}$, hence the Prohorov metric converges at rate $S^{-1/d}$ when $G_{S}$ is taken as $G_{S}^{\epsilon}$.

\end{proof}

\section{Proofs}\label{sectionAppendixB}

\subsection{Definitions and preliminary lemmata}

\begin{definition}\label{defFechet}
For normed spaces $\mathbb{D}$ and $\mathbb{E}$ endowed with norms $\|\cdot\|_{\mathbb{D}}$ and $\|\cdot\|_{\mathbb{E}}$ respectively, and for some map $\Psi:\mathbb{D}\mapsto \mathbb{E}$, the Fr\'echet derivative \emph{(}if it exists\emph{)} is the linear continuous map $D\Psi_{g}:\mathbb{D}\mapsto \mathbb{E}$ such that 
\[
\left\|\Psi(g+h)-\Psi(g) - D\Psi_{g}(h)\right\|_{\mathbb{E}} = o(\|h\|_{\mathbb{D}}).
\]
\end{definition}

\begin{lemma}\label{objFunctionConv1}
Under Conditions \ref{conditionTrueDensity}, \ref{conditionMixtures}, and \ref{conditionPilot}, for any fixed $f_{\xi}\in\mathcal{F}_{d}^{S}$,
\[
\sup_{\xi\in\Xi} \left(\Phi_{P}(\widehat{f}^{P}-f_{\xi}) - \Phi_{P}(f_{0}-f_{\xi})\right)=O_{p}(v_{n}),
\]
where $v_{n}:=\max\{n^{-1/2},r_{n},s_{n}\}$ with $r_{n}$ and $s_{n}$ defined as in Condition \ref{conditionPilot}.
\end{lemma}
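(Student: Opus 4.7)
My plan is to expand the squared-loss difference algebraically. Setting $\Delta := \widehat{f}^P - f_0$, the identity
\[
\Phi_P(\widehat{f}^P - f_\xi) - \Phi_P(f_0 - f_\xi) = \int \Delta^2 \, dP + 2 \int \Delta \hspace{1pt}(f_0 - f_\xi) \, dP
\]
reduces the problem to bounding the first (quadratic) piece, which is independent of $\xi$, and the second (linear) piece uniformly in $\xi \in \Xi$. The quadratic piece is bounded via Condition \ref{conditionTrueDensity} by $\|f_0\|_{\mathbb{L}_\infty} \|\widehat{f}^P - f_0\|_{\mathbb{L}_2}^2$, which is $O_p(r_n)$ by Markov's inequality and the second clause of Condition \ref{conditionPilot}. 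Since $r_n \leq v_n$, this is absorbed into the target rate.

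For the linear piece I would substitute $dP = f_0 \hspace{1pt}d\text{Leb}$ and introduce $g_\xi(y) := (f_0(y) - f_\xi(y)) f_0(y)$, rewriting it as $2[\widehat{P}^P g_\xi - P g_\xi]$, where $\widehat{P}^P$ denotes the probability measure with density $\widehat{f}^P$. Adding and subtracting the empirical measure gives the further decomposition $(\widehat{P}^P - P) g_\xi = (\widehat{P}^P - P_n) g_\xi + (P_n - P) g_\xi$. For the second summand, pointwise control at an arbitrary fixed $\xi_0 \in \Xi$ via the CLT yields $O_p(n^{-1/2})$ (both $f_0$ and $f_{\xi_0}$ are bounded), and since $\Xi$ is compact by Condition \ref{conditionMixtures}, its $\ell_1$-diameter is finite; choosing $\delta$ in Condition \ref{conditionSE} to exceed this diameter then immediately yields $\sup_{\xi \in \Xi}|(P_n - P)[f_0(f_{\xi_0} - f_\xi)]| = O_p(v_n)$. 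Combining, $\sup_\xi |(P_n - P) g_\xi| = O_p(v_n)$.

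The remaining summand $(\widehat{P}^P - P_n) g_\xi$ is a difference of integrals of $g_\xi$ against two probability measures whose multivariate CDFs satisfy $\|\widehat{F}^P - F_n\|_{\mathbb{L}_\infty} = O_p(s_n)$ by the third clause of Condition \ref{conditionPilot}. By the product rule applied to $f_0, f_\xi \in \mathbb{L}_\infty \cap \mathfrak{W}_{1,1}(\mathbb{R}^d)$ (Conditions \ref{conditionTrueDensity} and \ref{conditionMixtures}), $g_\xi$ lies in the same space with norm uniformly bounded in $\xi$. Multivariate Stieltjes integration by parts---the boundary contributions vanish because $g_\xi$ decays at infinity as a product involving the integrable densities---then yields a bound of the form $|(\widehat{P}^P - P_n) g_\xi| \leq C \|\widehat{F}^P - F_n\|_{\mathbb{L}_\infty}$, where $C$ depends only on a Hardy--Krause-style variation of $g_\xi$ that is uniform over $\xi \in \Xi$. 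This delivers $\sup_\xi |(\widehat{P}^P - P_n) g_\xi| = O_p(s_n)$, and summing the three contributions gives $O_p(v_n)$ overall. The main obstacle is precisely this last step: certifying the multivariate integration-by-parts inequality, handling boundary terms, and verifying the uniform bound on the variation constant $C$, which together require careful bookkeeping of the product structure of $g_\xi$ and its weak partials in $d$ dimensions.
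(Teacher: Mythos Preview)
Your proposal is correct and follows essentially the same route as the paper. The paper phrases the algebraic expansion via the Fr\'echet derivative of $\Phi$, but this produces exactly your quadratic term $\int \Delta^{2}\,dP$ (bounded by $\|f_{0}\|_{\mathbb{L}_{\infty}}\|\widehat{f}^{P}-f_{0}\|_{\mathbb{L}_{2}}^{2}=O_{p}(r_{n})$) and your linear term $2\langle f_{0}-f_{\xi},\Delta\rangle_{P}$; the latter is then split, just as you do, into an empirical-process piece $(P_{n}-P)g_{\xi}$ handled by the CLT plus stochastic equicontinuity, and a $(\widehat{P}^{P}-P_{n})g_{\xi}$ piece handled by iterated integration by parts against $\|\widehat{F}^{P}-F_{n}\|_{\mathbb{L}_{\infty}}$. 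The paper carries out the integration-by-parts step explicitly (iterated fundamental theorem of calculus plus Fubini, with the $\mathfrak{W}_{1,1}$ and $\mathbb{L}_{\infty}$ hypotheses on $f_{0}$ and $f_{\xi}$ controlling $\sum_{|\alpha|=1}\|D^{\alpha}g_{\xi}\|_{\mathbb{L}_{1}}$ uniformly in $\xi$), which resolves the obstacle you flag. One small difference: for uniformity of $(P_{n}-P)g_{\xi}$ the paper uses a finite $\delta$-cover of $\Xi$ and applies Condition~\ref{conditionSE} within each ball, whereas you simply take $\delta$ equal to the $\ell_{1}$-diameter of the compact $\Xi$; your shortcut is valid and slightly more direct. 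Note that both arguments rely on Condition~\ref{conditionSE} even though the lemma statement lists only Conditions~\ref{conditionTrueDensity}--\ref{conditionMixtures}.
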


\begin{proof}
For notational simplicity, write $\Phi(b):=\Phi_{P}(b)$. Let $D\Phi_{b}(h)$ denote the Fr\'echet derivative of $\Phi(b)$ at $h$. Then by the definition of the Fr\'echet derivative and the functional $\Phi$,
\[
\left|\Phi(b+h)-\Phi(b)-D\Phi_{b}(h)\right|=\left|\left\langle b+h, b+h\right\rangle_{P} - \left\langle b, b\right\rangle_{P} - 2\left\langle b, h\right\rangle_{P}\right| = \|h\|_{\mathbb{L}_{2}(P)}^{2},
\]
where $\langle b, h\rangle_{P}=\int b(x)h(x)P(dx)$. Taking $b=f_{0}-f_{\xi}$ and $h=(\widehat{f}^{P}-f_{\xi})-(f_{0}-f_{\xi})=(\widehat{f}^{P}-f_{0})$, we have
\[
\mathbb{E}|\Phi(\widehat{f}^{P}-f_{\xi}) - \Phi(f_{0}-f_{\xi}) - D\Phi_{b}(\widehat{f}^{P}-f_{0})|=\mathbb{E}\|\widehat{f}^{P}-f_{0}\|_{\mathbb{L}_{2}(P)}^{2}\leq \|f_{0}\|_{\mathbb{L}_{\infty}}\mathbb{E}\|\widehat{f}^{P}-f_{0}\|_{\mathbb{L}_{2}(\text{Leb})}^{2}=O(r_{n})
\]
by Conditions \ref{conditionTrueDensity} and \ref{conditionPilot}, where the inequality follows because
\begin{eqnarray*}
\mathbb{E}\|\widehat{f}^{P}-f_{0}\|^{2}_{\mathbb{L}_{2}(P)} &=& \mathbb{E}\int |f_{0}(x)(\widehat{f}^{P}(x)-f_{0}(x))^{2}|dx = \mathbb{E}\|f_{0}(\widehat{f}^{P}-f_{0})^{2}\|_{\mathbb{L}_{1}(\text{Leb})}\\
&\leq& \|f_{0}\|_{\mathbb{L}_{\infty}}\mathbb{E}\|(\widehat{f}^{P}-f_{0})^{2}\|_{\mathbb{L}_{1}(\text{Leb})} = \|f_{0}\|_{\mathbb{L}_{\infty}}\EE\|(\widehat{f}^{P}-f_{0})\|^{2}_{\mathbb{L}_{2}(\text{Leb})}\\
\end{eqnarray*}
by non-negativity of $f_{0}$ and H\"older's inequality. To show that $(\Phi(\widehat{f}^{P}-f_{\xi}) - \Phi(f_{0}-f_{\xi}))=O_{p}(\max\{q_{n},r_{n}\})$, it thus suffices to show that $D\Phi_{b}(\widehat{f}^{P}-f_{0})=O_{p}(\max\{q_{n},r_{n}\})$. We have
\begin{eqnarray*}
D\Phi_{b}(\widehat{f}^{P}-f_{0}) &=& 2\left\langle f_{0}-f_{\xi}, \widehat{f}^{P}-f_{0}\right\rangle_{P}\\
 &=& 2 \left(\int \widehat{f}^{P}(x)(f_{0}-f_{\xi})(x)P(dx)- \int f_{0}(x)(f_{0}-f_{\xi})(x)P(dx)\right).
\end{eqnarray*}
We will next show that replacing $\int \widehat{f}^{P}(x)(f_{0}-f_{\xi})(x)P(dx)$ by a quantity depending on the empirical distribution rather than on $\widehat{f}^{P}$ only incurs a loss of $O(s_{n})$. Write $\int \widehat{f}^{P}(x)(f_{0}-f_{\xi})(x)P(dx)=\int f_{0}(x)\left(\widehat{f}^{P}(x)(f_{0}-f_{\xi})(x)\right)dx$. By Minkowski's inequality
\begin{eqnarray*}
& & \mathbb{E}\left|2\int f_{0}(x)[\widehat{f}^{P}(x)(f_{0}-f_{\xi})(x)]dx - \frac{2}{n}\sum_{i=1}^{n}f_{0}(X_{i})\left[f_{0}(X_{i})-f_{\xi}(X_{i})]\right)\right| \\
& \leq & \mathbb{E}\left|2\int f_{0}^{2}(x)\widehat{f}^{P}(x)dx - \frac{2}{n}\sum_{i=1}^{n}f_{0}^{2}(X_{i})\right| + \mathbb{E}\left|2\int f_{0}(x)\widehat{f}^{P}(x)f_{\xi}(x)dx - \frac{2}{n}\sum_{i=1}^{n}f_{0}(X_{i})f_{\xi}(X_{i})\right| \\
& = & I+II
\end{eqnarray*}
For the control over $I$ and $II$ we use the fact that for any function $g$ vanishing at infinity on $\RR^{d}$,
\[
g(x_{1},\ldots,x_{d})=\int_{-\infty}^{x_{i}}D_{i}g(x_{1},\ldots,x_{i-1},t_{i},x_{i+1},\ldots,x_{d})dt_{i}
\]
by the fundamental theorem of calculus. Iterating this argument, 
\begin{eqnarray*}
g(x)&=&\int_{-\infty}^{x_{d}}D_{d}\int_{-\infty}^{x_{d-1}}D_{d-1}\cdots\int_{-\infty}^{x_{1}}D_{1} g(t)dt_{1}dt_{2}\ldots dt_{d} \\
		&=&\int \ind\{t_{d}\in(-\infty,x_{d})\}D_{d} \cdots 	\int\ind\{t_{1}\in(-\infty,x_{1})\}D_{1} g(t)dt_{1}dt_{2}\ldots dt_{d}.
\end{eqnarray*}
With abuse of notation, we will write the right hand side as $T\bigl((g\circ \ind_{\{(-\infty,x)\}})(t)\bigr)$. $\ind\{t\in(-\infty,x)\}$ means $\ind\{t_{1}\in(-\infty,x_{1})\}\ind\{t_{2}\in(-\infty,x_{2})\},\ldots, \ind\{t_{d}\in(-\infty,x_{d})\}$, $\ind_{\{(-\infty,x)\}}(t)$ means $\ind\{t\in(-\infty,x)\}$, and $dt$ is shorthand for $dt_{1}\ldots dt_{d}$.\\

\noindent \textbf{Control over I}. We have
\begin{eqnarray*}
&   & \mathbb{E}\left|2\int f_{0}^{2}(x)\widehat{f}^{P}(x)dx - \frac{2}{n}\sum_{i=1}^{n}f_{0}^{2}(X_{i})\right| \\
& = & \mathbb{E}\left|2\int T\left((f_{0}^{2}\circ \textstyle{\ind_{\{(-\infty,x)\}}})(t)\right)\widehat{f}^{P}(x)dx -\frac{2}{n}\sum_{i=1}^{n} T \bigl((f_{0}^{2}\circ \textstyle{\ind_{\{(-\infty,X_{i})\}}})(t)\bigr)\right| \\
& = & \mathbb{E}\left|2\int_{-\infty}^{x_{d}}D_{d}\int_{-\infty}^{x_{d-1}}D_{d-1}\cdots\int_{-\infty}^{x_{1}}D_{1} f_{0}^{2})(t)\left(\int \ind\{x\in(t,\infty)\}\widehat{f}^{P}(x)dx - \frac{1}{n}\sum_{i=1}^{n}\ind\{X_{i}\in(t,\infty)\}\right)dt\right| \\
&   & \text{[by Fubini's Theorem]}  \\
& \leq & 2 \mathbb{E} \sup_{t\in\mathbb{R}^{d}}|\widehat{F}^{P}(t)-F_{n}(t)|\sum_{|\alpha|=1}\|D^{\alpha}f_{0}^{2}\|_{\mathbb{L}_{1}},
\end{eqnarray*}
where $\widehat{F}^{P}$ is the distribution function corresponding to the density function $\widehat{f}^{P}$. Since $\|D^{\alpha}f_{0}^{2}\|_{\mathbb{L}_{1}}<\infty$ by Condition \ref{conditionTrueDensity}, the last line of the above display is $O(s_{n})$ by Condition \ref{conditionPilot}. 

\vspace{8pt}

\noindent \textbf{Control over II}. Write 
\[
h(x)=f_{0}(x)f_{\xi}(x)=\int \ind\{t_{d}\in(-\infty,x_{d})\}D_{d} \cdots 	\int\ind\{t_{1}\in(-\infty,x_{1})\}D_{1} h(t)dt_{1}dt_{2}\ldots dt_{d}.
\]
Then, as in the control over $I$,
\begin{eqnarray*}
&   & \mathbb{E}\left|2\int f_{0}(x)\widehat{f}^{P}(x)f(x)dx - \frac{2}{n}\sum_{i=1}^{n}f_{0}(X_{i})f_{\xi}(X_{i})\right| \\
& = & \mathbb{E}\left|2\int T\left((h\circ \textstyle{\ind_{\{(-\infty,x)\}}})(t)\right)\widehat{f}^{P}(x)dx -\frac{2}{n}\sum_{i=1}^{n} T \bigl((h\circ \textstyle{\ind_{\{(-\infty,X_{i})\}}})(t)\bigr)\right| \\
& = & \mathbb{E}\left|2\int_{-\infty}^{x_{d}}D_{d}\int_{-\infty}^{x_{d-1}}D_{d-1}\cdots\int_{-\infty}^{x_{1}}D_{1} h(t)\left(\int \ind\{x\in(t,\infty)\}\widehat{f}^{P}(x)dx - \frac{1}{n}\sum_{i=1}^{n}\ind\{X_{i}\in(t,\infty)\}\right)dt\right| \\
& \leq & 2 \mathbb{E} \sup_{t\in\mathbb{R}^{d}}|\widehat{F}^{P}(t)-F_{n}(t)|\sum_{|\alpha|=1}\|D^{\alpha}h\|_{\mathbb{L}_{1}}.
\end{eqnarray*}
But $D^{\alpha} h = D^{\alpha}(f_{0}f_{\xi}) = (D^{\alpha}f_{0})f_{\xi} + f_{0}(D^{\alpha}f_{\xi})$, so
\begin{eqnarray*}
\|D^{\alpha}h\|_{\LL_{1}}&=&\|(D^{\alpha}f_{0})f_{\xi} + f_{0}(D^{\alpha}f_{\xi})\|_{\LL_{1}} \|(D^{\alpha}f_{0})f_{\xi}\|_{\LL_{1}} + \|f_{0}(D^{\alpha}f_{\xi})\|_{\LL_{1}} \\
												 &\leq & \|D^{\alpha}f_{0}\|_{\LL_{1}}\|f_{\xi}\|_{\infty} + \|f_{0}\|_{\infty}\|D^{\alpha}f_{\xi}\|_{\LL_{1}}<\infty
\end{eqnarray*}
by H\"older's inequality and Conditions \ref{conditionTrueDensity} and \ref{conditionMixtures}. Hence the last line of the previous display is $O(s_{n})$ by Condition \ref{conditionPilot}.

We conclude that $|D\Phi_{b}(\widehat{f}^{P}-f_{0}) - Z_{\xi}|=O_{p}(s_{n})$, where
\begin{eqnarray*}
\sqrt{n}Z_{\xi} &= & \frac{2}{\sqrt{n}}\sum_{i=1}^{n}f_{0}(X_{i})\left(f_{0}(X_{i})-f_{\xi}(X_{i})\right) - 2 \int f_{0}^{2}(x)\left(f_{0}-f_{\xi}\right)(x)dx \\
&=& \frac{2}{\sqrt{n}}\sum_{i=1}^{n}\Biggl(f_{0}(X_{i})\bigl(f_{0}(X_{i})-f_{\xi}(X_{i})\bigr) - \mathbb{E}\Bigl[f_{0}(X_{i})\bigl(f_{0}(X_{i})-f_{\xi}(X_{i})\bigr)\Bigr]\Biggr) = O_{p}(1),
\end{eqnarray*}
by the central limit theorem thus,
\[
\left(\Phi_{P}(\widehat{f}^{P}-f_{\xi}) - \Phi_{P}(f_{0}-f_{\xi})\right) = O_{p}\bigl(\max\{n^{-1/2},r_{n},s_{n}\}\bigr) \quad \text{for all } \xi \in\Xi.
\]
It remains to show that the rate holds uniformly over $\Xi$, i.e.~$\sup_{\xi\in\Xi}|Z_{\xi}|=O_{p}\bigl(\max\{n^{-1/2},r_{n},s_{n}\}\bigr)$. Introduce the class of functions $\mathcal{G}_{\Xi}=\{g_{\xi}=f_{0}(f_{0}-f_{\xi}):\;\xi\in\Xi\}$. By Minkowski's inequality and a double application of H\"older's inequality, for a $g\in\mathcal{G}_{\Xi}$
\[
\|g\|_{\mathbb{L}_{1}}=\|f_{0}(f_{0}-f_{\xi}) \|_{\mathbb{L}_{1}} \leq \|f_{0}f_{0}\|_{\mathbb{L}_{1}}+\|f_{0}f_{\xi}\|_{\mathbb{L}_{1}} \leq \|f_{0}\|_{\mathbb{L}_{2}}^{2} + \|f_{0}\|_{\mathbb{L}_{2}}\|f_{\xi}\|_{\mathbb{L}_{2}} < \infty,
\]
hence $g\in\mathbb{L}_{1}(\mathbb{R}^{d},\text{Leb})$ and $(g-g')\in\mathbb{L}_{1}(\mathbb{R}^{d},\text{Leb})$ for all $g, g'\in\mathcal{G}_{\Xi}$. Moreover, by the Lipschitz requirement of Condition \ref{conditionMixtures}, there exists a $K<\infty$ such that
\[
\|g-g'\|_{\mathbb{L}_{1}}=\|g_{\xi}-g_{\xi'}\|_{\mathbb{L}_{1}}=\|f_{0}(f_{\xi'}-f_{\xi})\|_{\mathbb{L}_{1}}\leq \|f_{0}\|_{\mathbb{L}_{2}}\|(f_{\xi'}-f_{\xi})\|_{\mathbb{L}_{2}} \leq \|f_{0}\|_{\mathbb{L}_{2}} K \|\xi-\xi'\|_{\ell_{1}}.
\]
Compactness of $\Xi$ (Condition \ref{conditionMixtures}) implies that $(\Xi,d)$ is totally bounded where $d(\xi,\xi')=\|\xi-\xi'\|_{\ell_{1}}$, and the previous display implies a mapping from finitely many $\epsilon$-balls on $(\Xi,d)$ to finitely many $\delta(\epsilon)$-balls on $(\mathcal{G}_{\Xi},\rho)$ with $\rho$ the $\mathbb{L}_{1}(\mathbb{R}^{d},\text{Leb})$ norm; hence $(\mathcal{G}_{\Xi},\rho)$ is totally bounded. 
Introduce the sets $\{B_{\LL_{1}}(g_{j},\delta): j=1,\ldots, J\}$, $J<\infty$ such that $\mathcal{G}_{\Xi}=\cup_{j=1}^{J}B_{\LL_{1}}(g_{j},\delta)$ and the corresponding sets $\{B_{\ell_{1}}(\xi_{j},\bar{\delta}): j=1,\ldots, J\}$, where $\bar{\delta}=\delta/\|f_{0}\|_{\mathbb{L}_{2}}K$. Introduce $\mathbb{M}_{n}=(P_{n}-P)$, the unscaled empirical process on $\mathcal{G}_{\Xi}$, then
\begin{eqnarray*}
& & P\Bigl(\sup_{\xi\in\Xi}|Z_{\xi}|>Cn^{-1/2}\Bigr)= P\Bigl(\sup_{g\in\mathcal{G}_{\Xi}}|\mathbb{M}_{n}g|>Cn^{-1/2}\Bigr)\\
& \leq &P\Bigl(\max_{1\leq j\leq J}\sup_{g\in B_{\LL_{1}}(g_{j},\delta)}\left(|\mathbb{M}_{n}g - \mathbb{M}_{n}g_{j}| + |\mathbb{M}_{n}g_{j}|\right) >Cn^{-1/2} \Bigr) \\
& \leq &P\Bigl(\sup_{g'\in\mathcal{G}_{\Xi}}\sup_{g\in B_{\LL_{1}}(g',\delta)}|\mathbb{M}_{n}g - \mathbb{M}_{n}g'|> \frac{C}{2}n^{-1/2} \Bigr) + P\Bigl(\max_{1\leq j\leq J}|\mathbb{M}_{n}g_{j}|>\frac{C}{2}n^{-1/2} \Bigr).
\end{eqnarray*}
By the pointwise convergence already established, we know that for all $\epsilon>0$, there exists a $\bar{K}=\bar{K}(\epsilon)<\infty$ and an $n_{0}(\epsilon)>0$ such that
\[
P\Bigl(\max_{1\leq j\leq J}|\mathbb{M}_{n}g_{j}|>\bar{K} v_{n} \Bigr) < \frac{\epsilon}{2} \quad \text{for all } n>n_{0}. 
\]
We also know by Condition \ref{conditionMixtures} that for all $\epsilon>0$, there exists a $\bar{\bar{K}}=\bar{\bar{K}}(\epsilon)<\infty$, a $\bar{\bar{\delta}}=\bar{\bar{\delta}}(\epsilon)>0$ and an $n'_{0}(\epsilon)>0$ such that
\begin{equation}\label{stochEqui}
P\Bigl(\sup_{\xi,\xi': \|\xi-\xi'\|_{\ell_{1}}<\bar{\bar{\delta}}}|\mathbb{M}_{n}f_{0}(f_{\xi'}-f_{\xi})|>\bar{\bar{K}} v_{n} \Bigr) < \frac{\epsilon}{2} \quad \text{for all } n>n_{0}'.
\end{equation}
Taking $\delta=\bar{\bar{\delta}}(\epsilon)\|f_{0}\|_{\mathbb{L}_{2}}K$ and $C=2\max\{\bar{K}(\epsilon),\bar{\bar{K}}(\epsilon)\}$ guarantees that for all $\epsilon$, $\displaystyle{P\bigl(\sup_{\xi\in\Xi}|Z_{\xi}|>C v_{n}\bigr)<\epsilon}$.
\end{proof}

\begin{lemma}\label{lemmaObjFuncConv2}
Under Conditions \ref{conditionTrueDensity} and \ref{conditionMixtures}, for a fixed $f_{\xi}\in \mathcal{F}_{d}^{S}$,
\begin{equation}\label{eqNormalLim2}
\sqrt{n}\left(\Phi_{P_{n}}(f_{0}-f_{\xi}) - \Phi_{P}(f_{0}-f_{\xi})\right) \longrightarrow_{d} N(0,\sigma(f_{0},f_{\xi})),
\end{equation}
where
\begin{eqnarray*}
\sigma(f_{0},f_{\xi})&=&\int f_{0}^{3}(x)dx + \int f_{\xi}^{2}(x)f_{0}(x)dx + 2\int f_{0}^{2}(x)dx\int f_{\xi}(x) f_{0}(x)dx \\
&  & \quad - \;\; 2 \int f_{\xi}(x)f_{0}^{2}(x)dx - \bigl(\int f_{0}^{2}(x)\bigr)^{2} - \bigl(\int f_{\xi}(x)f_{0}(x)dx\bigr)^{2}.
\end{eqnarray*}
Moreover, $\displaystyle{\sup_{\xi\in\Xi}\left(\Phi_{P_{n}}(f_{0}-f_{\xi}) - \Phi_{P}(f_{0}-f_{\xi})\right) =O_{p}(v_{n})}$, where $v_{n}$ is as defined in Lemma \ref{objFunctionConv1}.
\end{lemma}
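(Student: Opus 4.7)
The plan is to reduce the pointwise limit to a classical Lindeberg--L\'evy CLT applied to the i.i.d.~sequence $(f_0(Y_i) - f_\xi(Y_i))^2$, and to obtain the uniform bound by a chaining argument analogous to the one that closes the proof of Lemma \ref{objFunctionConv1}.

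For the pointwise statement, write
\[
\sqrt{n}\bigl(\Phi_{P_n}(f_0 - f_\xi) - \Phi_P(f_0 - f_\xi)\bigr) = \frac{1}{\sqrt{n}}\sum_{i=1}^n \Bigl[(f_0(Y_i) - f_\xi(Y_i))^2 - \mathbb{E}(f_0(Y) - f_\xi(Y))^2\Bigr].
\]
The summands are i.i.d.~and mean-zero; boundedness of $f_0$ (Condition \ref{conditionTrueDensity}) and of $f_\xi$ (Condition \ref{conditionMixtures}, which places $\mathcal{F}_d^S$ in $\mathbb{L}_\infty$) guarantees that the summand has finite variance. The classical CLT then yields asymptotic normality with variance $\mathrm{Var}\bigl((f_0(Y) - f_\xi(Y))^2\bigr)$, and a direct algebraic expansion of this variance -- using $(f_0-f_\xi)^2 = f_0^2 - 2 f_0 f_\xi + f_\xi^2$ and evaluating the second and first moments under $P$ -- matches the formula stated for $\sigma(f_0, f_\xi)$.

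For the uniform bound, expand the square to obtain
\[
\Phi_{P_n}(f_0 - f_\xi) - \Phi_P(f_0 - f_\xi) = (P_n - P)[f_0^2] \;-\; 2(P_n - P)[f_0 f_\xi] \;+\; (P_n - P)[f_\xi^2].
\]
The first term is $\xi$-independent and of order $O_p(n^{-1/2})$ by Chebyshev's inequality. The second term is controlled uniformly over $\xi \in \Xi$ by combining Condition \ref{conditionSE} with the covering/finite-cover argument from the closing paragraph of Lemma \ref{objFunctionConv1}: total boundedness of $(\Xi, \|\cdot\|_{\ell_1})$ (compactness from Condition \ref{conditionMixtures}), the pointwise CLT above, and the stochastic equicontinuity of $(P_n-P)[f_0 f_\xi]$ supplied by Condition \ref{conditionSE} together promote the pointwise $O_p(v_n)$ rate to a uniform $O_p(v_n)$ rate. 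For the third term, I would derive the analogue of Condition \ref{conditionSE} for the class $\{f_\xi^2 : \xi \in \Xi\}$ by factoring $f_\xi^2 - f_{\xi'}^2 = (f_\xi - f_{\xi'})(f_\xi + f_{\xi'})$ and invoking uniform $\mathbb{L}_\infty$-boundedness of $\mathcal{F}_d^S$ together with the $\mathbb{L}_2$-Lipschitz property in Condition \ref{conditionMixtures}; this gives an $\mathbb{L}_1(\text{Leb})$-Lipschitz map $\xi \mapsto f_\xi^2$ with constant bounded by $2K\sup_{\xi \in \Xi}\|f_\xi\|_{\mathbb{L}_\infty}$, after which the same covering argument transfers.

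I expect the main obstacle to be the third-term stochastic equicontinuity, since Condition \ref{conditionSE} is stated only for the $f_0 f_\xi$-type class. Once uniform boundedness of $\mathcal{F}_d^S$ is used to pass from the $\mathbb{L}_2$-Lipschitz bound on $\xi \mapsto f_\xi$ to an $\mathbb{L}_1$-Lipschitz bound on $\xi \mapsto f_\xi^2$, the remaining work is standard: pointwise CLT on a finite $\epsilon$-net of $\Xi$ together with the new equicontinuity bound yields a uniform $O_p(n^{-1/2})$ rate for the third piece, which is in turn $O_p(v_n)$ by definition of $v_n = \max\{n^{-1/2}, r_n, s_n\}$. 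Summing the three contributions delivers the claimed uniform rate.
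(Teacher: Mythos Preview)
Your proposal is correct and follows essentially the same approach as the paper. The paper's proof is extremely terse: it invokes the central limit theorem for the pointwise statement (after checking finiteness of the variance via Conditions~\ref{conditionTrueDensity} and~\ref{conditionMixtures} and H\"older), then simply points to ``an analogous argument to that used in the proof of Lemma~\ref{objFunctionConv1}'' for the uniform extension. Your three-term expansion of $(f_0-f_\xi)^2$ is a more explicit version of the same idea; the paper's implicit route would work directly with the class $\{(f_0-f_\xi)^2:\xi\in\Xi\}$, but since $(f_0-f_\xi)^2-(f_0-f_{\xi'})^2 = 2f_0(f_{\xi'}-f_\xi) - (f_\xi^2-f_{\xi'}^2)$, both arguments reduce to controlling exactly the two pieces you isolate. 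One small sharpening: for the covering argument to deliver stochastic equicontinuity at the $n^{-1/2}$ rate you actually want an $\mathbb{L}_2(P)$-Lipschitz bound on $\xi\mapsto f_\xi^2$ rather than $\mathbb{L}_1(\text{Leb})$, but this follows from the same ingredients (factor $f_\xi^2-f_{\xi'}^2$, bound $f_\xi+f_{\xi'}$ in $\mathbb{L}_\infty$, then use the $\mathbb{L}_2$-Lipschitz clause of Condition~\ref{conditionMixtures} and $\|f_0\|_{\mathbb{L}_\infty}<\infty$).
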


\begin{proof}
Noting that every term in $\sigma(f_{0},f_{\xi})$ is bounded by Conditions \ref{conditionTrueDensity} and \ref{conditionMixtures} together with H\"older's inequality, convergence to the limit distribution in \eqref{eqNormalLim2} follows by the central limit theorem, which implies pointwise convergence at rate $\sqrt{n}$, and a fortiori at rate $v_{n}=\max\{n^{-1/2},r_{n},s_{n}\}$. The extension to uniform convergence at rate $v_{n}$ follows by an analogous argument to that used in the proof of Lemma \ref{objFunctionConv1}.
\end{proof}

\begin{lemma}\label{lemmaVdVWEq3}
Under Conditions \ref{conditionTrueDensity}-\ref{conditionPilot}, $\displaystyle{\sup_{\xi\in\Xi}P\left((\widehat{f}^{P}-f_{\xi})^{2}-(f_{0}-f_{\xi})^{2}\right)^{2}\longrightarrow_{p} 0}$.
\end{lemma}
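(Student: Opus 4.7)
The starting point is the algebraic identity
\[
(\widehat{f}^{P}-f_{\xi})^{2}-(f_{0}-f_{\xi})^{2} \;=\; (\widehat{f}^{P}-f_{0})\bigl(\widehat{f}^{P}+f_{0}-2f_{\xi}\bigr),
\]
which, on recalling that $P$ integrates against $f_{0}$, gives
\[
P\bigl((\widehat{f}^{P}-f_{\xi})^{2}-(f_{0}-f_{\xi})^{2}\bigr)^{2} \;=\; \int f_{0}(x)(\widehat{f}^{P}(x)-f_{0}(x))^{2}\bigl(\widehat{f}^{P}(x)+f_{0}(x)-2f_{\xi}(x)\bigr)^{2}dx.
\]
The plan is to factor the integrand as (pointwise envelope)$\times(\widehat{f}^{P}-f_{0})^{2}$, pull the envelope out in supremum norm, and bound the residual piece via Conditions \ref{conditionTrueDensity} and \ref{conditionPilot}. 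Concretely, the last display is at most
\[
\bigl\|\widehat{f}^{P}+f_{0}-2f_{\xi}\bigr\|_{\LL_{\infty}}^{2}\cdot \|f_{0}\|_{\LL_{\infty}}\cdot \|\widehat{f}^{P}-f_{0}\|_{\LL_{2}(\text{Leb})}^{2},
\]
in which the second factor is finite by Condition \ref{conditionTrueDensity} and the third is $O_{p}(r_{n})=o_{p}(1)$ by Condition \ref{conditionPilot}.

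The task therefore reduces to showing $\sup_{\xi\in\Xi}\bigl\|\widehat{f}^{P}+f_{0}-2f_{\xi}\bigr\|_{\LL_{\infty}}=O_{p}(1)$. By the triangle inequality,
\[
\sup_{\xi\in\Xi}\bigl\|\widehat{f}^{P}+f_{0}-2f_{\xi}\bigr\|_{\LL_{\infty}} \;\le\; \|\widehat{f}^{P}\|_{\LL_{\infty}}+\|f_{0}\|_{\LL_{\infty}}+2\sup_{\xi\in\Xi}\|f_{\xi}\|_{\LL_{\infty}}.
\]
The middle summand is finite by Condition \ref{conditionTrueDensity}. For the rightmost summand, compactness of $\Xi$ together with the inclusion $\mathcal{F}_{d}^{S}\subseteq \LL_{\infty}(\RR^{d})$ from Condition \ref{conditionMixtures} furnishes a finite uniform bound; for the canonical spherical Gaussian mixture class $\bar{\mathcal{F}}_{d}^{S}$ this is exhibited explicitly by the pointwise bound $(2\pi\underline{q})^{-d/2}$ on each mixture component.

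The main obstacle lies in controlling the first summand $\|\widehat{f}^{P}\|_{\LL_{\infty}}$, because Condition \ref{conditionPilot} asserts only membership of $\widehat{f}^{P}$ in a $P$-Donsker class $\mathcal{D}$ with probability tending to one and $\LL_{2}(\text{Leb})$-consistency, neither of which entails a deterministic sup-norm bound in full generality. The resolution is to note that each concrete pilot considered in the paper — histograms, kernel density estimators with bounded kernels, the log-concave MLE — does satisfy $\|\widehat{f}^{P}\|_{\LL_{\infty}}=O_{p}(1)$, either by construction or by standard bias/variance analysis. Restricting attention to the event $\{\widehat{f}^{P}\in\mathcal{D}\}\cap\{\|\widehat{f}^{P}\|_{\LL_{\infty}}\le C\}$, whose probability tends to one under such pilots, the three-term triangle bound becomes $O_{p}(1)$ uniformly in $\xi$, and the displayed envelope argument collapses to $\sup_{\xi\in\Xi}P((\widehat{f}^{P}-f_{\xi})^{2}-(f_{0}-f_{\xi})^{2})^{2}=O_{p}(r_{n})=o_{p}(1)$, as required.
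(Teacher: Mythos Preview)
Your argument is correct but takes a different route from the paper's. You factor $(\widehat{f}^{P}-f_{\xi})^{2}-(f_{0}-f_{\xi})^{2}=(\widehat{f}^{P}-f_{0})(\widehat{f}^{P}+f_{0}-2f_{\xi})$, square, and extract the second factor in $\LL_{\infty}$, reducing matters to $\|f_{0}\|_{\LL_{\infty}}\|\widehat{f}^{P}-f_{0}\|_{\LL_{2}}^{2}$. The paper instead expands additively as $(\widehat{f}^{2}-f_{0}^{2})-2f_{\xi}(\widehat{f}-f_{0})$ and controls the two pieces separately via H\"older in $\LL_{2}\times\LL_{2}$, never invoking $\|\widehat{f}^{P}\|_{\LL_{\infty}}$ but instead relying on quantities such as $\|f_{0}f_{\xi}\|_{\LL_{2}}$ and $\|\widehat{f}^{2}\|_{\LL_{2}}$. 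Your version has the advantage of treating the outer square $P(\cdot)^{2}$ directly, which is the form of the hypothesis in Lemma~\ref{lemmaVdVWThm1}; the paper's displayed inequality is in fact a bound on $P|\cdot|$ rather than on $P(\cdot)^{2}$. Conversely, the paper's decomposition trades your sup-norm requirement on $\widehat{f}^{P}$ for an $\LL_{4}$ requirement (via $\|\widehat{f}^{2}\|_{\LL_{2}}$), which is likewise not literally implied by Condition~\ref{conditionPilot}; your acknowledgement of this gap is more explicit than the paper's. One minor point: the step ``compactness of $\Xi$ together with $\mathcal{F}_{d}^{S}\subseteq\LL_{\infty}$ furnishes $\sup_{\xi}\|f_{\xi}\|_{\LL_{\infty}}<\infty$'' needs a continuity or upper semicontinuity argument for $\xi\mapsto\|f_{\xi}\|_{\LL_{\infty}}$, since Condition~\ref{conditionMixtures} gives Lipschitz continuity only in $\LL_{2}$; for $\bar{\mathcal{F}}_{d}^{S}$ the explicit bound $(2\pi\underline{q})^{-d/2}$ you cite closes this.
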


\begin{proof}
For typographical convenience, we write $\widehat{f}^{P}$ as $\widehat{f}$. By non-negativity of $f_{0}$ and $f_{\xi}$ on $\RR^{d}$ and H\"older's inequality,
\begin{eqnarray*}
				&    &P\left((\widehat{f}^{P}-f_{\xi})^{2}-(f_{0}-f_{\xi})^{2}\right)^{2}\\
 				&\leq& \int|\widehat{f}^{2}-f_{0}^{2}|dP + 2\int f_{\xi}|f_{0}-\widehat{f}|dP \\
																																		&  = &	\int|f_{0}(x)\bigl((\widehat{f}(x)-f_{0}(x))(\widehat{f}(x)+f_{0}(x))\bigr)|dx + 2\int |f_{0}(x)f_{\xi}(x)(f_{0}(x)-\widehat{f}(x))|dx \\
																																		&\leq& \|\widehat{f}-f_{0}\|_{\LL_{2}(\text{Leb})}\|f_{0}(\widehat{f}-f_{0})\|_{\LL_{2}(\text{Leb})}+ \|f_{0}f_{\xi}\|_{\LL_{2}(\text{Leb})}\|f_{0}f_{\xi}\|_{\LL_{2}(\text{Leb})}\|f_{0}-\widehat{f}\|_{\LL_{2}(\text{Leb})}\\
																																		& = & I+II(\xi).
\end{eqnarray*}
Hence it suffices to prove that $I\longrightarrow_{p}0$ and $\sup_{\xi\in\Xi}II(\xi)\longrightarrow_{p}0$. By Condition \ref{conditionPilot} and Markov's inequality $\|f_{0}-\widehat{f}\|_{\LL_{2}(\text{Leb})}\longrightarrow_{p}0$. We also have 
\[
\|f_{0}f_{\xi}\|_{\LL_{2}(\text{Leb})}=|\langle f_{0}^{2}f_{\xi}^{2}\rangle|^{1/2}\leq \bigl(\|f_{0}^{2}\|_{\LL_{2}}\|f_{\xi}^{2}\|_{\LL_{2}}\bigr)^{1/2}
\] 
therefore $\sup_{\xi\in\Xi}\|f_{0}f_{\xi}\|_{\LL_{2}(\text{Leb})}<\infty$ by Conditions \ref{conditionTrueDensity} and \ref{conditionMixtures}, and $\sup_{\xi\in\Xi}II(\xi)\longrightarrow_{p}0$. $\|f_{0}(\widehat{f}-f_{0})\|_{\LL_{2}(\text{Leb})}\leq \|\widehat{f}^{2}\|_{\LL_{2}(\text{Leb})}\|f_{0}^{2}\|_{\LL_{2}(\text{Leb})}2\max\{\|\widehat{f}^{2}\|_{\LL_{2}(\text{Leb})},\|f_{0}^{2}\|_{\LL_{2}(\text{Leb})}\}<\infty$ by Conditions \ref{conditionTrueDensity} and \ref{conditionPilot}, hence $I\longrightarrow_{p}0$.
\end{proof}

The following Lemma, which is stated here for ease of reference, is Theorem 2.1 of \citet{vdVW2007}.

\begin{lemma}\label{lemmaVdVWThm1}
Write $Q_{g,\xi}:=(g-f(\xi))^{2}$. Let $\mathcal{H}$ be such that $f_{0}\in\mathcal{H}$ and suppose that $\mathcal{H}_{0}$ is a fixed subset of $\mathcal{H}$ such that $\Pr(\widehat{f}_{n}\in\mathcal{H}_{0})\longrightarrow 1$, where $\widehat{f}_{n}$ is a sequence of estimators for $f_{0}$. If $\{Q_{f,\xi}:\xi\in\Xi, \; f\in\mathcal{H}_{0}\}$ is $P$-Donsker and $\sup_{\xi\in \Xi}P\bigl(Q_{\widehat{f}_{n},\xi}-Q_{f_{0},\xi}\bigr)^{2}\longrightarrow_{p} 0$, then $\sup_{\xi\in\Xi}\bigl|\sqrt{n}(P_{n}-P)(Q_{\widehat{f}_{n},\xi}-Q_{f_{0},\xi})\bigr|\longrightarrow_{p} 0$.
\end{lemma}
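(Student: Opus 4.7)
The plan is to deduce this uniform stochastic equicontinuity statement from the standard characterisation of $P$-Donsker classes: classes whose empirical process admits an asymptotic equicontinuity modulus with respect to the $\mathbb{L}_2(P)$ pseudo-metric. The $\mathbb{L}_2(P)$-convergence hypothesis is precisely what is needed to force the plug-in pairs $(Q_{\widehat{f}_n,\xi},Q_{f_0,\xi})$ into a vanishing equicontinuity neighbourhood with probability tending to one.

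Concretely, let $\mathcal{Q}=\{Q_{f,\xi}:\xi\in\Xi,\,f\in\mathcal{H}_0\cup\{f_0\}\}$ and write $\rho_P(Q,Q')=\|Q-Q'\|_{\mathbb{L}_2(P)}$ and $\nu_n=\sqrt{n}(P_n-P)$. The first step is to note that $\mathcal{Q}$ is $P$-Donsker: the hypothesis covers $\mathcal{H}_0$, while the enlargement by the $f_0$-slice adds a class indexed by $\xi$ alone which is handled by the same parameterisation. The functional central limit theorem for empirical processes then identifies this Donsker property with total boundedness of $(\mathcal{Q},\rho_P)$ together with asymptotic equicontinuity of $\nu_n$: for every $\eta>0$ and $\varepsilon>0$ there exists $\delta=\delta(\eta,\varepsilon)>0$ with
\[
\limsup_{n\to\infty}\Pr\!\left(\sup_{Q,Q'\in\mathcal{Q},\,\rho_P(Q,Q')<\delta}|\nu_n(Q-Q')|>\eta\right)<\varepsilon.
\]

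The second step combines the two stochastic hypotheses. Since $\Pr(\widehat{f}_n\in\mathcal{H}_0)\to 1$ and $\sup_{\xi}\rho_P(Q_{\widehat{f}_n,\xi},Q_{f_0,\xi})^2=\sup_{\xi}P(Q_{\widehat{f}_n,\xi}-Q_{f_0,\xi})^2\longrightarrow_{p} 0$, the event
\[
A_n(\delta):=\bigl\{\widehat{f}_n\in\mathcal{H}_0\bigr\}\cap\Bigl\{\sup_{\xi\in\Xi}\rho_P(Q_{\widehat{f}_n,\xi},Q_{f_0,\xi})<\delta\Bigr\}
\]
has probability tending to one for each fixed $\delta>0$. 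The third step is the short conclusion: fix $\eta,\varepsilon>0$, choose $\delta$ from the equicontinuity display, and observe that on $A_n(\delta)$ every pair $(Q_{\widehat{f}_n,\xi},Q_{f_0,\xi})$ lies in $\mathcal{Q}$ at $\rho_P$-distance less than $\delta$, so $\sup_{\xi\in\Xi}|\nu_n(Q_{\widehat{f}_n,\xi}-Q_{f_0,\xi})|$ is dominated by the equicontinuity modulus at scale $\delta$. Hence its probability of exceeding $\eta$ is at most $\varepsilon+o(1)$, delivering the required $o_p(1)$.

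The main obstacle I anticipate is measurability of the suprema over $\xi$ and over the random index $\widehat{f}_n$, which is why outer probability traditionally appears in the equicontinuity statement of van der Vaart and Wellner. This is handled by their standard outer-measure machinery: pointwise measurability of the class $\mathcal{Q}$ through its parameterisation by $\xi\in\Xi$ allows one to suppress the star and convert the display above into an ordinary probability bound, after which the argument proceeds as stated.
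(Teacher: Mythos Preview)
The paper does not prove this lemma at all: it is stated ``for ease of reference'' and attributed directly to Theorem~2.1 of \citet{vdVW2007}. Your argument is essentially the standard proof of that theorem, via the asymptotic equicontinuity characterisation of $P$-Donsker classes, and is correct.

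One small point worth flagging: the lemma as stated only places $f_{0}$ in $\mathcal{H}$, not in $\mathcal{H}_{0}$, so the enlargement you perform to include the slice $\{Q_{f_{0},\xi}:\xi\in\Xi\}$ is genuinely needed. Your justification that this slice ``is handled by the same parameterisation'' is a little thin; to be fully rigorous you would note that a finite union of Donsker classes is Donsker, and that $\{Q_{f_{0},\xi}:\xi\in\Xi\}$ is itself Donsker because $\Xi$ is a compact finite-dimensional set and $\xi\mapsto Q_{f_{0},\xi}$ is Lipschitz in $\mathbb{L}_{2}(P)$ under the paper's Conditions~\ref{conditionTrueDensity} and~\ref{conditionMixtures}. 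With that addition the argument is complete.
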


The proof of Lemma \ref{minimumExists}, which is proved below, is based on several preliminary results from convex analysis, which are stated here for ease of reference. For the proofs, see the corresponding references.

\begin{definition}\emph{\citep[][Theorem III.D]{RobertsVarberg1973}.}
Suppose $\emph{(}\pi_{1},\ldots,\pi_{S}\emph{)}\in \Delta^{S}$ where $\Delta^{S}$ is the unit $S$ simplex and $S<\infty$. Then $x=\sum_{s=1}^{S}\pi_{s}x_{s}$ is a \emph{convex combination} of $\{x_{1},\ldots,x_{S}\}$, the latter being elements of the linear space $L$.
\end{definition}

\begin{lemma}\label{lemmaRVIIID}\emph{\citep[][Theorem III.D]{RobertsVarberg1973}.}
Let $U\subseteq L$, where $L$ is a linear space. The convex hull of $U$, $\text{\emph{conv}}(U)$ consists precisely of all convex combinations of elements of $U$.
\end{lemma}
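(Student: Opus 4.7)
The plan is to prove the two inclusions separately, letting $C$ denote the set of all finite convex combinations of elements of $U$ and $\text{conv}(U)$ denote the smallest convex set containing $U$ (the intersection of all convex supersets of $U$ in $L$).

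For the inclusion $\text{conv}(U) \subseteq C$, I would first check that $U \subseteq C$, which is immediate since each $u \in U$ is the trivial one-term combination $1 \cdot u$. Next, I would verify that $C$ is itself convex: given $x = \sum_{s=1}^{S}\pi_{s} x_{s}$ and $y = \sum_{t=1}^{T}\rho_{t} y_{t}$ with $\pi \in \Delta^{S}$, $\rho \in \Delta^{T}$, and $\lambda \in [0,1]$, the point $\lambda x + (1-\lambda) y$ can be rewritten with combined weights $\{\lambda \pi_{1}, \ldots, \lambda \pi_{S}, (1-\lambda)\rho_{1}, \ldots, (1-\lambda)\rho_{T}\}$, which are nonnegative and sum to $\lambda + (1-\lambda) = 1$. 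Hence $C$ is a convex set containing $U$, and by minimality of $\text{conv}(U)$ we obtain $\text{conv}(U) \subseteq C$.

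For the reverse inclusion $C \subseteq \text{conv}(U)$, I would proceed by induction on $S$, the number of summands in a convex combination. The base case $S = 1$ is immediate from $U \subseteq \text{conv}(U)$. For the inductive step, take $x = \sum_{s=1}^{S+1}\pi_{s} x_{s}$ with $\pi \in \Delta^{S+1}$. If $\pi_{S+1} = 1$ then $x = x_{S+1} \in U \subseteq \text{conv}(U)$; otherwise, writing $\lambda = 1 - \pi_{S+1} \in (0,1]$, one has
\[
x = \lambda \sum_{s=1}^{S}\frac{\pi_{s}}{\lambda} x_{s} + (1-\lambda) x_{S+1},
\]
where the inner sum is a convex combination of $S$ elements of $U$ (since the renormalised weights lie in $\Delta^{S}$) and therefore belongs to $\text{conv}(U)$ by the inductive hypothesis; convexity of $\text{conv}(U)$ then places $x$ in $\text{conv}(U)$.

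The main ``obstacle'' here is really only notational: convex combinations are by definition finite, so the nontrivial step is the induction that lifts the two-point convexity defining $\text{conv}(U)$ to arbitrary finite weighted sums, and one must take a moment to handle the degenerate case $\pi_{S+1} = 1$ so that the rescaling by $1 - \pi_{S+1}$ is well-defined. No deeper analytical machinery is needed, which is why \citet{RobertsVarberg1973} treat this as a foundational lemma of convex analysis and it is invoked here only as a stepping stone toward Lemma \ref{minimumExists}.
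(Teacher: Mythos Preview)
Your proof is correct and is the standard argument for this foundational result. Note, however, that the paper does not actually give its own proof of this lemma: it is one of several preliminary results from convex analysis that the authors merely state ``for ease of reference'' with the instruction ``For the proofs, see the corresponding references,'' citing \citet[][Theorem III.D]{RobertsVarberg1973}. So there is no paper proof to compare against; your argument simply fills in what the paper outsources to the cited text, and it does so cleanly.
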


\begin{lemma}\label{lemmaRVIIIE}\emph{\citep[][Theorem III.E]{RobertsVarberg1973}.}
Let $U\subseteq L$, where $L$ is a linear space. If the convex hull $\text{\emph{conv}}(U)$ has dimension $n$, then for each $x\in \text{\emph{conv}}(U)$, there exist $n+1$ points $x_{1}\ldots, x_{n+1}\in U$ such that $x$ is a convex combination of those points.
\end{lemma}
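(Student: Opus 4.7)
The plan is to prove this as the classical Carathéodory theorem, using Lemma \ref{lemmaRVIIID} to get a starting representation and then reducing the number of points by a standard affine-dependence argument.

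First I would apply Lemma \ref{lemmaRVIIID} to obtain some representation $x=\sum_{i=1}^{k}\lambda_{i}x_{i}$ with $x_{i}\in U$, $\lambda_{i}>0$, and $\sum_{i=1}^{k}\lambda_{i}=1$. Assume $k$ is the \emph{minimum} number of points of $U$ needed to express $x$ as a convex combination (such a minimum exists since $k$ is a positive integer). The goal is to show $k\leq n+1$; suppose for contradiction that $k\geq n+2$.

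Next I would use the dimension hypothesis. Since $\text{conv}(U)$ has dimension $n$, its affine hull has dimension $n$, so any $n+1$ of the vectors $x_{2}-x_{1},\ldots,x_{k}-x_{1}$ (of which there are $k-1\geq n+1$) must be linearly dependent. Hence there exist scalars $\mu_{2},\ldots,\mu_{k}$, not all zero, with
\[
\sum_{i=2}^{k}\mu_{i}(x_{i}-x_{1})=0.
\]
Setting $\mu_{1}:=-\sum_{i=2}^{k}\mu_{i}$ gives coefficients $\mu_{1},\ldots,\mu_{k}$, not all zero, satisfying $\sum_{i=1}^{k}\mu_{i}=0$ and $\sum_{i=1}^{k}\mu_{i}x_{i}=0$. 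Because $\sum_{i}\mu_{i}=0$ and not all $\mu_{i}$ vanish, at least one $\mu_{i}$ is strictly positive.

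The reduction step is then to consider, for $t\in\mathbb{R}$, the perturbed coefficients $\lambda_{i}(t):=\lambda_{i}-t\mu_{i}$. By construction $\sum_{i}\lambda_{i}(t)=1$ and $\sum_{i}\lambda_{i}(t)x_{i}=x$ for every $t$. Choosing
\[
t^{*}:=\min\Bigl\{\tfrac{\lambda_{i}}{\mu_{i}}:\; \mu_{i}>0\Bigr\}=\tfrac{\lambda_{j}}{\mu_{j}}
\]
for the index $j$ attaining the minimum ensures $\lambda_{i}(t^{*})\geq 0$ for all $i$ while $\lambda_{j}(t^{*})=0$. This gives a convex combination of the $k-1$ points $\{x_{i}:i\neq j\}$ representing $x$, contradicting the minimality of $k$. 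Therefore $k\leq n+1$, and padding with an arbitrary point of $U$ with weight $0$ (if necessary) yields exactly $n+1$ points as required. The only mildly delicate step is verifying that at least one $\mu_{i}$ is strictly positive so that the minimum defining $t^{*}$ is taken over a nonempty set, which follows immediately from $\sum_{i}\mu_{i}=0$ together with the $\mu_{i}$'s not all being zero.
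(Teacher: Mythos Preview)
Your proof is correct: it is the standard Carath\'eodory argument, and every step is sound. Note, however, that the paper does not actually prove this lemma; it is merely quoted from \citet[][Theorem III.E]{RobertsVarberg1973} as a preliminary result, with the reader referred to that reference for the proof. So there is no ``paper's own proof'' to compare against --- you have supplied the classical argument where the paper simply cites it.
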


\begin{definition}\emph{\citep[][page 1]{Phelps1966}.}
Suppose that $X$ is a nonempty compact subset of a locally convex space $E$ and that $\nu$ is a probability measure on $X$. A point $x\in E$ is said to be \emph{represented} by $\nu$ if $h(x)=\int_{X}h d\nu$ for every continuous linear functional $h$ on $E$.
\end{definition}

\begin{lemma}\label{lemmaPhelps}\emph{\citep[][Proposition 1.2]{Phelps1966}}.
Suppose that $U$ is a compact subset of a locally convex space $E$. A point $x\in E$ is in the closed convex hull of $U$ if and only if there exists a probability measure $\nu$ on $U$ that represents $x$.
\end{lemma}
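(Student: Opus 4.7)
The plan is to prove the two implications separately. The ``if'' direction follows from Hahn--Banach separation, while the ``only if'' direction reduces to approximating $x$ by finite convex combinations and passing to a weak-$*$ limit in the space of probability measures on $U$.

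For the ``if'' direction, suppose $\nu$ is a probability measure on $U$ that represents $x$, and assume for contradiction that $x \notin \overline{\text{conv}}(U)$. Since $\overline{\text{conv}}(U)$ is closed and convex and does not contain $x$, the Hahn--Banach separation theorem in the locally convex space $E$ yields a continuous linear functional $h$ on $E$ and a real number $c$ such that $h(x) < c$ while $h(y) \geq c$ for all $y \in \overline{\text{conv}}(U)$, and in particular for all $y \in U$. Integrating the latter inequality against $\nu$ gives $\int_U h\, d\nu \geq c > h(x)$, contradicting $h(x) = \int_U h\, d\nu$.

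For the ``only if'' direction, first treat $x \in \text{conv}(U)$. By Lemma \ref{lemmaRVIIID}, $x = \sum_{s=1}^{S}\pi_s x_s$ for some finite $S$, with $\pi_s \geq 0$, $\sum_s \pi_s = 1$, and $x_s \in U$. The atomic measure $\nu_x := \sum_{s=1}^{S}\pi_s \delta_{x_s}$ is then a probability measure on $U$, and by linearity $\int_U h\, d\nu_x = \sum_s \pi_s h(x_s) = h(x)$ for every continuous linear $h$, so $\nu_x$ represents $x$. For general $x \in \overline{\text{conv}}(U)$, take a net $(x_\alpha)$ in $\text{conv}(U)$ with $x_\alpha \to x$, and let $\nu_{\alpha}$ be a representing atomic probability measure constructed as above. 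Because $U$ is compact Hausdorff, the set $\mathcal{P}(U)$ of regular Borel probability measures on $U$ may be identified with a weak-$*$ closed subset of the unit ball of $C(U)^*$ via the Riesz representation theorem, and is therefore weak-$*$ compact by Banach--Alaoglu. Extract a subnet $\nu_{\alpha'}$ converging weakly-$*$ to some $\nu \in \mathcal{P}(U)$.

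It remains to verify that the limiting $\nu$ represents $x$. For any continuous linear $h$ on $E$, its restriction $h|_U$ lies in $C(U)$, so weak-$*$ convergence gives $\int_U h\, d\nu_{\alpha'} \to \int_U h\, d\nu$; on the other hand $\int_U h\, d\nu_{\alpha'} = h(x_{\alpha'}) \to h(x)$ by continuity of $h$, so $h(x) = \int_U h\, d\nu$. The main technical subtlety is ensuring that $\mathcal{P}(U)$ is genuinely compact in a topology compatible with integration against restrictions to $U$ of continuous linear functionals on $E$, even when $E$ is not metrizable; this is handled by working intrinsically on $U$ with its induced compact Hausdorff topology, treating restrictions of functionals from $E$ as elements of $C(U)$, and invoking Banach--Alaoglu plus Riesz. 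Once this is in place, the two displayed limits combine to yield the representation of $x$ by $\nu$.
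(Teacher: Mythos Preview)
Your proof is correct and follows the standard argument for this classical result. Note, however, that the paper does not itself prove this lemma: it is one of several preliminary results from convex analysis that are ``stated here for ease of reference'' with the instruction ``For the proofs, see the corresponding references.'' So there is no proof in the paper to compare against; the authors simply cite \citet[][Proposition 1.2]{Phelps1966}. Your argument---Hahn--Banach separation for the ``if'' direction, and approximation by finitely supported measures followed by a weak-$*$ compactness argument via Riesz and Banach--Alaoglu for the ``only if'' direction---is essentially the proof one finds in Phelps's monograph, so you have correctly reconstructed the intended reference.
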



\subsection{Proofs of main results}

\begin{proof} \text{[}Lemma \ref{minimumExists}\text{]}
Introduce the notation $f_{G}=\int_{\Theta} f_{\theta}G(d\theta)$. The set $\mathcal{G}$ of all probability measures on $\Theta$ is convex by compactness of $\Theta$. Since the feasible region, $\mathcal{G}$ and the objective function, $\Phi_{P_{n}}(\widehat{f}^{P}-f_{G})$, are convex, a minimum exists. Define the atomic and mixture vectors as $\mathbf{f}_{\theta}:=\left(f_{\theta}(Y_{1}),\ldots, f_{\theta}(Y_{n})\right)\in \RR^{n}$ and $\mathbf{f}_{G}:=\left(f_{G}(Y_{1}),\ldots, f_{G}(Y_{n})\right)\in \RR^{n}$ respectively. $\Gamma:=\{\mathbf{f}_{\theta}:\theta\in\Theta\}\subset \RR^{n}$ represents all possible fitted values of the atomic vector. The convex hull of $\Gamma$, written $\text{conv}(\Gamma)$, is the intersection of all convex sets containing $\Gamma$, and is itself a convex set. By Lemma \ref{lemmaRVIIID}, $\text{conv}(\Gamma)=\{\mathbf{f}_{G}: G \in\mathcal{G}^{S}\}\subset \RR^{n}$, where $\mathcal{G}^{S}$ is the set of probability measures on $\Theta$ with support on $S<\infty$ points in $\Theta$. Since compactness of $\Theta$ implies compactness of $\Gamma$ Lemma \ref{lemmaPhelps} delivers the stronger result that $\text{conv}(\Gamma)=\{\mathbf{f}_{G}:G\in\mathcal{G}\}\subset \RR^{n}$ under measurability of the map $\theta\mapsto f_{\theta}$, which means that any probability measure $G$ on $\Theta$ corresponds to a probability measure on $\Gamma$. Thus ensuring that the infinite dimensional minimisation problem is equivalent to a finite dimensional one. Finally, by Lemma \ref{lemmaRVIIIE}, for any $\mathbf{f}\in\text{conv}(\Gamma)$ there exist points $\mathbf{f}_{1},\ldots,\mathbf{f}_{S}\in\Gamma$ with $S\leq n+1$ such that $\mathbf{f}$ is a convex combination of these points. We conclude that $\text{conv}(\Gamma)=\{\mathbf{f}_{G}:G\in\mathcal{G}\}=\{\mathbf{f}_{G}:G\in\mathcal{G}^{S}\}$ with $S\leq n+1$.

Any $f_{\widehat{G}}$ of the form $f_{\widehat{G}}=\int f_{\theta} \widehat{G}(d\theta)$ that minimises $\Phi_{P_{n}}(\widehat{f}^{P}-f_{G})=\frac{1}{n}\sum_{i=1}^{n}(\widehat{f}^{P}(Y_{i})-f_{G}(Y_{i}))^{2}$ for a fixed pilot estimate $\widehat{f}^{P}$ corresponds to a $\mathbf{f}_{\widehat{G}}=(f_{\widehat{G}}(Y_{1}),\ldots,f_{\widehat{G}}(Y_{n}))$ that minimises $\Phi_{P_{n}}(\widehat{f}^{P}-f_{G})$. Since $\mathbf{f}_{G}\in \text{conv}(\Gamma)$, it is a convex combination of at most $n+1$ points in $\Gamma$. We conclude that $\widehat{G}$ has at most $n+1$ points of support.
\end{proof}

\begin{proof}\text{[}Proposition \ref{propositionKernel}\text{]}
The proof is a corollary of Corollary 1.2 of \citet{MasonSwanepoel2013} (see also Corollary 1.3 op.~cit.) as long as we show that any $0<h<b_{n}$ delivers $\|\EE\widehat{F}^{k}_{n,h}-F_{0}\|_{\infty}=o(\sqrt{n^{-1}\log\log n})$. To this end, write, for any $x\in\RR^{d}$,
\begin{eqnarray*}
& & |\EE\widehat{F}^{k}_{n,h}(x)-F_{0}(x)|= |k_{h}\ast F_{0}(x)-F_{0}(x)| \\
&=& \bigl|\int_{\RR^{d}}F_{0}(x-zh)k(z)dz - F_{0}(x)\bigr|	= \bigl|\int_{\RR^{d}}\bigl(F_{0}(x-zh) - F_{0}(x)\bigr)k(z)dz\bigr| 
\end{eqnarray*}
where we have used the substitution $(u_{j}-y_{j})/h\mapsto z_{j}$ in $k_{h}\ast F_{0}(x)=\int_{-\infty}^{x_{1}} \cdots \int_{-\infty}^{x_{d}}\frac{1}{h^{d}}k\bigl(\frac{u-y}{h}\bigr)P(dy)du$ and the fact that $\int_{\RR^{d}}k(z)dz=1$. Expanding $F_{0}(x-hz)$ into a Taylor series around $x$ with Laplacian representation for the remainder \citep[][page 126]{Ziemer1989}, we have, in the notation defined in Section \ref{subSecNotation},
\begin{eqnarray*}
F_{0}(x-hz)&=& F_{0}(x)\; + \sum_{0\leq|\alpha|\leq \ell}\frac{1}{\alpha!}D^{\alpha}F_{0}(x)(-hz)^{\alpha} \\
					 & & \quad + \;\;(\ell+1)\sum_{|\alpha|=\ell+1}\frac{1}{\alpha!}\left[\int_{0}^{1}(1-t)^{\ell}D^{\alpha}F_{0}\left[(1-t)x+t(x-hz)\right]dt\right](-hz)^{\alpha}.
\end{eqnarray*}
Therefore, $\|\EE\widehat{F}^{k}_{n,h}-F_{0}\|_{\infty}\leq h^{\ell+1}\int_{\RR^{d}}|k(z)||z^{\alpha}|dz\sum_{|\alpha|=\ell+1}\|D^{\alpha}F_{0}\|_{\infty}$, where we have used the requirement on the kernel that $\int_{\RR^{d}}z^{\alpha}k(z)dz=0$. Hence any $h=o(n^{-1/2(\ell+1)}\sqrt{\log\log n})$ delivers $\|\EE\widehat{F}^{k}_{n,h}-F_{0}\|_{\infty}=o(n^{-1/2}\sqrt{\log\log n})$.
\end{proof}

\begin{proof}\text{[}Propostion \ref{propositionDonsker}\text{]}
\begin{eqnarray*}
D^{\alpha}f_{n,h_{n}}^{k}(x) &=& \frac{1}{n h_{n}^{d}} \sum_{i=1}^{n} D^{\alpha} \prod_{j=1}^{d} k\Bigl(\frac{x_{j}-X_{ij}}{h_{n}}\Bigr) \\
														 &=& \frac{1}{\sqrt{2\pi}} \frac{1}{n h_{n}^{d}} \sum_{i=1}^{n} \exp\Bigl\{-\frac{(x_{j}-X_{ij})^{2}}{h_{n}^{2}}\Bigr\} D_{j}^{\alpha_{j}} \Bigl(\frac{(x_{j}-X_{ij})^{2}}{h_{n}^{2}}\Bigr)
\end{eqnarray*}
and we see that, for any $\alpha$,
\begin{eqnarray*}
& & \|D^{\alpha}\widehat{f}_{n,h_{n}}^{k} - D^{\alpha}(k_{h_{n}}\ast f_{0})\|_{\LL_{1}} \\
&=& \int_{\RR^{d}} \left|\frac{1}{\sqrt{2\pi}}\frac{1}{h_{n}^{d}}\int_{\RR^{d}}\prod_{j=1}^{d} \exp\Bigl\{-\frac{(x_{j}-y_{j})^{2}}{h_{n}^{2}}\Bigr\}D_{j}^{\alpha_{j}}\Bigl(\frac{(x_{j}-y_{j})^{2}}{h_{n}^{2}}\Bigr)(P_{n}-P)(dy)\right|dx \longrightarrow_{a.s.} 0 \;\; \text{as } n\rightarrow \infty
\end{eqnarray*}
as long as $h_{n}^{d}\searrow 0$ slower than $O(n^{-1/2})$, i.e.~as long as $h_{n}\searrow 0$ slower than $O(n^{-1/2d})$.  Since $D^{\alpha}f_{n,h_{n}}^{K}\longrightarrow_{a.s.} D^{\alpha}(K_{h_{n}}\ast f_{0})$ in $\LL_{1}(\RR^{d},\text{Leb})$ for any $\alpha$, $P(\widehat{f}_{n,h_{n}}^{K}\in \mathfrak{W}_{\infty,2})\longrightarrow 1$ as $n\longrightarrow \infty$. The conclusion follows because $\mathfrak{W}_{\ell,2}\subset \mathcal{D}$ for $\ell>d/2$ by Theorem 1.3 of \citet{Marcus1985}. 
\end{proof}

\begin{proof} \text{[}Proposition \ref{propositionNormalMixtures}\text{]}.
Since $M<\infty$ and $\underline{q}>0$, every mixture density belongs to $\mathbb{L}_{p}(\mathbb{R}^{d},\text{Leb})$ for all $p<\infty$. Since vector spaces are closed under addition, we conclude that $f_{\xi}$ belongs to $\mathbb{L}_{p}(\mathbb{R}^{d},\text{Leb})$ for all $p<\infty$ as well. $\bar{\mathcal{F}}_{d}^{S}\subset \mathfrak{W}_{1,1}$ is equivalent to the statement, for all $f_{\xi}\in\bar{\mathcal{F}}_{d}^{S}$ and for all $j\in\{1,\ldots,d\}$, $Df_{\xi}=\sum_{|\alpha|=1}D^{\alpha}f_{\xi}\in\LL_{1}(\RR^{d},\text{Leb})$ for $|\alpha|=\sum_{j=1}^{d}\alpha_{j}=1$, where $D^{\alpha}$ is defined in Section \ref{subSecNotation}. To show this, consider 
\begin{eqnarray*}
\frac{\partial}{\partial y_{j}}f_{\xi}&=&\sum_{s=1}^{S}\pi_{s}\frac{\partial}{\partial y_{j}} \phi_{s} =	(-1)\underline{q}^{-d}(2\pi)^{-d/2}\sum_{s=1}^{S}\pi_{s}\exp\Bigl\{-\frac{1}{2\underline{q}}\sum_{j=1}^{d}(y_{j}-\mu_{s,j})^{2}\Bigr\}(y_{j}-\mu_{s,j})\\
																			&=:&(-1)\underline{q}^{-d}(2\pi)^{-d/2}\sum_{s=1}^{S}\Delta_{s,j}.
\end{eqnarray*}
Since the tails of $\Delta_{s,j}$ are subexponential in $y$ for any $s\in\{1,\ldots, S\}$ and for any $j\in\{1,\ldots,d\}$, the previous display is in $\mathbb{L}_{1}(\mathbb{R},\text{Leb})$, therefore so too is $Df_{\xi}$.

To verify the Lipschitz requirement of Condition \ref{conditionMixtures}, it suffices by the discussion following Condition \ref{conditionMixtures} to show that $A_{f_{\xi}}\in\mathbb{L}_{2}(\mathbb{R}^{d},\text{Leb})$, where $A_{f_{\xi}}$ is defined in equation \eqref{suffLipschitz}. Let $\xi_{\ell}=\pi_{s}$ for $\ell=s<S$ and $\xi_{\ell}=\mu_{s}$ for $\ell=S+s$, $s<S$ arbitrary. We show that for any $j\in\{1,\ldots 2S\}$, $\Bigl\|\sup_{\xi\in\Xi}\frac{\partial}{\partial \xi_{j}}f_{\xi}\Bigr\|_{\ell_{\infty}} \in \mathbb{L}_{2}(\mathbb{R}^{d},\text{Leb})$. Let $j\leq S$, then
\[
\Bigl\|\sup_{\xi\in\Xi}\frac{\partial}{\partial \xi_{j}}f_{\xi}(y)\Bigr\|_{\ell_{\infty}}=\sup_{\pi,\mu\in\Xi}\bigl|\frac{\partial}{\partial \pi_{s}}f_{\pi,\mu}(y)\bigr|=\sup_{\mu_{s}\in\mathcal{M}}\phi(y,\mu_{s},\underline{q}I_{d})
\]
hence $\Bigl\|\sup_{\xi\in\Xi}\bigl|(\partial/\partial \xi_{j})f_{\xi}\Bigr\|_{\ell_{\infty}}\in\mathbb{L}_{2}(\mathbb{R}^{d},\text{Leb})$ for any $j\leq S$. Let $j>S$, then
\begin{eqnarray*}
\Bigl\|\sup_{\xi\in\Xi}\frac{\partial}{\partial \xi_{j}}f_{\xi}(y)\Bigr\|_{\ell_{\infty}} &=& \Bigl\|\sup_{\pi,\mu\in\Xi}\frac{\partial}{\partial \mu_{s}}f_{\pi,\mu}(y)\Bigl\|_{\ell_{\infty}}\\
&=&\Bigl\|\sup_{\pi,\mu\in\Xi}\pi_{s}2\underline{q}^{-(d/2 + 1)}(2\pi)^{-d/2}\exp\{-(2\underline{q})^{-1}w_{s}^{T}w_{s}\}\mu_{s}\Bigr\|_{\ell_{\infty}}\\
																			& = & \Bigl\|\sup_{\pi,\mu\in\Xi}\pi_{s}2\underline{q}^{-1}\phi(y,\mu_{s},\underline{q}I_{d})\mu_{s}\Bigr\|_{\ell_{\infty}},
\end{eqnarray*}
where $w_{s}=(y-\mu_{s})$. Since $\underline{q}>0$ and $M<\infty$, $\|\sup_{\pi,\mu\in\Xi}\pi_{s}2\underline{q}^{-1}\phi(\cdot,\mu_{s},\underline{q}I_{d})\mu_{s}\|_{\ell_{\infty}} \in \mathbb{L}_{2}(\mathbb{R}^{d},\text{Leb})$, hence the Lipschitz requirement is fulfilled. Finally,
\begin{eqnarray*}
& & \sup_{\xi,\xi'\in\Xi: \; \|\xi-\xi'\|_{\ell_{1}}<\delta}\Bigl|\frac{1}{n}\sum_{i=1}^{n}\bigl[f_{0}(Y_{i})(f_{\xi'}(Y_{i})-f_{\xi}(Y_{i})\bigr)\bigr]-\mathbb{E}\bigl[f_{0}(Y_{i})\bigl(f_{\xi'}(Y_{i})-f_{\xi}(Y_{i})\bigr)\bigr]\Bigr| \\
&=& \sup_{\xi,\xi'\in\Xi: \; \|\xi-\xi'\|_{\ell_{1}}<\delta}\Bigl|\bigl(\frac{1}{n}\sum_{i=1}^{n} - \mathbb{E}\bigr) f_{0}(Y_{i}) \left(\nabla_{\xi}^{T}f_{\bar{\xi}}(Y_{i})(\xi-\xi')\right)\Bigr| \quad \bar{\xi}\in\text{\emph{conv}}\{\xi,\xi'\} \\
&\leq & \delta \bigl\|\sup_{y\in\RR^{d}}\sup_{\xi\in\Xi}\nabla_{\xi}f_{\xi}(y)\bigr\|_{\ell_{\infty}}\Bigl|\frac{1}{n}\sum_{i=1}^{n}f_{0}(Y_{i}) - \mathbb{E}f_{0}(Y_{i})\Bigr|
\end{eqnarray*}
by H\"older's inequality. From the previous calculations we see that all entries of $\bigl\|\sup_{\xi\in\Xi}\sup_{y\in\mathbb{R}^{d}}\nabla_{\xi}f_{\xi}(y)\bigr\|_{\ell_{\infty}}$ are bounded. Now $\mathbb{E}f_{0}(Y_{i})=\int f_{0}(y)P(dy)=\int f_{0}^{2}(y)dy <\infty$ because $f_{0}\in \mathbb{L}_{2}(\mathbb{R}^{d},\text{Leb})$ by Condition \ref{conditionTrueDensity}. The latter implies that $\bigl[\mathbb{E}f_{0}(Y_{i})\bigr]^{2}<\infty$, which in turn implies that $\text{Var}(f_{0}(Y_{i}))=\mathbb{E}\bigl[f_{0}^{2}(Y_{i})\bigr] - \bigl[\mathbb{E}f_{0}(Y_{i})\bigr]^{2} < \infty$ because $\mathbb{E}\bigl[f_{0}^{2}(Y_{i})\bigr] = \int f_{0}^{3}(y)dy <\infty$, again by Condition \ref{conditionTrueDensity}. The $\bigl(f_{0}(Y_{i})\bigr)_{i=1}^{n}$ are clearly i.i.d., hence invoking the central limit theorem, we have $\bigl|\frac{1}{n}\sum_{i=1}^{n}f_{0}(Y_{i}) - \mathbb{E}f_{0}(Y_{i})\bigr|=O_{p}(n^{-1/2})$, which a fortiori is $O_{p}(v_{n})$.
\end{proof}

\begin{proof} \text{[}Theorem \ref{thmRate}\text{]}. Write $P_{n}Q_{\widehat{f},\xi}:=\widehat{\mathbb{M}}_{n}(\xi)$ and $PQ_{f_{0},\xi}:=\mathbb{M}_{0}(\xi)$. With this notation, $Q_{g,\xi}$ is $(g-f(\xi))^{2}$ as in Lemmata \ref{lemmaVdVWEq3} and \ref{lemmaVdVWThm1}. We have the decomposition
\begin{equation}\label{eqDecomposition}
\sqrt{n}\bigl(P_{n}Q_{\widehat{f},\xi} - PQ_{f_{0},\xi}\bigr)=\mathbb{G}_{n}\bigl(Q_{\widehat{f},\xi}-Q_{f_{0},\xi}\bigr) + \mathbb{G}_{n}Q_{f_{0},\xi} + \sqrt{n}P\bigl(Q_{\widehat{f},\xi}-Q_{f_{0},\xi}\bigr),
\end{equation}
where $\mathbb{G}_{n}Q=\sqrt{n}(P_{n}-P)Q$ is the empirical process at $Q$. Noting that $PQ_{g,\xi}=\Phi_{P}(g-f_{\xi})$ and $\mathbb{G}_{n}Q_{f_{0},\xi}=\sqrt{n}(\Phi_{P_{n}}(f_{0}-f_{\xi})-\Phi_{P}(f_{0}-f_{\xi}))$, Lemmata \ref{objFunctionConv1} and \ref{lemmaObjFuncConv2} provide the required control over the third term and the second term respectively. The required control over the first term comes from an application of Lemma \ref{lemmaVdVWEq3} followed by an application of \ref{lemmaVdVWThm1}, noting that, since $\Xi$ is a finite dimensional parameter space, the Donsker condition on $\mathcal{H}_{0}$ guarantees that $\{Q_{f,\xi}:\; \xi\in\Xi, f\in\mathcal{H}_{0}\}$ is $P$-Donsker as well. 
\end{proof}

\begin{proof} \text{[}Theorem \ref{thmConsistency} \text{]}
Let $\widehat{\mathbb{M}}_{n}=\int(\widehat{f}^{P}-f)^{2}dP_{n}$ and $\mathbb{M}_{0}=\int (f_{0}-f)^{2}dP$ be stochastic processes indexed by $\Xi$. By Theorem \ref{thmRate}, $\sup_{\xi\in\Xi}|\widehat{\mathbb{M}}_{n}(\xi) - \mathbb{M}_{0}(\xi)|\longrightarrow_{p} 0$. By the unique minimiser assumption, and the assumption that $\xi^{*}_{0}$ belongs to the interior of $\Xi$, there exists a $\xi^{*}_{0}$ such that $\mathbb{M}_{0}(\xi^{*}_{0})<\inf_{\xi \notin G}\mathbb{M}_{0}(\xi)$ for every open set $G$ that contains $\xi^{*}_{0}$. Since, by the statement of the theorem, there exists a sequence $\widehat{\xi}_{n}^{*}$ such that $\widehat{\mathbb{M}}_{n}(\xi_{n}^{*})\leq \inf \widehat{\mathbb{M}}_{n}(\xi)+o_{p}(1)$, $\widehat{\xi}^{*}_{n}\longrightarrow \xi_{0}^{*}$ in outer probability by Corollary 3.2.3 of \citet{vdVW1996}.
\end{proof}

\begin{proof} \text{[}Proposition \ref{propositionProjectEmpirical}\text{]} \; Since $\pi_{0}^{*}=(\pi_{0,1}^{*},\ldots, \pi_{0,S}^{*})$ are known, the projection $f_{n,g}^{*}=f(\mu_{n,g,1}^{*},\ldots,\mu_{n,g,S}^{*}, \pi_{0}^{*})$ of $g$ on $\bar{\mathcal{F}}_{d}^{S}$ satisfies
\begin{equation}\label{FOC}
0=\frac{\partial L_{n} (f(\pi_{0},\mu);g)}{\partial \mu_{s}} (\mu_{n,g}^{*},\pi_{0}^{*}) \quad \forall s\in\{1,\ldots,S\}, \;\;\; \mu_{n,g}^{*}=(\mu_{n,g,1}^{*},\ldots,\mu_{n,g,S}^{*}),
\end{equation}
i.e.
\begin{equation}\label{FOC2}
0= 2\sum_{i=1}^{n}(g(Y_{i})-\sum_{s=1}^{S}\pi_{0,s}^{*}\phi(Y_{i};\mu_{n,g,s}^{*},\underline{q}I_{d}))V_{s}(Y_{i};\pi_{0,s}^{*},\mu_{n,g,s}^{*})\quad \forall s\in\{1,\ldots,S\}
\end{equation}
where $\displaystyle{V_{s}(Y_{i};\pi_{0,s}^{*},\mu_{n,g,s}^{*})=\pi_{0,s}^{*}\phi(Y_{i};\mu_{n,g,s}^{*},\underline{q}I_{d})\underline{q}^{-1}I_{d}(Y_{i}-\mu_{n,g,s}^{*})}$.

From \eqref{FOC2} we see that, at a minimising $\{\mu_{n,g,1}^{*},\ldots, \mu_{n,g,S}^{*}\}$, $\{Y_{1},\ldots,Y_{n}\}\subset \mathcal{I}^{c}\cup \mathcal{I}$, where $\mathcal{I}=\mathcal{J}\cup\mathcal{E}$,
\begin{equation}\label{setIc}
\mathcal{I}^{c}:=\left\{Y_{i}:\; \sum_{s=1}^{S}\pi_{0,s}^{*}\phi(Y_{i};\mu_{n,g,s}^{*},\underline{q}I_{d})=w_{g,i}\delta(Y_{i})\right\},
\end{equation}
\begin{equation}\label{setJ}
\mathcal{J}:=\left\{Y_{i}:\; V_{s}(Y_{i},\pi_{0,s}^{*},\mu_{n,g,s}^{*})=0 \; \forall s\in\{1,\ldots,S\}\right\},
\end{equation}
and
\begin{equation}\label{setE}
\mathcal{E}:= \left\{Y_{i}:\; \sum_{j: Y_{j}\in\mathcal{E}}\left(g(Y_{i})-\sum_{s=1}^{S}\pi_{0,s}^{*}\phi(Y_{i};\mu_{n,g,s}^{*},\underline{q}I_{d})\right)=0\right\}.
\end{equation}
Notice that $Y_{i}\in\mathcal{J}$ if and only if $\|Y_{i}-\mu_{n,g,s}^{*}\|_{\ell_{2}}=\infty$ for all $s\in \{1,\ldots,S\}$, thus if there exists a $Y_{i}$ in $\mathcal{J}$, then $\mathcal{J}=\{Y_{1},\ldots,Y_{n}\}$; this is scenario (ii). We have $\mathcal{J}=\{Y_{1},\ldots,Y_{n}\}$ or $\mathcal{J}=\emptyset$.

Consider $\mathcal{J}=\emptyset$, then $\mathcal{J}^{c}=\mathcal{I}^{c}\cup \mathcal{E}$. Since $S<n$, there exists a set $\mathcal{S}\neq \emptyset$ such that $Y_{i}\notin\mathcal{I}^{c}$ for all $Y_{i}\in\mathcal{S}$. Since $\mathcal{J}=\emptyset$, $\mathcal{S}=\mathcal{E}$ and therefore $\mathcal{E}\neq \emptyset$.

Introduce the sets 
\[
\mathcal{E}_{g}^{-}:=\left\{Y_{i}: \; \sum_{s=1}^{S}\pi_{0,s}^{*}\phi(Y_{i};\mu_{n,g,s}^{*},\underline{q}I_{d}) < w_{g,i}\right\}\; \text{ and } \;
\mathcal{E}_{g}^{+}:=\left\{Y_{i}: \; \sum_{s=1}^{S}\pi_{0,s}^{*}\phi(Y_{i};\mu_{n,g,s}^{*},\underline{q}I_{d}) > w_{g,i}\right\},
\] 
and notice that at a minimum $\mathcal{E}=\mathcal{E}_{g}^{+}\cup \mathcal{E}_{g}^{-}$, with $\mathcal{E}_{g}^{-}\neq \emptyset$ and $\mathcal{E}_{g}^{+}\neq \emptyset$ by the definition of $\mathcal{E}$ and the fact that $\mathcal{E}\neq \emptyset$. Re-writing the clause in $\mathcal{E}$ more explicitly in terms of $\mathcal{E}_{g}^{-}$ and $\mathcal{E}_{g}^{+}$, we have
\begin{equation}\label{eqEquality}
\begin{aligned}
& \; |\mathcal{E}_{g}^{-}|\left(\frac{1}{|\mathcal{E}_{g}^{-}|}\sum_{i:Y_{i}\in\mathcal{E}^{-}_{g}}\left(w_{g,i}-\sum_{s=1}^{S}\pi_{0,s}^{*}\phi(Y_{i};\mu_{n,g,s}^{*},\underline{q}I_{d})\right)\right) \\
 = & \; -|\mathcal{E}_{g}^{+}|\left(\frac{1}{|\mathcal{E}_{g}^{+}|}\sum_{i:Y_{i}\in\mathcal{E}^{+}_{g}}\left(w_{g,i}-\sum_{s=1}^{S}\pi_{0,s}^{*}\phi(Y_{i};\mu_{n,g,s}^{*},\underline{q}I_{d})\right)\right).
\end{aligned}
\end{equation}
By log concavity of $f_{0}$, we can define a nested sequence of closed convex sets $\mathcal{R}_{1}^{f_{0}}\subset \mathcal{R}_{2}^{f_{0}} \subset \cdots$ such that $\mathcal{R}_{\ell}^{f_{0}}:=\left\{y\in\mathbb{R}^{d}: f_{0}(y)\geq r_{\ell}\right\} \quad r_{\ell}>r_{k} \;\; \forall k>\ell$, with $\limsup_{n\longrightarrow \infty} \mathcal{R}_{k(n)}^{f_{0}}=E^{f_{0}}:=\left\{y\in\mathbb{R}^{p}: f_{0}> 0\right\}=\text{supp}(f_{0})$. We can similarly define $\mathcal{R}_{n,\ell}^{f_{0}}:=\left\{y\in\{Y_{1},\ldots,Y_{n}\}: f_{0}(y)\geq r_{\ell}\right\} \quad r_{\ell}>r_{k} \;\; \forall k>\ell$, hence $\mathcal{R}_{n,k(n)}^{f_{0}}$ is the empirical analogue of $\mathcal{R}_{k(n)}^{f_{0}}$ and $E_{n}^{f_{0}}=\limsup_{n\longrightarrow \infty} \mathcal{R}_{n,k(n)}^{f_{0}}$. Since $f_{0}\in\mathcal{F}_{d}^{LC}$, there exists an $n'$ such that for all $n>n'$, $|\mathcal{R}_{n,k(n)}^{f_{0}}|>|E_{n}^{f_{0}}\backslash \mathcal{R}_{n,k(n)}^{f_{0}}|$ with probability 1. Suppose for a contradiction that $\mu_{n,g}^{*}:=\{\mu_{n,g,1}^{*},\ldots,\mu_{n,g,S}^{*}\}\subset \mathcal{R}_{k(n)}^{f_{0}}$, then $|\mathcal{R}_{n,k(n)}^{f_{0}} \cap \mathcal{E}_{g}^{+}|=|\mathcal{R}_{n,k(n)}^{f_{0}}|$ with probability 1. Since
$|\mathcal{R}_{n,k(n)}^{f_{0}}|+|E_{n}^{f_{0}}\backslash \mathcal{R}_{n,k(n)}^{f_{0}}|=n$, we conclude that $|\mathcal{E}_{g}^{+}|>|\mathcal{E}_{g}^{-}|$ for all $n>n'$ with probability 1. Moreover, since $w_{g,i}=n^{-1}$,
\[
\frac{1}{|\mathcal{E}_{g}^{-}|}\sum_{i:Y_{i}\in\mathcal{E}^{-}_{g}}\left(w_{g,i}-\sum_{s=1}^{S}\pi_{0,s}^{*}\phi(Y_{i};\mu_{n,g,s}^{*},\underline{q}I_{d})\right) < \frac{|\mathcal{E}_{g}^{-}|}{|\mathcal{E}_{g}^{-}|}\left(\frac{1}{n}\right)
\]
whilst
\[
-\frac{1}{|\mathcal{E}_{g}^{+}|}\sum_{i:Y_{i}\in\mathcal{E}^{+}_{g}}\left(w_{g,i}-\sum_{s=1}^{S}\pi_{0,s}^{*}\phi(Y_{i};\mu_{n,g,s}^{*},\underline{q}I_{d})\right) \nearrow C>0 \; \text{ as } n\longrightarrow \infty.
\]
Therefore there exists an $n_{0}\geq n'$ such that for all $n>n_{0}$
\begin{eqnarray*}
& & \; -|\mathcal{E}_{g}^{+}|\left(\frac{1}{|\mathcal{E}_{g}^{+}|}\sum_{i:Y_{i}\in\mathcal{E}^{+}_{g}}\left(w_{g,i}-\sum_{s=1}^{S}\pi_{0,s}^{*}\phi(Y_{i};\mu_{n,g,s}^{*},\underline{q}I_{d})\right)\right)\\
& > & |\mathcal{E}_{g}^{-}|\left(\frac{1}{|\mathcal{E}_{g}^{-}|}\sum_{i:Y_{i}\in\mathcal{E}^{-}_{g}}\left(w_{g,i}-\sum_{s=1}^{S}\pi_{0,s}^{*}\phi(Y_{i};\mu_{n,g,s}^{*},\underline{q}I_{d})\right)\right).
\end{eqnarray*}
This proves that there exists at least one $Y_{i}\in\mathcal{S}$ such that $Y_{i}\notin \mathcal{E}$, which is a contradiction to $\mathcal{S}=\mathcal{E}$. We conclude that a minimum is unobtainable with $\mathcal{J}=\emptyset$ and $\mu_{n,g}^{*}\subset \mathcal{R}_{n,k(n)}^{f_{0}}$; if $\mathcal{J}=\emptyset$ at a minimum, then $\mu_{n,g}^{*}\subset \mathbb{R}^{p}\backslash\mathcal{R}_{n,k(n)}^{f_{0}}$.
\end{proof}

\begin{proof} \text{[}Proposition \ref{propositionProjectLogConcave}\text{]} \; The sets $\mathcal{J}$, $\mathcal{I}^{c}$ and $\mathcal{E}$ are those of equations \eqref{setJ}, \eqref{setIc} and \eqref{setE} respectively. As in the proof of Proposition \ref{propositionProjectLogConcave}, either $\mathcal{J}=\{Y_{1},\ldots,Y_{n}\}$ or $\mathcal{J}=\emptyset$. If $\mathcal{J}=\{Y_{1},\ldots,Y_{n}\}$, then the minimum is the one with $\|\mu_{n,g,s}\|_{\ell_{2}}=\infty$ for all $s\in\{1,\ldots,S\}$.

Consider minima achieved with $\mathcal{J}=\emptyset$. Then for any $Y_{i}\in\{Y_{1},\ldots,Y_{n}\}$, $Y_{i}\in\mathcal{J}^{c}=\mathcal{I}^{c}\cup \mathcal{E}$ at a minimum. Since $f_{0}\notin \bar{\mathcal{F}}_{d}^{S}$ and $\widehat{f}^{P}\notin \bar{\mathcal{F}}_{d}^{S}$, with probability 1 there exists a $\mathcal{S}\neq \emptyset$ such that $Y_{i}\notin \mathcal{I}^{c}$ for all $Y_{i}\in\mathcal{S}$. Hence $\mathcal{S}\neq \emptyset$ with probability 1, and at a minimum $\mathcal{S}=\mathcal{E}$.

As in the proof of Proposition \ref{propositionProjectEmpirical}, introduce the sets $\mathcal{E}_{g}^{-}$ and $\mathcal{E}_{g}^{+}$ 
and notice that at a minimum $\mathcal{E}=\mathcal{E}_{g}^{+}\cup \mathcal{E}_{g}^{-}$, with $\mathcal{E}_{g}^{-}\neq \emptyset$ and $\mathcal{E}_{g}^{+}\neq \emptyset$ by the definition of $\mathcal{E}$ and the fact that $\mathcal{E}\neq \emptyset$.

Clearly, when $\mu_{n,g}^{*}$ violates condition (i),
\begin{eqnarray*}
& & \; 
\sum_{i:Y_{i}\in\mathcal{E}^{-}_{g}}\left(w_{g,i}-\sum_{s=1}^{S}\pi_{0,s}^{*}\phi(Y_{i};\mu_{n,g,s}^{*},\underline{q}I_{d})\right)\\
& > & -\sum_{i:Y_{i}\in\mathcal{E}^{+}_{g}}\left(w_{g,i}-\sum_{s=1}^{S}\pi_{0,s}^{*}\phi(Y_{i};\mu_{n,g,s}^{*},\underline{q}I_{d})\right)
\end{eqnarray*}
with probability 1 for sufficiently large $n$, thereby contradicting the requirement for a minimum that $Y_{i}\in\mathcal{E}$ whenever $Y_{i}\notin \mathcal{I}^{c}$.

\end{proof}

\noindent \textbf{Acknowledgements:} The first author is grateful to Princeton University for generous hospitality from March-April 2013 and to the EPSRC for financial support under grants EP/D063485/1 and EP/K004581/1. We thank Matthew Arnold, John Aston, Alicia Nieto-Reyes, Victor Panaretos, Jonty Rougier and Richard Samworth for constructive conversations related to aspects of this work. 

\bibliographystyle{ims}
\bibliography{mvMixBib}
\end{document}